\newcommand\z{\mspace{2mu}}
\newcommand\satvar{(3,3)}
\newcommand\ope[1]{\langle#1\rangle}
\def\mathbi#1{\textbf{\em #1}}
\spnewtheorem{construction}{Construction}{\bfseries}{\itshape}
\spnewtheorem{observation}{Observation}{\bfseries}{\itshape}
\newcommand{\cogsd}{[1{\cdot}2]}
\newcommand{\cogdup}{[2{\cdot}2]}
\newcommand{\figcbf}[1]{\textbf{#1}}
\colorlet{path}{brown!90!blue}
\colorlet{mylightblue}{cyan!30}
\colorlet{mylightgray}{lightgray!50}
\colorlet{mylightpurple}{blue!20}
\colorlet{darkred}{red!70!black}
\colorlet{darkblue}{blue!70!black}
\colorlet{intedgecolor}{gray!70} %orange!40!darkgray}%{gray!65}%{red!70!black}
\colorlet{outedgecolor}{black}%{blue!60!black}
\newcommand{\bigbigvcl}[3]{
\node[shape=circle,draw=black,thick, fill=white, scale=1.8] (#2) at #1 {};%{#3};
\node at #1 {\LARGE #3};
}
\newcommand{\bigvcl}[3]{
\node[shape=circle,draw=black,thick, fill=white, scale=1.6] (#2) at #1 {};%{#3};
\node at #1 {\footnotesize #3};	
}
\newcommand{\bigtvcl}[4]{
\node[shape=circle,draw=black,thick, fill=#4, scale=1.6] (#2) at #1 {};%{#3};
\node at #1 {\footnotesize #3};	
}
\newcommand{\vcl}[3]{
\node[shape=circle,draw=black,thick, fill=white, scale=0.8] (#2) at #1 {};%{#3};
\node at #1 {\footnotesize #3};	
}
\newcommand{\tvcl}[4]{
\node[shape=circle,draw=black,thick, fill=#4, scale=0.8] (#2) at #1 {};%{#3};
%\node[shape=circle,draw=black,thick, fill=white, scale=0.9] at #1 {};%{#3};
\node at #1 {\scriptsize #3};	
}
\newcommand{\ded}[3]{\path [dashed, thick, draw=#3] (#1) edge (#2);}
\newcommand{\ed}[3]{\path [ultra thick, draw=#3] (#1) edge (#2);}
\newcommand{\dtedthin}[3]{\path [very thick, dotted, draw=#3] (#1) edge (#2);}
\newcommand{\sqedthin}[3]{\path [very thick, dotted, draw=#3] (#1) edge (#2);}
\newcommand{\dted}[3]{\path [dotted, ultra thick, draw=#3] (#1) edge (#2);}
\newcommand{\extpath}[3]{\draw [dotted, line width=3pt, path] (#1) -- (#2) node[draw=path,thick,solid,fill=white,font=\Large,midway] {#3};
%\node at (8,0) {\LARGE $\mathsf{Q}^{\z\!Y}_6$};
}
\newcommand{\dded}[3]{\path [dashed, ultra thick, draw=#3] (#1) edge (#2);}
\begin{document}

\title{Closing the complexity gap of the double distance problem%
\thanks{This work was partially supported by CAPES, CNPq and FAPERJ.}
}

\author{
Lu\'is~Cunha\inst{1}
\and 
Thiago~Lopes\inst{1}
\and
U\'everton Souza\inst{1,2}
\and
Leonard Bohnenk\"amper\inst{3}
\and
Mar\'ilia D. V. Braga\inst{3}
\and 
Jens~Stoye\inst{3}
}
\authorrunning{Cunha et al.}
\institute{
Universidade Federal Fluminense, Niter\'oi, Brazil,
\email{lfignacio@ic.uff.br}, \email{thiago\_nascimento@id.uff.br} 
\and
IMPA - Instituto de Matem\'atica Pura e Aplicada, Rio de Janeiro, Brazil, \email{ueverton.souza@impatech.org.br}
\and
Faculty of Technology and Center for Biotechnology (CeBiTec), Bielefeld University, Germany, \email{leonard.bohnenkaemper@uni-bielefeld.de}, \email{mbraga@cebitec.uni-bielefeld.de}, \email{jens.stoye@uni-bielefeld.de}
}

\maketitle

\begin{abstract}
Genome rearrangement has been an active area of research in computational comparative genomics for the last three decades.
While initially mostly an interesting algorithmic endeavor, now the practical application of rearrangement distance methods and more advanced phylogenetic tasks is becoming common practice, given the availability of many completely sequenced genomes.

Several genome rearrangement models have been developed over time, sometimes with surprising computational properties.
A prominent example is the fact that computing the reversal distance of two \emph{signed} permutations is possible in linear time, while for two \emph{unsigned} permutations it is NP-hard.
Therefore one has to always be careful about the precise problem formulation and complexity analysis of rearrangement problems in order not to be fooled.

The double distance is the minimum number of genomic rearrangements between a singular and a duplicated genome that -- in addition to rearrangements -- are separated by a whole genome duplication.
At the same time it allows to assign the genes of the duplicated genome to the two paralogous chromosome copies that existed right after the duplication event.
Computing the double distance is another example of a tricky hardness landscape:
If the distance measure underlying the double distance is the simple breakpoint distance, the problem can be solved in linear time, while with the more elaborate DCJ distance it is NP-hard.
Indeed, there is a whole family of distance measures, parameterized by an even number $k$, between the breakpoint distance ($k=2$) at the one end and the DCJ distance ($k=\infty$) at the other end.
Only little was known about the hardness border that lies somewhere on the way between these two extremes.
Precisely, beneath the two border cases the (linear) problem complexity was known only for $k=4$ and $k=6$.
In this paper we close the gap, giving a full picture of the hardness landscape when computing the double distance.

%While the study of the double distance complexity has purely theoretical purpose, our results may shed some light on the complexity of the median problem, that asks for an ancestor genome that minimizes the sum of its
%distances to three given genomes, under the family of $\sigma_k$ distances. 

\end{abstract}

\keywords{
Comparative genomics \and genome rearrangement \and breakpoint distance \and Double Cut and Join (DCJ) distance \and double distance}

\setcounter{footnote}{0}

\section{Introduction}
In comparative genomics,
different measures are used for computing a distance between two given genomes.
Usually a high-level view is adopted, in which each chromosome of a genome is represented by a sequence of oriented genes~\cite{SAN-1992}.
Homologous genes are considered to be equivalent and represented by the same identifier. 
When two genomes over the same set of genes have exactly one copy of each gene, they are said to form a \emph{canonical pair}. 

In the present work we focus on a family of distances that resides between two well known measures.
At one side we have the \emph{breakpoint distance}, which quantifies the distinct adjacencies between the two genomes, an adjacency being the oriented neighborhood between two genes~\cite{tannier2009multichromosomal}.
At the other side we have the \emph{DCJ distance}, which quantifies the minimum number of double cut and join (DCJ) operations required to transform one genome into the other one, each DCJ representing a large-scale genome rearrangement that could be, for example, an inversion, a translocation, a fusion or a fission~\cite{yancopoulos2005efficient}.

Both breakpoint and DCJ distances of canonical genomes can be directly derived from the \emph{breakpoint graph}, a graph that represents the relation between two genomes~\cite{bafna1993breakpointGraph}.
For canonical genomes the breakpoint graph is a simple collection of cycles, all of them with an even number of edges, and paths. We call $i$-cycle a cycle with $i$ edges and $j$-path a path with $j$ edges. Furthermore, we denote by $c_i$ the total number of $i$-cycles and by $p_j$ the total number of $j$-paths in the breakpoint graph.
Assuming that both genomes have~$n_\ast$ genes, the corresponding breakpoint distance is equal to $n_\ast - \left(c_2 + \frac{p_0}{2}\right)$ 
%, where $c_2$ is the number of 2-cycles and $p_0$ is the number of 0-paths
\cite{tannier2009multichromosomal}.
Similarly, when the considered rearrangements are those modeled by the DCJ operation, the DCJ distance is $ n_\ast - \left(c + \frac{p_{\textsc{e}}}{2}\right)$, where $c$ is the total number of cycles and $p_{\textsc{e}}$ is the number of paths with an even number of edges~\cite{bergeron2006unifying}.

Independently of the underlying measure, the distance formulation is a basic unit for several other combinatorial problems related to genome evolution and ancestral reconstruction. The \emph{median} problem, for example, asks for an ancestor genome that minimizes the sum of its
distances to three given genomes. The \emph{double distance} problem, which is the focus of the present study, computes the distance between two genomes that -- in addition to rearrangements -- are separated by a \emph{whole genome duplication}.
%Independently of the underlying measure, the distance formulation is a basic unit for several other combinatorial problems related to genome evolution and ancestral reconstruction, such as median and double distance.
Interestingly, both median and double distance can be solved in polynomial time under the breakpoint distance, but become NP-hard under the DCJ distance~\cite{tannier2009multichromosomal}.

In a previous study, focusing on the double distance, some of the authors of the present work started to explore the space between these two extremes by considering the family of \emph{$\sigma_k$ distances}, defined to be $\textup{d}_k = n_\ast - \sigma_k$, where, for $k=2,4,\ldots,\infty$, $\sigma_k = c_2 + c_4 + \cdots + c_k + \frac{p_0 + p_2 + \cdots + p_{k\!-\!2}}{2}$.
Note that $\textup{d}_2$ is equal to the breakpoint distance while $\textup{d}_\infty$ is equal to the DCJ distance. 
Polynomial time algorithms were then provided for the double distance under $\textup{d}_4$ and under $\textup{d}_6$~\cite{braga2024doubleDistance}.

In this paper we prove that, for any finite $k\geq 8$, the double distance under $\textup{d}_k$ is NP-complete, closing the complexity gap of double distance under the family of $\sigma_k$ distances.
Our proof is a polynomial time reduction from a variant of 3-SAT with the restriction that each variable appears at most three times.
This variant is reduced to the $\sigma_8$ disambiguation problem, that is equivalent to the $\sigma_8$ double distance. 
We then explain how this reduction can be generalized to any~$k \geq 8$.

In addition to being an interesting theoretical result, the landscape of the double distance complexity may shed some light on the complexity of the median problem
%, that asks for an ancestor genome minimizing the sum of its distances to three given genomes, 
under the family of $\sigma_k$ distances. 
% \footnote{
Here an important observation is that, although $\sigma_2$ median
can be computed in polynomial time \cite{tannier2009multichromosomal}, it gives results that tend to be very close or equivalent to
one of the input genomes \cite{haghighi2012medians}, which might be not so realistic. In contrast, for $k \geq 4$, the $\sigma_k$ median may give results more distributed with respect to the input genomes. However, after more than five years of efforts by several research groups, the complexity of $\sigma_k$ median remains undetermined for any finite $k\geq4$~\cite{silva2023algorithms}.
% }.

\section{Background}

A \emph{chromosome} is a linear or circular DNA molecule and a \emph{genome} is a multiset of chromosomes. 
We represent a chromosome by its sequence of genes, where each \emph{gene} is an oriented DNA fragment.
Homologous genes are represented by the same identifier, while distinct genes are represented by distinct identifiers.
Homologous genes occurring in the same genome can also be called \emph{paralogous}.
A gene~$\mathtt{X}$ is represented by the symbol $\mathtt{X}$ itself if it is read in forward orientation or by the symbol $\overleftarrow{\mathtt{X}}$ if it is read in reverse orientation. 
For example, the sequences $[\mathtt{1}\,\overleftarrow{\mathtt{3}}\,\mathtt{2}\,\mathtt{3}]$ and $(4)$  represent, respectively, a linear (flanked by square brackets) and a circular chromosome (flanked by parentheses), the first composed of four genes (two of them are paralogous)
and the second composed of a single gene.
Note that if a sequence $s$ represents a chromosome $S$, then $S$ can be equally represented by the reverse complement of $s$, denoted by~$\overleftarrow{s}$, obtained by reversing the order and the orientation of the genes in~$s$.
Moreover, if $S$ is a circular chromosome, then it can be equally represented by any circular rotation of $s$ or of~$\overleftarrow{s}$.
We denote by~$\mathcal{I}(\mathbb{G})$ the set of distinct gene identifiers that occur in all chromosomes of genome $\mathbb{G}$.

We can also represent a gene $\mathtt{X}$ by referring to its \emph{extremities}~$\mathtt{X}^h$ (head) and~$\mathtt{X}^t$ (tail).
The \emph{adjacencies} in a chromosome are the neighboring extremities of subsequent genes.
The remaining extremities, that are at the ends of linear chromosomes, are \emph{telomeres}.
In linear chromosome $[\mathtt{1}\z\overleftarrow{\mathtt{3}}\z\mathtt{2}]$, the adjacencies are $\{\mathtt{1}^h\mathtt{3}^h, \mathtt{3}^t\mathtt{2}^t\}$ and the telomeres are $\{\mathtt{1}^t,\mathtt{2}^h\}$. 
Note that an adjacency has no orientation, that is, an adjacency between extremities $\mathtt{1}^h$ and $\mathtt{3}^h$ can be equally represented by $\mathtt{1}^h\mathtt{3}^h$ and by $\mathtt{3}^h\mathtt{1}^h$.
An adjacency can occur between extremities of paralogous genes.
In the particular case of a single-gene circular chromosome, e.g.~$(\mathtt{4})$, an adjacency exceptionally occurs between the extremities of the same gene (here $\mathtt{4}^h\mathtt{4}^t$).
We denote by~$\mathcal{A}(\mathbb{G})$ and 
by~$\mathcal{T}(\mathbb{G})$, respectively, the multisets of adjacencies and of telomeres that occur in all chromosomes of genome~$\mathbb{G}$.

Concerning the layout of its chromosomes, a genome can be \emph{circular}, exclusively composed of circular chromosomes, \emph{linear}, exclusively composed of linear chromosomes, or \emph{mixed}, composed of circular and linear chromosomes.

Concerning its gene content, a genome $\mathbb{S}$ is \emph{singular} if each of its genes has no paralogous counterpart in $\mathbb{S}$.
Similarly, a genome $\mathbb{D}$ is \emph{duplicated} if each of its genes has exactly one paralogous counterpart in~$\mathbb{D}$.
In addition to this, if every adjacency occurs twice in a duplicated genome~$\mathbb{D}$, then~$\mathbb{D}$ is a \emph{doubled genome}.
Examples are given in Table~\ref{tab:genome-types}.

\begin{table*}[ht!]
 \caption{\label{tab:genome-types}
   Examples of a singular, a duplicated and two doubled genomes, with their sets of unique gene identifiers and their multisets of adjacencies and of telomeres.
   Note that the doubled genomes $\mathbb{B}_1$ and $\mathbb{B}_2$ have exactly the same multisets of adjacencies and of telomeres.}
 \begin{center}
  \footnotesize
  \renewcommand{\arraystretch}{1.6}
  \begin{tabular}{m{.2\hsize}m{.2\hsize}m{.35\hsize}}
    \hline
    ~~Singular genome\newline
    ~\fbox{\parbox{2.2cm}{\centering each gene\\ is distinct}} & $\mathbb{S}\!=\!\{(\mathtt{1}\z\overleftarrow{\mathtt{3}}\z\mathtt{2})\z(\mathtt{4})[\mathtt{5}\z\overleftarrow{\mathtt{6}}]\}$ & $\begin{cases}\mathcal{F}(\mathbb{S})\!=\!\{\mathtt{1},\mathtt{2},\mathtt{3},\mathtt{4},\mathtt{5},\mathtt{6}\}\\\mathcal{A}(\mathbb{S})\!=\!\{\mathtt{1}^h\mathtt{3}^h, \mathtt{3}^t\mathtt{2}^t, \mathtt{2}^h\mathtt{1}^t, \mathtt{4}^h\mathtt{4}^t, \mathtt{5}^h\mathtt{6}^h\}\\\mathcal{T}(\mathbb{S})\!=\!\{\mathtt{5}^t, \mathtt{6}^t\}\end{cases}$\\[4.2ex]
    \hline
    Duplicated genome\newline 
    ~\fbox{\parbox{2.2cm}{\centering each gene has\\ exactly one\\ paralogous \\ counterpart}} & $\mathbb{D}\!=\!\{(\mathtt{1\,2}\,\overleftarrow{\mathtt{3}}\,\mathtt{1})[\mathtt{3}\,\overleftarrow{\mathtt{2}}]\}$ & $\begin{cases}\mathcal{F}(\mathbb{D})\!=\!\{\mathtt{1},\mathtt{2},\mathtt{3}\}\\
    \mathcal{A}(\mathbb{D})\!=\!\{\mathtt{1}^h\mathtt{2}^t, \mathtt{2}^h\mathtt{3}^h, \mathtt{3}^t\mathtt{1}^t, \mathtt{1}^h\mathtt{1}^t, \mathtt{3}^h\mathtt{2}^h\}\\\mathcal{T}(\mathbb{D})\!=\!\{\mathtt{3}^t, \mathtt{2}^t\}\end{cases}$\\[8ex]
    \hline
    ~Doubled genomes\newline
    ~\fbox{\parbox{2.2cm}{\centering each adjacency\\or telomere\\occurs twice}} & $\mathbb{B}_1\!=\!\{(\mathtt{1}\,\mathtt{2})\,(\mathtt{1}\,\mathtt{2})\,[\mathtt{3}\,\mathtt{4}]\,[\mathtt{3}\,\mathtt{4}]\}$\newline $\mathbb{B}_2\!=\!\{(\mathtt{1}\,\mathtt{2}\,\mathtt{1}\,\mathtt{2})\,[\mathtt{3}\,\mathtt{4}]\,[\mathtt{3}\,\mathtt{4}]\}$ & $\begin{cases}\mathcal{F}(\mathbb{B}_i)\!=\!
    \{\mathtt{1},\mathtt{2},\mathtt{3},\mathtt{4}\}\\
    \mathcal{A}(\mathbb{B}_i)\!=\!\{\mathtt{1}^h\!\mathtt{2}^t\!, \mathtt{2}^h\!\mathtt{1}^t\!, \mathtt{1}^h\!\mathtt{2}^t\!, \mathtt{2}^h\!\mathtt{1}^t\!, \mathtt{3}^h\!\mathtt{4}^t\!, \mathtt{3}^h\!\mathtt{4}^t\}\\\mathcal{T}(\mathbb{B}_i)\!=\!\{\mathtt{3}^t,\mathtt{4}^h,\mathtt{3}^t, \mathtt{4}^h\}\end{cases}$\\[6ex]
    \hline
  \end{tabular}
  \end{center}
\end{table*}

Two genomes $\mathbb{G}_1$ and $\mathbb{G}_2$ are said to be \emph{cognate} when each gene of $\mathbb{G}_1$ has at least one homologous counterpart in~$\mathbb{G}_2$ and \textit{vice versa}.
When two cognate genomes are singular, they are said to form a \emph{canonical pair}.
When a singular genome is a cognate of a duplicated genome, they are said to form a \emph{$\cogsd$-cognate pair}.
Finally, when two cognate genomes are duplicated, they are said to form a \emph{$\cogdup$-cognate pair}.

\subsection{Comparing canonical genomes (with the breakpoint graph)}

The relation between two canonical genomes $\mathbb{S}_1$ and $\mathbb{S}_2$ can be represented by their \emph{breakpoint graph} $BG(\mathbb{S}_1,\mathbb{S}_2) = (V,E)$, that is a multigraph representing adjacencies and telomeres of $\mathbb{S}_1$ and of $\mathbb{S}_2$~\cite{bafna1993breakpointGraph}.
The vertex set $V$ comprises, for each common gene $\mathtt{X}$, one vertex for the extremity $\mathtt{X}^h$ and one vertex for the extremity $\mathtt{X}^t$.
The edge multiset $E$ represents the adjacencies.
For each adjacency in $\mathbb{S}_1$ there exists one $\mathbb{S}_1$-edge in $E$ linking its two extremities, and similar for $\mathbb{S}_2$.

The degree of each vertex can be 0, 1 or 2, therefore each connected \emph{component} is either a cycle or a path. The \emph{length} of a component is given by its number of edges. Note that each component is \emph{alternating}, that is, it switches between $\mathbb{S}_1$- and $\mathbb{S}_2$-edges. As a consequence, all cycles in breakpoint graph must have even length. Furthermore, an even path has one endpoint in $\mathbb{S}_1$ (\emph{$\mathbb{S}_1$-telomere}) and the other endpoint in $\mathbb{S}_2$ (\emph{$\mathbb{S}_2$-telomere}), while an odd path has either both endpoints in $\mathbb{S}_1$ or both endpoints in $\mathbb{S}_2$. Note that an isolated vertex 
%(such as $4^t$ in Fig.~\ref{fig:bg}) 
is a telomere in both genomes. Such a vertex is a 0-path with one endpoint in $\mathbb{S}_1$ and the other endpoint in $\mathbb{S}_2$.

A vertex that is not a telomere in $\mathbb{S}_1$ nor in $\mathbb{S}_2$ is said to be \emph{non-telomeric}. In the breakpoint graph a non-telomeric vertex has degree 2.
We call \emph{$i$-cycle} a cycle of length~$i$ and \emph{$j$-path} a path of length~$j$. We also denote by $c_i$ the number of $i$-cycles, by $p_j$ the number of $j$-paths, by $c$ the total number of cycles and by $p_{\textsc{e}}$ the total number of even paths: $c=c_2+c_4+c_6+\ldots + c_\infty$ and $p_{\textsc{e}}=p_0+p_2+p_4+\ldots + p_\infty$.
Since the number of telomeres in each genome is even (2 telomeres per linear chromosome), the total number of even paths in the breakpoint graph must be even.
An example is given in Fig.\;\ref{fig:bg}.

\begin{figure}[ht]
  \begin{center}
    \colorlet{qlcolor}{white}

\begin{minipage}{3.5cm}
  \begin{center}
    %\textbf{(a)}\\
    
   \scalebox{0.65}{ 
    \begin{tikzpicture}[scale=0.9]
		    
  \bigvcl{(0,0)}{u1}{$\mathtt{1}^h$}
  \bigvcl{(0,1.2)}{v1}{$\mathtt{1}^t$}
  \bigvcl{(1.2,1.2)}{v2}{$\mathtt{2}^h$}
  \bigvcl{(1.2,0)}{u2}{$\mathtt{2}^t$}
					
  \bigvcl{(2.4,0)}{u3}{$\mathtt{3}^h$}
  \bigtvcl{(2.4,1.2)}{v3}{$\mathtt{3}^t$}{mylightblue}
  \bigtvcl{(3.6,1.2)}{v4}{$\mathtt{4}^t$}{mylightpurple}
  \bigtvcl{(3.6,0)}{u4}{$\mathtt{4}^h$}{mylightgray}
   
  \ed{v1}{v2}{cyan}
  \path [ultra thick,black,bend left=50] (v1) edge (v2);
			
  \ed{u1}{u2}{cyan}
  \ed{u2}{v3}{black}
  \path [ultra thick,black,bend right=50] (u1) edge (u3);

  \ed{u3}{u4}{cyan}
  %\ed{v3}{v4}{cyan}
  %\ed{u4}{v4}{black}
            
\end{tikzpicture}
}
\end{center}
\end{minipage}
%\hspace{2.1cm}
  \end{center}
  \caption{\label{fig:bg}
    Breakpoint graph of canonical genomes $\mathbb{S}_1=\{\,(\mathtt{1\,2})\,[\mathtt{3}\,\overleftarrow{\mathtt{4}}]\,\}$ and $\mathbb{S}_2=\{\,(\mathtt{1}\,\overleftarrow{\mathtt{3}}\,\mathtt{2})\,[\mathtt{4}]\,\}$. Edge types are distinguished by colors: $\mathbb{S}_1$-edges are drawn in blue and $\mathbb{S}_2$-edges are drawn in black. Similarly, vertex types are distinguished by colors: an $\mathbb{S}_1$-telomere is marked in blue, an $\mathbb{S}_2$-telomere is marked in gray, a telomere in both $\mathbb{S}_1$ and $\mathbb{S}_2$ is marked in purple and non-telomeric vertices are white. This graph has one 2-cycle, one 0-path and one 4-path.
  }
\end{figure}

\subsubsection{Breakpoint distance}

For canonical genomes $\mathbb{S}_1$ and $\mathbb{S}_2$ with $n_\ast$ genes each, the \emph{breakpoint distance}, denoted by $\textup{d}_\textsc{bp}$, is defined as follows~\cite{tannier2009multichromosomal}:
$$\textup{d}_\textsc{bp}(\mathbb{S}_1, \mathbb{S}_2) = n_\ast -\left(|\mathcal{A}(\mathbb{S}_1)\cap\mathcal{A}(\mathbb{S}_2)|+\frac{|\mathcal{T}(\mathbb{S}_1)\cap\mathcal{T}(\mathbb{S}_2)|}{2}\right).$$

For $\mathbb{S}_1=\{(\mathtt{1}\z\mathtt{2})\z[\mathtt{3}\z\overleftarrow{\mathtt{4}}]\}$ and $\mathbb{S}_2=\{(\mathtt{1}\z\overleftarrow{\mathtt{3}}\z\mathtt{2})\z[\mathtt{4}]\}$, we have $n_*=4$. The set of common adjacencies is $\mathcal{A}(\mathbb{S}_1)\cap\mathcal{A}(\mathbb{S}_2)=\{\mathtt{1}^t\mathtt{2}^h\}$ and the set of common telomeres is $\mathcal{T}(\mathbb{S}_1)\cap\mathcal{T}(\mathbb{S}_2)=\{\mathtt{4}^t\}$, giving $\textup{d}_\textsc{bp}(\mathbb{S}_1, \mathbb{S}_2)=2.5$.
Since a common adjacency of $\mathbb{S}_1$ and $\mathbb{S}_2$ corresponds to a 2-cycle and a common telomere corresponds to a 0-path in $BG(\mathbb{S}_1, \mathbb{S}_2)$ (see Fig.\;\ref{fig:bg}), the breakpoint distance can be rewritten as
$$\textup{d}_\textsc{bp}(\mathbb{S}_1, \mathbb{S}_2)=n_\ast-\left(c_2+\frac{p_0}{2}\right).$$

\subsubsection{DCJ distance}

Given a genome, a \emph{double cut and join} (DCJ) is the operation that breaks two of its adjacencies or telomeres\footnote{A broken adjacency has two open ends and a broken telomere has a single one.}
 and rejoins the open extremities in a different way~\cite{yancopoulos2005efficient}. 
For example, consider circular chromosome $S=(\z\mathtt{1}\z\mathtt{2}\z\mathtt{3}\z\mathtt{4}\z)$ and a DCJ that cuts $S$ between genes~$\mathtt{1}$ and~$\mathtt{2}$ and between genes~$\mathtt{3}$ and~$\mathtt{4}$, creating segments $\bullet\mathtt{2}\z\mathtt{3}\bullet$ and $\bullet\mathtt{4}\z\mathtt{1}\bullet$ (where the symbol $\bullet$ represents the open ends).
If we join the first with the third and the second with the fourth open end, we get~$S'=(\z\mathtt{1}\z\overleftarrow{\mathtt{3}}\z\overleftarrow{\mathtt{2}}\z\mathtt{4}\z)$, that is, the described DCJ operation is a segmental inversion transforming~$S$ into~$S'$.
Besides inversions, 
DCJ operations can represent several rearrangements, such as translocations, fissions and fusions.
The \emph{DCJ distance} $\textup{d}_\textsc{dcj}$ is the minimum number of DCJs that transform one genome into the other.
For a pair of canonical genomes $\mathbb{S}_1$ and $\mathbb{S}_2$, the DCJ distance can be easily computed with the help of $BG(\mathbb{S}_1, \mathbb{S}_2)$~\cite{bergeron2006unifying}:
$$\begin{array}{ll}
\textup{d}_\textsc{dcj}(\mathbb{S}_1, \mathbb{S}_2)&\!\!\!=n_*-\left(c+\frac{p_{\textsc{e}}}{2}\right)\\
&\!\!\!=n_*-\left(c_2 + c_4 + \ldots + c_{\infty} + \frac{p_0+p_2+\ldots+p_\infty}{2}\right).
\end{array}$$

If $\mathbb{S}_1=\{(\mathtt{1}\z\overleftarrow{\mathtt{3}}\z\mathtt{2})\z[\mathtt{4}]\}$ and $\mathbb{S}_2=\{(\mathtt{1}\z\mathtt{2})\z[\mathtt{3}\z\overleftarrow{\mathtt{4}}]\}$, then $n_*=4$, $c=1$ and $p_{\textsc{e}}=2$ (see Fig.\;\ref{fig:bg}). Consequently, their DCJ distance is $\textup{d}_\textsc{dcj}(\mathbb{S}_1, \mathbb{S}_2)=2$.

\subsubsection{The family of $\sigma_k$ distances}

Given the breakpoint graph of two  canonical genomes $\mathbb{S}_1$ and $\mathbb{S}_2$, for $k \in \{2,4,6,\ldots,\infty\}$, denote by $\sigma_k$ the cumulative sums $\sigma_k=c_2+c_4+\ldots+c_k+\frac{p_0+p_2+\ldots+p_{k\!-\!2}}{2}$.
The \emph{$\sigma_k$ distance} of $\mathbb{S}_1$ and $\mathbb{S}_2$ is defined as~\cite{braga2024doubleDistance}:
$$\textup{d}_k(\mathbb{S}_1,\mathbb{S}_2) = n_\ast - \sigma_k.$$

It is easy to see that the $\sigma_2$ distance equals the breakpoint distance and that the $\sigma_\infty$ distance equals the DCJ distance, and that the distance decreases monotonously between these two extremes.
Moreover, the $\sigma_k$ distance of two genomes that form a canonical pair can easily be computed in linear time for any even~$k \geq 2$.

Observe that the symbol $\infty$ represents unbounded $k$, but in fact the cumulative sums $\sigma_k$ and $\sigma_\infty$ are bounded by $n_*$. Furthermore, since the maximum length of any cycle/path in the breakpoint graph is $2n_*$, for any $k'> k \geq 2n_*$ it holds that $\sigma_{k'} = \sigma_k = \sigma_\infty$.

\subsection{Double distance: comparing \boldmath$\cogsd$-cognate ge\-nomes}

Let a $\cogsd$-cognate pair be composed of a singular genome $\mathbb{S}$ and a duplicated genome $\mathbb{D}$.
The number of genes in $\mathbb{D}$ is twice the number of genes in $\mathbb{S}$ and we need to somehow equalize the contents of these genomes, that is, obtain a $\cogdup$-cognate pair, before searching for common adjacencies of $\mathbb{S}$ and $\mathbb{D}$ or calculating the DCJ distance between the two.

This equalization can be done by \emph{doubling}~$\mathbb{S}$, with a rearrangement operation mimicking a \emph{whole genome duplication}: 
it simply consists of doubling each adjacency and each telomere of~$\mathbb{S}$.
While each linear chromosome is always doubled into two identical copies, it is not possible to find a unique layout of the circular chromosomes obtained after the doubling: each circular chromosome can be doubled into two identical circular chromosomes, or the two copies are concatenated to each other in a single circular chromosome.
Consequently, the doubling of a genome $\mathbb{S}$ results in a set of doubled genomes denoted by $\mathtt{2}\mathbb{S}$.
Note that $|\mathtt{2}\mathbb{S}|=2^{o}$, where $o$ is the number of circular chromosomes in $\mathbb{S}$.
For example, if $\mathbb{S}=\{(\mathtt{1\,2})\,[\mathtt{3}\,\mathtt{4}]\}$, then $\mathtt{2}\mathbb{S}=\{\mathbb{B}_1,\mathbb{B}_2\}$ with
$\mathbb{B}_1=\{(\mathtt{1\,2})\,(\mathtt{1\,2})\,[\mathtt{3}\,\mathtt{4}]\,[\mathtt{3}\,\mathtt{4}]\}$ and
$\mathbb{B}_2=\{(\mathtt{1\,2\,1\,2})\,[\mathtt{3}\,\mathtt{4}]\,[\mathtt{3}\,\mathtt{4}]\}$ (see Table~\ref{tab:genome-types}).
All genomes in $\mathtt{2}\mathbb{S}$ have exactly the same multisets of adjacencies and of telomeres, therefore we can use a special notation for these multisets\footnote{The symbol $\cup$ means that in the resulting multiset 
the multiplicity of an element is equal to the sum of the multiplicities of that element in the original multisets.}: 
%Similarly, the symbol $\cap$ means that in the resulting multiset the multiplicity of an element is equal to the minimum multiplicity of that element in the original multisets.}: 
$\mathcal{A}(\mathtt{2}\mathbb{S})=\mathcal{A}(\mathbb{S})\cup\mathcal{A}(\mathbb{S})$ and $\mathcal{T}(\mathtt{2}\mathbb{S})=\mathcal{T}(\mathbb{S})\cup\mathcal{T}(\mathbb{S})$.

Note that $\mathbb{D}$ and each $\mathbb{B} \in \mathtt{2}\mathbb{S}$ form a $\cogdup$-cognate pair.
Each pair of paralogous genes in a duplicated genome can be $\binom{\mathtt{a}}{\mathtt{b}}$-\emph{singularized} by adding the index $\mathtt{a}$ to one of its occurrences and the index $\mathtt{b}$ to the other.
In this way, a duplicated genome can be entirely singularized.
Let $\mathfrak{S}^\mathtt{a}_\mathtt{b}(\mathbb{D})$ be the set of all possible genomes obtained by all ways of $\binom{\mathtt{a}}{\mathtt{b}}$-singularizing the duplicated genome $\mathbb{D}$.
Similarly, we denote by $\mathfrak{S}^\mathtt{a}_\mathtt{b}(\mathtt{2}\mathbb{S})$ the set of all possible genomes obtained by all ways of $\binom{\mathtt{a}}{\mathtt{b}}$-singularizing each doubled genome in the set $\mathtt{2}\mathbb{S}$.

\subsubsection{The family of $\sigma_k$ double distances} 

Given a $\cogsd$-cognate pair composed of genomes $\mathbb{S}$ and $\mathbb{D}$ and any genome $\check{\mathbb{D}}$ in $\mathfrak{S}^\mathtt{a}_\mathtt{b}(\mathbb{D})$, the \emph{$\sigma_k$ double distance} of $\mathbb{S}$ and $\mathbb{D}$ for $k=2,4,\ldots,\infty$ is defined as follows~\cite{braga2024doubleDistance}:
$$\textup{dd}_{k}(\mathbb{S}, \mathbb{D}) = \min_{\mathbb{B}\in\mathfrak{S}^\mathtt{a}_\mathtt{b}(\mathtt{2}\mathbb{S})}\{ \textup{d}_{k}(\mathbb{B},\check{\mathbb{D}})\}.$$ 

\noindent
For computing $\textup{dd}_{k}(\mathbb{S}, \mathbb{D})$ we need to optimize the singularizations of $\mathtt{2}\mathbb{S}$ and $\mathbb{D}$.
Note that one can arbitrarily pick an element $\check{\mathbb{D}} \in \mathfrak{S}^\mathtt{a}_\mathtt{b}(\mathbb{D})$ as all other choices from $\mathfrak{S}^\mathtt{a}_\mathtt{b}(\mathbb{D})$ can be generated by renaming the indices.
One then searches for an element in $\mathfrak{S}^\mathtt{a}_\mathtt{b}(\mathtt{2}\mathbb{S})$ that has minimimum $\sigma_k$ double distance with respect to $\check{\mathbb{D}}$.

\subsubsection{$\sigma_2$ (breakpoint), $\sigma_4$ and $\sigma_6$ double distances}

The solution for the $\sigma_2$ double distance, also called \emph{breakpoint double distance}, of $\mathbb{S}$ and~$\mathbb{D}$ can be found easily with a greedy, linear time algorithm~\cite{tannier2009multichromosomal}: the adjacencies and telomeres of $\mathbb{S}$ that occur once or twice in $\mathbb{D}$ induce %consistent 
singularizations $\check{\mathbb{D}}$ of~$\mathbb{D}$ and~$\mathbb{B}$ of at least one element of $\mathtt{2}\mathbb{S}$, such that $\check{\mathbb{D}}$ and $\mathbb{B}$ share all these adjacencies and telomeres, in the multiplicity with which they originally occur in $\mathbb{D}$.
Therefore\footnote{The symbol $\cap$ means that in the resulting multiset the multiplicity of an element is equal to the minimum multiplicity of that element in the original multisets.},
$$\textup{dd}_2(\mathbb{S},\mathbb{D})=2n_*-|\mathcal{A}(\mathtt{2}\mathbb{S}) \cap \mathcal{A}(\mathbb{D})|-\frac{|\mathcal{T}(\mathtt{2}\mathbb{S}) \cap \mathcal{T}(\mathbb{D})|}{2},$$
where $n_\ast$ is the number of genes in $\mathbb{S}$.

In a recent study, where some of the authors of the present work participated, linear time algorithms for computing both $\sigma_4$ and $\sigma_6$ double distances of a $\cogsd$-cognate pair of genomes were presented~\cite{braga2024doubleDistance}. These linear solutions can be obtained thanks to the fact that, up to $k=6$, the underlying graph (described in Section~\ref{sec:ambiguousbg} below) can represent arbitrarily large genomes only when its structure is quite particular, similar to a ladder graph, which can be easily handled. When the structure is more complex, it can only represent genomes bound to a limited size. %Both the linear and the bounded non-linear and can be easily analyzed in polynomial time.
%, even by a brute force algorithm. 

%at most two candidate cycles share a $\check{\mathbb{D}}$-edge. However, for $k\geq 8$ this property no longer holds.

\subsubsection{$\sigma_8$ to $\sigma_\infty$ (DCJ) double distances}

For the $\sigma_\infty$ double distance, also called \emph{DCJ double distance}, the solution space cannot be explored efficiently.
In fact, computing the DCJ double distance of circular genomes $\mathbb{S}$ and $\mathbb{D}$ was proven to be an NP-hard problem~\cite{tannier2009multichromosomal}.

%For $k \geq8$ the graph does not saturate when it assumes non-linear shapes. 
In what follows we will present the main results of our present study, which is the NP-completeness of the $\sigma_8$ double distance and, based on this, the NP-completeness of the $\sigma_k$ double distance for any $k \geq 8$. We also generalized the NP-completeness of the DCJ double distance to linear genomes.
This closes the complexity gap of the problem of computing the double distance under the family of $\sigma_k$ distances.

\section{Equivalence of \boldmath$\sigma_k$ double distance and \boldmath$\sigma_k$ disambiguation}

The solution space of the double distance problem can be represented in a compact way with the help of a graph structure introduced by Tannier \textit{et al.}~\cite{tannier2009multichromosomal} and described in the following.

\subsection{Ambiguous breakpoint graph}\label{sec:ambiguousbg}

Given a $\cogsd$-cognate pair composed of a singular genome $\mathbb{S}$ and a duplicated genome $\mathbb{D}$, and a genome $\mathbb{\check{D}} \in \mathfrak{S}^\mathtt{a}_\mathtt{b}(\mathbb{D})$, the \emph{ambiguous breakpoint graph} $ABG(\mathbb{S},\check{\mathbb{D}}) = (V,E)$ is a multigraph representing the adjacencies of $\check{\mathbb{D}}$ and of every element in $\mathfrak{S}^\mathtt{a}_\mathtt{b}(2\mathbb{S})$.
The vertex set $V$ comprises, for each gene identifier $\mathtt{X}$ in $\mathcal{I}(\mathbb{S})$, the two pairs of paralogous vertices $\mathtt{X}^h_\mathtt{a}, \mathtt{X}^h_\mathtt{b}$ and $\mathtt{X}^t_\mathtt{a}, \mathtt{X}^t_\mathtt{b}$.
The edge set $E$ represents the adjacencies.
For each adjacency in $\check{\mathbb{D}}$ there exists one \emph{$\check{\mathbb{D}}$-edge} in $E$ connecting its two extremities. 
The \emph{$\mathbb{S}$-edges} represent all adjacencies occurring in all genomes from $\mathfrak{S}^\mathtt{a}_\mathtt{b}(2\mathbb{S})$:
for each adjacency $\beta\gamma$ of $\mathbb{S}$, we have the pair of paralogous edges $\mathcal{E}(\beta\gamma) = \{ \beta_\mathtt{a}\gamma_\mathtt{a}, \beta_\mathtt{b}\gamma_\mathtt{b} \}$ and the complementary pair of paralogous edges $\Tilde{\mathcal{E}}(\beta\gamma) = \{ \beta_\mathtt{a}\gamma_\mathtt{b}, \beta_\mathtt{b}\gamma_\mathtt{a} \}$.
The \emph{square} of $\beta\gamma$ is then $\mathcal{Q}(\beta\gamma) = \mathcal{E}(\beta\gamma) \cup \Tilde{\mathcal{E}}(\beta\gamma)$.
The $\mathbb{S}$-edges in the ambiguous breakpoint graph are therefore the squares of all adjacencies in $\mathbb{S}$.
If $u$ is a vertex of a square $Q$, we call $\hat{u}$ the \emph{paralogous} vertex of $u$, which is the vertex of $Q$ that is not adjacent to $u$.
Let $a_*$ be the number of squares in $ABG(\mathbb{S}, \check{\mathbb{D}})$.
We then have $a_*=|\mathcal{A}(\mathbb{S})|=n_*-\chi(\mathbb{S})$, where $\chi(\mathbb{S})$ is the number of linear chromosomes in $\mathbb{S}$.
An example of an ambiguous breakpoint graph is depicted in Fig.\;\ref{fig:abg}~(a).

\begin{figure}[ht]
  \begin{center}
    \colorlet{qlcolor}{white}

\begin{minipage}{3cm}
  \begin{center}
    \textbf{(a)}\\

   \scalebox{0.65}{
    \begin{tikzpicture}[scale=0.9]
      \node at (0.6,3.6) {\color{darkred}\footnotesize \bf 1}; 
      \node at (0.6,0.6) {\color{darkred}\footnotesize \bf 2};
      %\node at (3.6,0.6) {\color{darkred}\footnotesize \bf 3};

      \bigvcl{(0,3)}{q1b1}{$\mathtt{2}^t_\mathtt{b}$}
      \bigvcl{(0,4.2)}{q1t1}{$\mathtt{1}^h_\mathtt{a}$}
      \bigvcl{(1.2,4.2)}{q1t2}{$\mathtt{2}^t_\mathtt{a}$}
      \bigtvcl{(1.2,3)}{q1b2}{$\mathtt{1}^h_\mathtt{b}$}{mylightgray}

      \bigtvcl{(0,0)}{q2b1}{$\mathtt{2}^h_\mathtt{b}$}{mylightgray}
      \bigvcl{(0,1.2)}{q2t1}{$\mathtt{3}^t_\mathtt{b}$}
      \bigvcl{(1.2,1.2)}{q2t2}{$\mathtt{2}^h_\mathtt{a}$}
      \bigvcl{(1.2,0)}{q2b2}{$\mathtt{3}^t_\mathtt{a}$}

      \bigtvcl{(3,0)}{q3b1}{$\mathtt{1}^t_\mathtt{b}$}{mylightblue}
      \bigtvcl{(3,1.2)}{q3t1}{$\mathtt{3}^h_\mathtt{a}$}{mylightblue}
      \bigtvcl{(4.2,1.2)}{q3t2}{$\mathtt{1}^t_\mathtt{a}$}{mylightpurple}
      \bigtvcl{(4.2,0)}{q3b2}{$\mathtt{3}^h_\mathtt{b}$}{mylightpurple}

      \ed{q1b1}{q1b2}{orange}
      \ed{q1t1}{q1t2}{orange}
      \ed{q1b1}{q1t1}{orange}
      \ed{q1b2}{q1t2}{orange}

      \ed{q2b1}{q2b2}{orange}
      \ed{q2t1}{q2t2}{orange}
      \ed{q2b1}{q2t1}{orange}
      \ed{q2b2}{q2t2}{orange}

      %\ed{q3b1}{q3b2}{orange}
      %\ed{q3t1}{q3t2}{orange}
      %\ed{q3b1}{q3t1}{orange}
      %\ed{q3b2}{q3t2}{orange}
      \path [ultra thick,black,bend left=50] (q1t1) edge (q1t2);

      \ed{q2t2}{q3t1}{black}
      \ed{q2b2}{q3b1}{black}
      %\path [ultra thick,black,bend right=35] (q2b1) edge (q3b2);

      \ed{q1b1}{q2t1}{black}
      %\ed{q1b2}{q3t2}{black}
    \end{tikzpicture}
    }
  \end{center}
\end{minipage}
\hspace{1cm}
\begin{minipage}{3cm}
  \begin{center}
    \textbf{(b)}\\
    
   \scalebox{0.65}{
    \begin{tikzpicture}[scale=0.9]
      \node at (0.6,3.6) {\color{lightgray}\footnotesize 1}; 
      \node at (0.6,0.6) {\color{lightgray}\footnotesize 2};
      %\node at (3.6,0.6) {\color{lightgray}\footnotesize 3};

      \bigvcl{(0,3)}{q1b1}{$\mathtt{2}^t_\mathtt{b}$}
      \bigvcl{(0,4.2)}{q1t1}{$\mathtt{1}^h_\mathtt{a}$}
      \bigvcl{(1.2,4.2)}{q1t2}{$\mathtt{2}^t_\mathtt{a}$}
      \bigtvcl{(1.2,3)}{q1b2}{$\mathtt{1}^h_\mathtt{b}$}{mylightgray}

      \bigtvcl{(0,0)}{q2b1}{$\mathtt{2}^h_\mathtt{b}$}{mylightgray}
      \bigvcl{(0,1.2)}{q2t1}{$\mathtt{3}^t_\mathtt{b}$}
      \bigvcl{(1.2,1.2)}{q2t2}{$\mathtt{2}^h_\mathtt{a}$}
      \bigvcl{(1.2,0)}{q2b2}{$\mathtt{3}^t_\mathtt{a}$}

      \bigtvcl{(3,0)}{q3b1}{$\mathtt{1}^t_\mathtt{b}$}{mylightblue}
      \bigtvcl{(3,1.2)}{q3t1}{$\mathtt{3}^h_\mathtt{a}$}{mylightblue}
      \bigtvcl{(4.2,1.2)}{q3t2}{$\mathtt{1}^t_\mathtt{a}$}{mylightpurple}
      \bigtvcl{(4.2,0)}{q3b2}{$\mathtt{3}^h_\mathtt{b}$}{mylightpurple}
   
      \ed{q1b1}{q1b2}{cyan}
      \ed{q1t1}{q1t2}{cyan}
      \ded{q1b1}{q1t1}{qlcolor}
      \ded{q1b2}{q1t2}{qlcolor}

      \ed{q2b1}{q2b2}{cyan}
      \ed{q2t1}{q2t2}{cyan}
      \ded{q2b1}{q2t1}{qlcolor}
      \ded{q2b2}{q2t2}{qlcolor}

      %\ded{q3b1}{q3b2}{qlcolor}
      %\ded{q3t1}{q3t2}{qlcolor}
      %\ed{q3b1}{q3t1}{cyan}
      %\ed{q3b2}{q3t2}{cyan}
      \path [ultra thick,black,bend left=50] (q1t1) edge (q1t2);

      \ed{q2t2}{q3t1}{black}
      \ed{q2b2}{q3b1}{black}
      %\path [ultra thick,black,bend right=35] (q2b1) edge (q3b2);

      \ed{q1b1}{q2t1}{black}
      %\ed{q1b2}{q3t2}{black}
    \end{tikzpicture}
    }
  \end{center}
\end{minipage}

%\vspace{-4mm}   
  \end{center}
  \caption{\label{fig:abg}
    \figcbf{(a)} Ambiguous breakpoint graph $ABG(\mathbb{S}, \check{\mathbb{D}})$ for genomes $\mathbb{S}=\{[\mathtt{1\,2\,3}]\}$ and $\check{\mathbb{D}}=\{[\mathtt{1_a\,2_a}\,\overleftarrow{\mathtt{3}}_\mathtt{a}\,\mathtt{1_b}]\,[\overleftarrow{\mathtt{3}}_\mathtt{b}\,\mathtt{2_b}]\}$. Edge types are distinguished by colors: $\check{\mathbb{D}}$-edges are drawn in black and $\mathbb{S}$-edges (squares) are drawn in orange. \figcbf{(b)} Induced breakpoint graph $BG(\tau,\check{\mathbb{D}})$ in which all squares are resolved by the solution $\tau=(\{\mathtt{1}_\mathtt{a}^h\mathtt{2}_\mathtt{a}^t,\mathtt{1}_\mathtt{b}^h\mathtt{2}_\mathtt{b}^t\},\{\mathtt{2}_\mathtt{a}^h\mathtt{3}_\mathtt{b}^t,\mathtt{2}_\mathtt{b}^h\mathtt{3}_\mathtt{a}^t\}\})$, resulting in one 2-cycle, two 0-paths, one 2-path and one 4-path. This is also the breakpoint graph of $\check{\mathbb{D}}$ and $\mathbb{B}=\{[\mathtt{1_a\,2_a\,3_b}],[\mathtt{1_b\,2_b\,3_a}]\} \in \mathfrak{S}^\mathtt{a}_\mathtt{b}(\mathtt{2}\mathbb{S})$. In both (a) and (b), vertex types are distinguished by colors: telomeres in $\mathbb{S}$ are marked in blue, telomeres in $\check{\mathbb{D}}$ are marked in gray, telomeres in both $\mathbb{S}$ and $\check{\mathbb{D}}$ are marked in purple and non-telomeric vertices are white.
  }
\end{figure} 

Each linear chromosome in $\mathbb{S}$ corresponds to four telomeres, called \emph{$\mathbb{S}$-telomeres}, in any element of $\mathtt{2}\mathbb{S}$. These four vertices are not part of any square. In other words, the number of $\mathbb{S}$-telomeres in $ABG(\mathbb{S}, \check{\mathbb{D}})$ is $4\chi(\mathbb{S})$. If $\chi(\mathbb{D})$ is the number of linear chromosomes in $\mathbb{D}$, the number of telomeres in $\check{\mathbb{D}}$, also called \emph{$\check{\mathbb{D}}$-telomeres}, is~$2\chi(\mathbb{D})$. A vertex that is at the same time an $\mathbb{S}$-telomere and a $\check{\mathbb{D}}$-telomere is isolated.

\subsection{Resolving the ambiguous breakpoint graph}

\emph{Resolving} a square $\mathcal{Q}(\cdot) = \mathcal{E}(\cdot) \cup \Tilde{\mathcal{E}}(\cdot)$ corresponds to selecting in the ambiguous breakpoint graph either the edges from $\mathcal{E}(\cdot)$ or the edges from $\Tilde{\mathcal{E}}(\cdot)$, while the complementary pair is deleted.
By resolving all squares, the ambiguous breakpoint graph is resolved.
If we number the squares of $ABG(\mathbb{S},\check{\mathbb{D}})$ from $1$ to $a_\ast$, a \emph{solution} can be represented by a tuple $\tau = (T_1, T_2,\dotsc, T_{a_\ast})$
where each $T_i$ contains the pair of paralogous edges that are chosen in the graph for square $\mathcal{Q}_i$.
The graph \emph{induced} by $\tau$ is a simple breakpoint graph, which we denote by $BG(\tau, \check{\mathbb{D}})$.
An example is shown in Fig.\;\ref{fig:abg}~(b).
The \emph{$k$-score} of~$\tau$, denoted by $\textup{s}_k(\tau)$ is then the sum $\sigma_k$ for $BG(\tau, \check{\mathbb{D}})$. A solution $\tau$ is \emph{$k$-maximum} when, for any solution~$\tau'$ of $ABG(\mathbb{S},\check{\mathbb{D}})$, it holds that $\textup{s}_k(\tau) \geq \textup{s}_k(\tau')$.

Note that each isolated vertex of $ABG(\mathbb{S},\check{\mathbb{D}})$, which is at the same time an $\mathbb{S}$-telomere and a $\check{\mathbb{D}}$-telomere, is a 0-path of any solution of $ABG(\mathbb{S},\check{\mathbb{D}})$, regardless the choices for resolving each square.

\subsubsection{The family of $\sigma_k$ disambiguations}

Given a $\cogsd$-cognate pair composed of genomes $\mathbb{S}$ and $\mathbb{D}$ and any genome $\check{\mathbb{D}}$ in $\mathfrak{S}^\mathtt{a}_\mathtt{b}(\mathbb{D})$, the \emph{$\sigma_k$ disambiguation} of $\mathbb{S}$ and $\mathbb{D}$, denoted by $\textup{ss}_k(\mathbb{S},\mathbb{D})$ for $k=2,4,...,\infty$ is defined as follows~\cite{braga2024doubleDistance}:
$$\textup{ss}_{k}(\mathbb{S}, \mathbb{D}) = \max_{\substack{\tau \text{ is a solution}\\ \text{of } ABG(\mathbb{S},\check{\mathbb{D}})}} \{ \textup{s}_{k}(\tau)\}.$$ 

\noindent
The problem then consists of taking any $\check{\mathbb{D}} \in \mathfrak{S}^\mathtt{a}_\mathtt{b}(\mathbb{D})$ and finding a $k$-maximum solution~$\tau$ of $ABG(\mathbb{S},\check{\mathbb{D}})$. 

\medskip

It was previously shown that the minimization problem of computing the $\sigma_k$ double distance of~$\mathbb{S}$ and $\mathbb{D}$ is equivalent to the maximization problem of obtaining a $\sigma_k$ disambiguation of $\mathbb{S}$ and $\mathbb{D}$~\cite{braga2024doubleDistance}. 
%(for any $\mathbb{\check{D}} \in $) 
%so that the $k$-score of~$\tau$ is maximized~\cite{braga2024doubleDistance}. 
As we have already mentioned, $\sigma_2$, $\sigma_4$ and $\sigma_6$ disambiguations can be solved in linear time, while $\sigma_\infty$ disambiguation is NP-hard~\cite{braga2024doubleDistance,tannier2009multichromosomal}.
We now proceed to show that $\sigma_k$ disambiguation is NP-complete for any finite $k\geq 8$.

\section{NP-completeness of \boldmath$\sigma_k$ disambiguation and of \boldmath$\sigma_k$ double distance}

An \emph{alternating} cycle or path in an ambiguous breakpoint graph alternates between a (square) $\mathbb{S}$-edge and a (connecting) $\check{\mathbb{D}}$-edge. 
A useful graph structure for our proof is called \emph{$p$-flower}. It consists of a circular sequence of $p$ squares, such that, for each square $Q$, one pair of paralogous vertices of $Q$ are matched with a pair of paralogous vertices of its predecessor, while the other pair of paralogous vertices of $Q$ are matched with a pair of paralogous vertices of its successor. An example is given in Fig.\;\ref{fig:p-flower} (a).

\begin{figure}[ht]
  \begin{center}
    \hspace{-0.2cm}
\begin{minipage}{3.2cm}
    \centering
    %\hspace{1.6cm}
    \textbf{(a)}

    \vspace{-2mm}
    
\scalebox{0.5}{
\begin{tikzpicture}[scale=0.5]

    \vcl{(0.3,2.8)}{q5v1}{}
	\tvcl{(1,4)}{q5v2}{}{yellow}
    \vcl{(-0.2,4.7)}{q5v3}{}
	\tvcl{(-0.9,3.5)}{q5v4}{}{yellow}
    \node at (0.1,3.8) {\large $\mathsf{Q}_5$};

    \vcl{(2.1,5)}{q4v1}{}
	\tvcl{(3.5,5)}{q4v2}{}{yellow}
    \tvcl{(2.1,6.4)}{q4v3}{}{yellow}
	\vcl{(3.5,6.4)}{q4v4}{}
    \node at (2.8,5.7) {\large $\mathsf{Q}_4$};
    
    \tvcl{(5.3,2.8)}{q3v1}{}{yellow}
	\vcl{(6.5,3.5)}{q3v2}{}
    \tvcl{(5.8,4.7)}{q3v3}{}{yellow}
	\vcl{(4.6,4)}{q3v4}{}
    \node at (5.6,3.8) {\large $\mathsf{Q}_3$};

    \tvcl{(1,-0.2)}{q1v1}{}{yellow}
	\vcl{(2,0.8)}{q1v2}{}
    \tvcl{(1,1.8)}{q1v3}{}{yellow}
	\vcl{(0,0.8)}{q1v4}{}
    \node at (1,0.8) {\large $\mathsf{Q}_1$};
    
    \vcl{(4.5,-0.2)}{q2v1}{}
	\tvcl{(5.5,0.8)}{q2v2}{}{yellow}
    \vcl{(4.5,1.8)}{q2v3}{}
	\tvcl{(3.5,0.8)}{q2v4}{}{yellow}
    \node at (4.5,0.8) {\large $\mathsf{Q}_2$};

    % Start edges
    \ed{q5v1}{q5v2}{orange}
    \dted{q5v1}{q5v4}{orange}
    \dted{q5v3}{q5v2}{orange}
    \ed{q5v3}{q5v4}{orange}

    \ed{q4v1}{q4v2}{orange}
    \dted{q4v1}{q4v3}{orange}
    \ed{q4v3}{q4v4}{orange}
    \dted{q4v2}{q4v4}{orange}

    \dted{q3v1}{q3v2}{orange}
    \ed{q3v1}{q3v4}{orange}
    \ed{q3v3}{q3v2}{orange}
    \dted{q3v3}{q3v4}{orange}

    \ed{q2v1}{q2v2}{orange}
    \dted{q2v1}{q2v4}{orange}
    \dted{q2v3}{q2v2}{orange}
    \ed{q2v3}{q2v4}{orange}

    \dted{q1v1}{q1v2}{orange}
    \ed{q1v1}{q1v4}{orange}
    \ed{q1v3}{q1v2}{orange}
    \dted{q1v3}{q1v4}{orange}
    
    \ed{q5v2}{q4v1}{black}
    \draw[ultra thick] (q5v4) to[out=120,in=120,distance=3cm] (q4v4);

    %\path [ultra thick,black,bend left=90] (q5v4) edge (q4v4);
   
    \ed{q4v2}{q3v4}{black}
    \draw[ultra thick] (q4v3) to[out=60,in=55,distance=3cm] (q3v2);
    %\path [ultra thick,black,bend left=90] (q4v3) edge (q3v2); 
     
    \ed{q3v1}{q2v3}{black}
    \draw[ultra thick] (q3v3) to[out=0,in=-20,distance=3cm] (q2v1);
    %\path [ultra thick,black,bend left=110] (q3v3) edge (q2v1);
    
    \ed{q2v4}{q1v2}{black}
    \draw[ultra thick] (q2v2) to[out=-90,in=-90,distance=3cm] (q1v4);
    %\path [ultra thick,black,bend left=110] (q2v2) edge (q1v4);
    
    \ed{q1v3}{q5v1}{black}
    \draw[ultra thick] (q1v1) to[out=-140,in=180,distance=3cm] (q5v3);
    %\path [ultra thick,black,bend left=110] (q1v1) edge (q5v3);
\end{tikzpicture}
}
\end{minipage}
   \hspace{-0.6cm}
\begin{minipage}{3.2cm}
    %\hspace{1.6cm}
    \centering
    \textbf{(b)}

    \vspace{-2mm}
\scalebox{0.5}{    
\begin{tikzpicture}[scale=0.5]

    \vcl{(0.3,2.8)}{q5v1}{}
	\tvcl{(1,4)}{q5v2}{}{white}
    \vcl{(-0.2,4.7)}{q5v3}{}
	\tvcl{(-0.9,3.5)}{q5v4}{}{white}
    \node at (0.1,3.8) {\large $\mathsf{Q}_5$};

    \vcl{(2.1,5)}{q4v1}{}
	\tvcl{(3.5,5)}{q4v2}{}{white}
    \tvcl{(2.1,6.4)}{q4v3}{}{white}
	\vcl{(3.5,6.4)}{q4v4}{}
    \node at (2.8,5.7) {\large $\mathsf{Q}_4$};
    
    \tvcl{(5.3,2.8)}{q3v1}{}{white}
	\vcl{(6.5,3.5)}{q3v2}{}
    \tvcl{(5.8,4.7)}{q3v3}{}{white}
	\vcl{(4.6,4)}{q3v4}{}
    \node at (5.6,3.8) {\large $\mathsf{Q}_3$};

    \tvcl{(1,-0.2)}{q1v1}{}{white}
	\vcl{(2,0.8)}{q1v2}{}
    \tvcl{(1,1.8)}{q1v3}{}{white}
	\tvcl{(0,0.8)}{q1v4}{}{white}
    \node at (1,0.8) {\large $\mathsf{Q}_1$};
    \node (ref1) at (-1.5,-1.5) {};
    \path [ultra thick,black,bend right=50] (q1v4) edge (ref1); 
    \node at (-2.3,-1.2) {\Large $\ope{z}$};
    %\ed{q1v4}{ref1}{black}
    
    \vcl{(4.5,-0.2)}{q2v1}{}
	\tvcl{(5.5,0.8)}{q2v2}{}{white}
    \vcl{(4.5,1.8)}{q2v3}{}
	\tvcl{(3.5,0.8)}{q2v4}{}{white}
    \node at (4.5,0.8) {\large $\mathsf{Q}_2$};
    \node (ref2) at (7.5,-1.5) {};
    \node (ref1) at (-1.5,-1.5) {};
    \path [ultra thick,black,bend left=50] (q2v2) edge (ref2); 
    \node at (8.3,-1.2) {\Large $\ope{z}$};

    % Start edges
    \ed{q5v1}{q5v2}{orange}
    \dted{q5v1}{q5v4}{orange}
    \dted{q5v3}{q5v2}{orange}
    \ed{q5v3}{q5v4}{orange}

    \ed{q4v1}{q4v2}{orange}
    \dted{q4v1}{q4v3}{orange}
    \ed{q4v3}{q4v4}{orange}
    \dted{q4v2}{q4v4}{orange}

    \dted{q3v1}{q3v2}{orange}
    \ed{q3v1}{q3v4}{orange}
    \ed{q3v3}{q3v2}{orange}
    \dted{q3v3}{q3v4}{orange}

    \ed{q2v1}{q2v2}{orange}
    \dted{q2v1}{q2v4}{orange}
    \dted{q2v3}{q2v2}{orange}
    \ed{q2v3}{q2v4}{orange}

    \dted{q1v1}{q1v2}{orange}
    \ed{q1v1}{q1v4}{orange}
    \ed{q1v3}{q1v2}{orange}
    \dted{q1v3}{q1v4}{orange}
    
    \ed{q5v2}{q4v1}{black}
    \draw[ultra thick] (q5v4) to[out=120,in=120,distance=3cm] (q4v4);

    %\path [ultra thick,black,bend left=90] (q5v4) edge (q4v4);
   
    \ed{q4v2}{q3v4}{black}
    \draw[ultra thick] (q4v3) to[out=60,in=55,distance=3cm] (q3v2);
    %\path [ultra thick,black,bend left=90] (q4v3) edge (q3v2); 
     
    \ed{q3v1}{q2v3}{black}
    \draw[ultra thick] (q3v3) to[out=0,in=-20,distance=3cm] (q2v1);
    %\path [ultra thick,black,bend left=110] (q3v3) edge (q2v1);
    
    \ed{q2v4}{q1v2}{black}
    %\draw[ultra thick] (q2v2) to[out=-90,in=-90,distance=3cm] (q1v4);
    %\path [ultra thick,black,bend left=110] (q2v2) edge (q1v4);
    
    \ed{q1v3}{q5v1}{black}
    \draw[ultra thick] (q1v1) to[out=-140,in=180,distance=3cm] (q5v3);
    %\path [ultra thick,black,bend left=110] (q1v1) edge (q5v3);

  %  \node[shape=circle,draw=white,thick, fill=white, scale=0.5]  at (0,0.8) {};
  %  \node at (0,0.8) {\large $\circledast$};

   % \node[shape=circle,draw=white,thick, fill=white, scale=0.5]  at (5.5,0.8) {};
   % \node at (5.5,0.8) {\large $\circledast$};

\end{tikzpicture}
}
\end{minipage}
\hspace{-0.5cm}
\begin{minipage}{3.2cm}
    \centering
    \textbf{(c)}

    \medskip
    
\scalebox{0.8}{    
\begin{tikzpicture}

\def\xR{0.8}; %x radius of ellipse
\def\yR{0.3};  %y radius of ellipse

\fill[gray!20] (0,0) ellipse ({\xR} and {\yR}); %overlay with ellipse
\draw[very thick] (-\xR,0) arc[start angle=180,end angle=0,x radius=0.8,y radius=0.3]; %top half
\draw[very thick] (-\xR,0) arc[start angle=180,end angle=360,x radius=0.8,y radius=0.3]; %bottom half

\path [very thick,black,bend right=50] (-\xR-0.1,0) edge (-\xR-0.5,-1); 

\path [very thick,black,bend left=50] (\xR+0.1,0) edge (\xR+0.5,-1); 

\tvcl{(-\xR-0.1,0)}{}{}{white}
%\node[shape=circle,draw=white,thick, fill=white, scale=0.75]  at (-\xR-0.1,0) {};
%    \node at (-\xR-0.1,0) {\large $\circledast$};

    \tvcl{(\xR+0.1,0)}{}{}{white}
    %\node[shape=circle,draw=white,thick, fill=white, scale=0.75]  at (\xR+0.1,0) {};
    %\node at (\xR+0.1,0) {\large $\ast$};

\node at (-1.55,-1) {$\ope{z}$};
 
\node at (1.55,-1) {$\ope{z}$};

\node at (0,0) {\large $\circledast_p$}; %\circledast_p$};;
\end{tikzpicture}
}
\end{minipage}

\vspace{-3mm}
  \end{center}
  \caption{\label{fig:p-flower}
    \figcbf{(a)} An example of a 5-flower.
    In each square the two pairs of paralogous vertices are distinguished by the colors white and yellow.
    \figcbf{(b)} An open 5-flower whose open ends are marked with the symbol $\ope{z}$.
    \figcbf{(c)} Schematic representation of an open $p$-flower whose open ends are marked with the symbol $\ope{z}$.}
\end{figure}

\begin{lemma}\label{lemma:p-flower}
Any solution of a $p$-flower gives either two $2p$-cycles or a single $4p$-cycle.
\end{lemma}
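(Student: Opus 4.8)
The plan is to reduce the statement to a parity argument over the group $\mathbb{Z}/2$. First I would fix, for each of the $p$ squares, a labelling of its two paralogous pairs: following the definitions, a square $\mathcal{Q}(\beta\gamma)$ has the paralogous pairs $\{\beta_\mathtt{a},\beta_\mathtt{b}\}$ and $\{\gamma_\mathtt{a},\gamma_\mathtt{b}\}$, and in a $p$-flower one of them (the \emph{incoming} pair) is joined by a pair of $\check{\mathbb{D}}$-edges to its predecessor while the other (the \emph{outgoing} pair) is joined to its successor. I would then observe that, traversing the flower cyclically, the $4p$ vertices group into $2p$ paralogous pairs $P_1,\dots,P_{2p}$ — alternately the incoming and the outgoing pair of each square — and that every two consecutive pairs are joined by exactly two edges forming a perfect matching: between the incoming and outgoing pair of the same square the two edges are the chosen resolution ($\mathcal{E}$ or $\tilde{\mathcal{E}}$), and between the outgoing pair of one square and the incoming pair of the next they are the fixed $\check{\mathbb{D}}$-edges.

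Next I would check that the resolved flower is always a disjoint union of cycles. Each vertex belongs to exactly one resolution edge (its square's matching) and exactly one $\check{\mathbb{D}}$-edge, hence has degree $2$; the count gives $2p$ square edges and $2p$ $\check{\mathbb{D}}$-edges, i.e.\ $4p$ edges on $4p$ vertices. Since the connected components of a $2$-regular graph are cycles, it remains only to determine their lengths.

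The heart of the argument is then the following abstraction. Choosing a consistent ``top/bottom'' labelling of the two vertices within each pair $P_j$, each of the $2p$ matchings is either the identity (top-to-top, bottom-to-bottom) or the transposition (the crossing), that is, an element of $S_2\cong\mathbb{Z}/2$. Tracing a strand around the cyclic order amounts to composing these $2p$ elements, and their product lies in $\mathbb{Z}/2$, so it is \emph{either} the identity \emph{or} the transposition — independently of how the $p$ resolutions are chosen, since each resolution contributes only one factor. If the product is the identity, a strand started in $P_1$ returns to its starting vertex after passing through all $2p$ matchings, closing a cycle of $2p$ edges; the second vertex of $P_1$ closes an analogous cycle, yielding two $2p$-cycles. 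If the product is the transposition, the strand returns to the \emph{other} vertex of $P_1$ after the first loop and closes only after a second loop, visiting all $4p$ vertices and producing a single $4p$-cycle.

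The main obstacle I anticipate is making the abstraction airtight: verifying precisely that every square resolution and every $\check{\mathbb{D}}$-linkage really is a perfect matching between two paralogous pairs, so that the $S_2$ model applies with no exceptional (e.g.\ telomeric) vertices, together with the edge bookkeeping that pins the two possible loop lengths to exactly $2p$ and $4p$. Once these routine checks are in place, the dichotomy is forced simply because $|\mathbb{Z}/2|=2$.
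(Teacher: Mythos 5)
Your proof is correct and follows essentially the same route as the paper's: the paper first argues that a strand must visit every square before it can close (so every cycle length is a multiple of $2p$) and then invokes the parity of the number of squares resolved with ``crossing'' edges, which is exactly the $\mathbb{Z}/2$ product you compute as monodromy around the $2p$ paralogous pairs. Your formulation is a slightly more convention-free rendering of the same parity argument (the paper anchors it at the all-solid configuration, which yields the two parallel $2p$-cycles, and counts switches relative to that), but the underlying idea is identical.
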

\begin{proof}
Each square is connected twice to each neighbor, via paralogous vertices. For that reason, once we start to walk through any alternating path, we can only close a cycle after visiting each square once. Therefore, the shortest alternating cycle that we can obtain has length $2p$. Indeed, if we select the solid edges of each square, there are two parallel $2p$-cycles, the inner one and the outer one. By switching, i.e., selecting the dotted edges of a square, an even number of times, we obtain two distinct $2p$-cycles, while by switching an odd number of times, we obtain a single $4p$-cycle.%~\qed
\end{proof}

Now let an \emph{open $p$-flower} be a $p$-flower with exactly one black edge removed (see Fig.\;\ref{fig:p-flower} (b-c)). Note that, as a corollary of Lemma~\ref{lemma:p-flower}, if we exclude from the counting the open edges (marked with $\ope{z}$), any solution of an open $p$-flower gives either an open path of length $4p-1$ or a $2p$-cycle and an open path of length $2p-1$.

\subsection{The problem \boldmath$\satvar$-SAT}

Our proof relies on the classical \emph{satisfiability problem} (SAT), the first to be proven NP-complete. Here we give a brief description of the terminology that we use. 
A boolean \emph{variable} $x$ can be either $\mathtt{true}$ ($\mathtt{T}$) or $\mathtt{false}$ ($\mathtt{F}$). The symbol or \emph{negative literal} $\overline{x}$ represents the negation of $x$, while the \emph{positive literal} $x$ represents the affirmation of $x$. A set of variables $\mathcal{X}$ implies a set of literals $\mathcal{L}(\mathcal{X})=\mathcal{X} \cup \overline{\mathcal{X}}$, where $\overline{\mathcal{X}}=\{\overline{x} \mid x \in \mathcal{X}\}$.
A clause $y=(w_1 \vee w_2 \vee \ldots \vee w_t)$, for $w_i$ being an occurrence of a literal from $\mathcal{L}(\mathcal{X})$, is a logical statement where the symbol $\vee$ represents $\mathtt{OR}$. A clause composed of $t$ literal occurrences is called \emph{$t$-clause}. Denote by $|y|$ the cardinality of clause $y$, that is, $|y|=t$. A \emph{formula} $\mathcal{Y} = [ y_1 \wedge y_2 \wedge \ldots \wedge y_m ]$ is a logical statement where the symbol $\wedge$ represents $\mathtt{AND}$. Denote by $|\mathcal{Y}|$ the cardinality of formula $\mathcal{Y}$, that is, $|\mathcal{Y}|=m$; and by $\|\mathcal{Y}\|$ the number of literal occurrences appearing  in $\mathcal{Y}$, that is, $\|\mathcal{Y}\|=\sum_{y \in \mathcal{Y}} |y|$.
Given a pair $(\mathcal{X},\mathcal{Y})$, where $\mathcal{Y}$ is a formula over the set of variables $\mathcal{X}$, SAT is the problem of determining if there exists an assignment of all variables in $\mathcal{X}$ that satisfies all clauses in $\mathcal{Y}$. %This problem is the first to be proven to be NP-complete.

If the instances of SAT are restricted to have only 2- and 3-clauses, the problem is called 3-SAT. Here we consider a variant of 3-SAT, denoted by $\satvar$-SAT, which additionally has the restriction that each variable appears two or three times and each literal appears once or twice. The variant $\satvar$-SAT is also NP-complete~\cite{tovey1984}. We assume that for each variable that occurs three times, the positive literal appears twice and the negative literal appears once. 
(If this was not the case for a variable $x$, we could replace $x$ by $x'$, where $x' = \overline{x}$.) 
The formula $\mathcal{Y} = [ (x_1 \vee x_2 ) \wedge (x_1 \vee x_3) \wedge (\overline{x_1} \vee \overline{x_2} \vee x_4) \wedge (x_2 \vee x_3) \wedge (\overline{x_3} \vee \overline{x_4}) ]$ over the set of variables $\mathcal{X}=\{x_1,x_2,x_3,x_4\}$ is an instance of $\satvar$-SAT with $|\mathcal{X}|=4$, $|\mathcal{Y}|=5$ and $\|\mathcal{Y}\|=11$. Note that $\mathcal{Y}$ can be satisfied by the assignment $A=\{ (x_1 = \mathtt{T}), (x_2 = \mathtt{T}), (x_3 =\mathtt{F}), (x_4 =\mathtt{T}) \}$, among others.

\subsection{Reducing \boldmath$\satvar$-SAT to the \boldmath$\sigma_8$ disambiguation problem of circular genomes}

We are going to provide a polynomial time reduction from $\satvar$-SAT to the $\sigma_8$ disambiguation problem, but restricted to circular genomes. 
Note that, in this case, in the singular circular genome the number of genes $n_\ast$ equals the number of adjacencies $a_\ast$ (each gene has two extremities and each adjacency is between two extremities).
Furthermore, the ambiguous breakpoint graph has no telomeric vertex and any solution is a simple collection of cycles. 

Our reduction starts with a construction that builds an ambiguous breakpoint graph of circular genomes from an instance of the $\satvar$-SAT problem.

\begin{construction}\label{const:3satAmb}
Given an instance $(\mathcal{X},\mathcal{Y})$ of $\satvar$-SAT, create the graph $G_8(\mathcal{X},\mathcal{Y})$ as follows:
\begin{enumerate}
\item (Variable gadget)
  For each variable $x \in \mathcal{X}$, create a gadget $X$, as shown in Fig.\;\ref{fig:gadgets}~(a1-b), composed of six squares connected to (a1-a2) two open 5-flowers if $x$ occurs three times or to (b) three open 5-flowers if $x$ occurs twice.
\item (Literal occurrence gadget)
  For each literal occurrence $w$ in each clause from $\mathcal{Y}$, create a gadget $W$ composed of two squares connected to an open $5$-flower as shown in Fig.\;\ref{fig:gadgets}~(c).
\item (Clause gadget)
  For each clause $y \in \mathcal{Y}$, create a gadget $Y$ composed of six squares connected as shown in Fig.\;\ref{fig:gadgets}~(d) if $y$ is a 2-clause, and~(e) if $y$ is a 3-clause. (The gadget of a 2-clause includes three 5-flowers.)
\item (Connections between the different gadgets)
  In each clause gadget $Y$ (Fig.\;\ref{fig:gadgets}~(d) or (e)), the edges $\ope{W_1}$ represent the first literal occurrence in clause $y$, that is a literal of variable $x$; these edges are merged\footnote{Merging two pairs of open edges means that each edge of one pair is merged to a distinct edge of the other pair, resulting in two edges.} to the edges $\ope{Y}$ in a literal occurrence gadget $W$ (Fig.\;\ref{fig:gadgets}~(c)). 
  If this connection represents a negative literal, merge the pair of edges $\ope{X}$ in  $W$ to the pair of edges $\ope{\mathtt{F}}$ in the gadget $X$ (Fig.\;\ref{fig:gadgets}~(a2) or~(b)) that represents variable $x$.
  Otherwise, if $x$ occurs three times in $\mathcal{Y}$ (gadget of type $X$-$\mathtt{TTF}$),
  merge them either to the pair of edges $\ope{\mathtt{T}_1}$ or to the pair of edges $\ope{\mathtt{T}_2}$ in $X$. And if $x$ occurs twice in $\mathcal{Y}$ (gadget of type $X$-$\mathtt{TF}$),
  merge them to the pair of edges $\ope{\mathtt{T}}$ in $X$.
  Do the same procedure with the pairs of edges $\ope{W_2}$ and $\ope{W_3}$ of $Y$.
\end{enumerate}
\end{construction}

\begin{figure*}[ht]
    \begin{center}

\begin{minipage}{4cm}
    \begin{center}
   \textbf{(a1) $\mathbi{X}$-$\mathtt{TTF}$ full}

   \smallskip
   
    \scalebox{0.35}{
\begin{tikzpicture}[scale=0.8]

    \node at (2,2) {\huge $\Theta_{\mathtt{F}}$};
    \node at (6,2) {\huge $\Theta_{\mathtt{T}}$};
    
    \vcl{(0,3)}{q1v1}{}
	\vcl{(1,4)}{q1v2}{}
    \vcl{(0,5)}{q1v3}{}
	\vcl{(-1,4)}{q1v4}{}
    \node at (0,4) {\LARGE $\mathsf{Q}^{\z\!X}_1$};

    \vcl{(4,3)}{q2v1}{}
	\vcl{(5,4)}{q2v2}{}
    \vcl{(4,5)}{q2v3}{}
	\vcl{(3,4)}{q2v4}{}
    \node at (4,4) {\LARGE $\mathsf{Q}^{\z\!X}_2$};

    \vcl{(8,3)}{q3v1}{}
	\vcl{(9,4)}{q3v2}{}
    \vcl{(8,5)}{q3v3}{}
	\vcl{(7,4)}{q3v4}{}
    \node at (8,4) {\LARGE $\mathsf{Q}^{\z\!X}_3$};
    
    \vcl{(0,-1)}{q4v1}{}
	\vcl{(1,0)}{q4v2}{}
    \vcl{(0,1)}{q4v3}{}
	\vcl{(-1,0)}{q4v4}{}
    \node at (0,0) {\LARGE $\mathsf{Q}^{\z\!X}_4$};
    
    \vcl{(4,-1)}{q5v1}{}
	\vcl{(5,0)}{q5v2}{}
    \vcl{(4,1)}{q5v3}{}
	\vcl{(3,0)}{q5v4}{}
    \node at (4,0) {\LARGE $\mathsf{Q}^{\z\!X}_5$};
    
    \vcl{(8,-1)}{q6v1}{}
	\vcl{(9,0)}{q6v2}{}
    \vcl{(8,1)}{q6v3}{}
	\vcl{(7,0)}{q6v4}{}
    \node at (8,0) {\LARGE $\mathsf{Q}^{\z\!X}_6$};

    % Start edges
    \dted{q1v1}{q1v2}{orange}
    \ed{q1v1}{q1v4}{orange}
    \ed{q1v3}{q1v2}{orange}
    \dted{q1v3}{q1v4}{orange}
    \draw[outedgecolor,ultra thick] (q1v3) -- (0,6) node[anchor=west]{\LARGE $\ope{\mathtt{T}_1}$};
            
    \ed{q2v1}{q2v2}{orange}
    \dted{q2v1}{q2v4}{orange}
    \dted{q2v3}{q2v2}{orange}
    \ed{q2v3}{q2v4}{orange}
    \draw[outedgecolor,ultra thick] (q2v3) -- (4,6) node[anchor=west]{\LARGE $\ope{\mathtt{T}_1}$};

    \dted{q3v1}{q3v2}{orange}
    \ed{q3v1}{q3v4}{orange}
    \ed{q3v3}{q3v2}{orange}
    \dted{q3v3}{q3v4}{orange}
    \draw[outedgecolor,ultra thick] (q3v2) -- (10,4) node[anchor=west]{\LARGE $\ope{\mathtt{F}}$};

    \ed{q4v1}{q4v2}{orange}
    \dted{q4v1}{q4v4}{orange}
    \dted{q4v3}{q4v2}{orange}
    \ed{q4v3}{q4v4}{orange}
    
    \draw[outedgecolor,ultra thick] (q4v1) -- (0,-2) node[anchor=west]{\LARGE $\ope{\mathtt{T}_2}$};

    \dted{q5v1}{q5v2}{orange}
    \ed{q5v1}{q5v4}{orange}
    \ed{q5v3}{q5v2}{orange}
    \dted{q5v3}{q5v4}{orange}
    \draw[outedgecolor,ultra thick] (q5v1) -- (4,-2) node[anchor=west]{\LARGE $\ope{\mathtt{T}_2}$};

    \ed{q6v1}{q6v2}{orange}
    \dted{q6v1}{q6v4}{orange}
    \dted{q6v3}{q6v2}{orange}
    \ed{q6v3}{q6v4}{orange}
    \draw[outedgecolor,ultra thick] (q6v2) -- (10,0) node[anchor=west]{\LARGE $\ope{\mathtt{F}}$};
    
    \ed{q1v2}{q2v4}{intedgecolor}
    \ed{q1v1}{q4v3}{intedgecolor}
    \ed{q2v2}{q3v4}{intedgecolor}
    \ed{q2v1}{q5v3}{intedgecolor}
    \ed{q3v1}{q6v3}{intedgecolor}
    \ed{q4v2}{q5v4}{intedgecolor}
    \ed{q5v2}{q6v4}{intedgecolor}

\def\xR{0.6}; %x radius of ellipse
\def\yR{1.4};  %y radius of ellipse

\fill[gray!20] (-2.5,2.1) ellipse ({\xR} and {\yR}); %overlay with ellipse
\draw[very thick] (-\xR-2.5,2.1) arc[start angle=180,end angle=0,x radius=\xR,y radius=\yR]; %top half
\draw[very thick] (-\xR-2.5,2.1) arc[start angle=180,end angle=360,x radius=\xR,y radius=\yR]; %bottom half

\node at (-2.5,2.1) {\huge $\circledast_5$};

 \tvcl{(-2.5,2.1-\yR-0.1)}{f1b}{}{white}%
     \tvcl{(-2.5,2.1+\yR+0.1)}{f1t}{}{white}%
    
    \node at (-2,-0.2) {\LARGE $\ope{z}$};
    \draw [outedgecolor,ultra thick] (q1v4) -- (f1t); 

    \draw [outedgecolor,ultra thick] (q4v4) -- (f1b); 

    \node at (-2,4.3) {\LARGE $\ope{z}$};

\fill[gray!20] (11.5,2.1) ellipse ({\xR} and {\yR}); %overlay with ellipse
\draw[very thick] (-\xR+11.5,2.1) arc[start angle=180,end angle=0,x radius=\xR,y radius=\yR]; %top half
\draw[very thick] (-\xR+11.5,2.1) arc[start angle=180,end angle=360,x radius=\xR,y radius=\yR]; %bottom half

\node at (11.5,2.1) {\huge $\circledast_5$}; %\circledast_5$};

      \tvcl{(11.5,2.1-\yR-0.1)}{f2b}{}{white}%
     \tvcl{(11.5,2.1+\yR+0.1)}{f2t}{}{white}%
     
    \node at (10.1,-2) {\LARGE $\ope{z}$};
    \path [outedgecolor,ultra thick,bend right=90] (q6v1) edge (f2b); 

    \path [outedgecolor,ultra thick,bend left=90] (q3v3) edge (f2t); 

    \node at (10.1,6.1) {\LARGE $\ope{z}$};

\end{tikzpicture}
}
    \end{center}
\end{minipage}
\hspace{2cm}
\begin{minipage}{3.5cm}
    \begin{center}
    \textbf{(a2) $\mathbi{X}$-$\mathtt{TTF}$}

    \smallskip

    \scalebox{0.35}{\begin{tikzpicture}[scale=0.8]

    \node at (2,2) {\huge $\Theta_{\mathtt{F}}$};
    \node at (6,2) {\huge $\Theta_{\mathtt{T}}$};
    
    \vcl{(0,3)}{q1v1}{}
	\vcl{(1,4)}{q1v2}{}
    \vcl{(0,5)}{q1v3}{}
	\tvcl{(-1,4)}{q1v4}{}{black}
    \node at (0,4) {\LARGE $\mathsf{Q}^{\z\!X}_1$};

    \vcl{(4,3)}{q2v1}{}
	\vcl{(5,4)}{q2v2}{}
    \vcl{(4,5)}{q2v3}{}
	\vcl{(3,4)}{q2v4}{}
    \node at (4,4) {\LARGE $\mathsf{Q}^{\z\!X}_2$};

    \vcl{(8,3)}{q3v1}{}
	\vcl{(9,4)}{q3v2}{}
    \tvcl{(8,5)}{q3v3}{}{black}%{gray!20}
	\vcl{(7,4)}{q3v4}{}
    \node at (8,4) {\LARGE $\mathsf{Q}^{\z\!X}_3$};
    
    \vcl{(0,-1)}{q4v1}{}
	\vcl{(1,0)}{q4v2}{}
    \vcl{(0,1)}{q4v3}{}
	\tvcl{(-1,0)}{q4v4}{}{black}
    \node at (0,0) {\LARGE $\mathsf{Q}^{\z\!X}_4$};
    
    \vcl{(4,-1)}{q5v1}{}
	\vcl{(5,0)}{q5v2}{}
    \vcl{(4,1)}{q5v3}{}
	\vcl{(3,0)}{q5v4}{}
    \node at (4,0) {\LARGE $\mathsf{Q}^{\z\!X}_5$};
    
    \tvcl{(8,-1)}{q6v1}{}{black}%{gray!20}
	\vcl{(9,0)}{q6v2}{}
    \vcl{(8,1)}{q6v3}{}
	\vcl{(7,0)}{q6v4}{}
    \node at (8,0) {\LARGE $\mathsf{Q}^{\z\!X}_6$};

    % Start edges
    \dted{q1v1}{q1v2}{orange}
    \ed{q1v1}{q1v4}{orange}
    \ed{q1v3}{q1v2}{orange}
    \dted{q1v3}{q1v4}{orange}
    \draw[outedgecolor,ultra thick] (q1v3) -- (0,6) node[anchor=west]{\LARGE $\ope{\mathtt{T}_1}$};
            
    \ed{q2v1}{q2v2}{orange}
    \dted{q2v1}{q2v4}{orange}
    \dted{q2v3}{q2v2}{orange}
    \ed{q2v3}{q2v4}{orange}
    \draw[outedgecolor,ultra thick] (q2v3) -- (4,6) node[anchor=west]{\LARGE $\ope{\mathtt{T}_1}$};

    \dted{q3v1}{q3v2}{orange}
    \ed{q3v1}{q3v4}{orange}
    \ed{q3v3}{q3v2}{orange}
    \dted{q3v3}{q3v4}{orange}
    \draw[outedgecolor,ultra thick] (q3v2) -- (10,4) node[anchor=west]{\LARGE $\ope{\mathtt{F}}$};

    \ed{q4v1}{q4v2}{orange}
    \dted{q4v1}{q4v4}{orange}
    \dted{q4v3}{q4v2}{orange}
    \ed{q4v3}{q4v4}{orange}
    \draw[outedgecolor,ultra thick] (q4v1) -- (0,-2) node[anchor=west]{\LARGE $\ope{\mathtt{T}_2}$};

    \dted{q5v1}{q5v2}{orange}
    \ed{q5v1}{q5v4}{orange}
    \ed{q5v3}{q5v2}{orange}
    \dted{q5v3}{q5v4}{orange}
     \draw[outedgecolor,ultra thick] (q5v1) -- (4,-2) node[anchor=west]{\LARGE $\ope{\mathtt{T}_2}$};

    \ed{q6v1}{q6v2}{orange}
    \dted{q6v1}{q6v4}{orange}
    \dted{q6v3}{q6v2}{orange}
    \ed{q6v3}{q6v4}{orange}
    \draw[outedgecolor,ultra thick] (q6v2) -- (10,0) node[anchor=west]{\LARGE $\ope{\mathtt{F}}$};
    
    \ed{q1v2}{q2v4}{intedgecolor}
    \ed{q1v1}{q4v3}{intedgecolor}
    \ed{q2v2}{q3v4}{intedgecolor}
    \ed{q2v1}{q5v3}{intedgecolor}
    \ed{q3v1}{q6v3}{intedgecolor}
    \ed{q4v2}{q5v4}{intedgecolor}
    \ed{q5v2}{q6v4}{intedgecolor}

\end{tikzpicture}
}
    \end{center}
 \end{minipage}
 \hspace{1.5cm}
 \begin{minipage}{4cm}

    \begin{center}
\textbf{(b) $\mathbi{X}$-$\mathtt{TF}$}

    \smallskip

    \scalebox{0.35}{\begin{tikzpicture}[scale=0.8]

    \node at (2,2) {\huge $\Theta_{\mathtt{F}}$};
    \node at (6,2) {\huge $\Theta_{\mathtt{T}}$};
    
    \vcl{(0,3)}{q1v1}{}
	\vcl{(1,4)}{q1v2}{}
    \vcl{(0,5)}{q1v3}{}
	\tvcl{(-1,4)}{q1v4}{}{black}
    \node at (0,4) {\LARGE $\mathsf{Q}^{\z\!X}_1$};

    \vcl{(4,3)}{q2v1}{}
	\vcl{(5,4)}{q2v2}{}
    \vcl{(4,5)}{q2v3}{}
	\vcl{(3,4)}{q2v4}{}
    \node at (4,4) {\LARGE $\mathsf{Q}^{\z\!X}_2$};

    \vcl{(8,3)}{q3v1}{}
	\vcl{(9,4)}{q3v2}{}
    \tvcl{(8,5)}{q3v3}{}{black}%{gray!20}
	\vcl{(7,4)}{q3v4}{}
    \node at (8,4) {\LARGE $\mathsf{Q}^{\z\!X}_3$};
    
    \tvcl{(0,-1)}{q4v1}{}{black}
	\vcl{(1,0)}{q4v2}{}
    \vcl{(0,1)}{q4v3}{}
	\tvcl{(-1,0)}{q4v4}{}{black}
    \node at (0,0) {\LARGE $\mathsf{Q}^{\z\!X}_4$};
    
    \tvcl{(4,-1)}{q5v1}{}{black}
	\vcl{(5,0)}{q5v2}{}
    \vcl{(4,1)}{q5v3}{}
	\vcl{(3,0)}{q5v4}{}
    \node at (4,0) {\LARGE $\mathsf{Q}^{\z\!X}_5$};
    
    \tvcl{(8,-1)}{q6v1}{}{black}%{gray!20}
	\vcl{(9,0)}{q6v2}{}
    \vcl{(8,1)}{q6v3}{}
	\vcl{(7,0)}{q6v4}{}
    \node at (8,0) {\LARGE $\mathsf{Q}^{\z\!X}_6$};

    % Start edges
    \dted{q1v1}{q1v2}{orange}
    \ed{q1v1}{q1v4}{orange}
    \ed{q1v3}{q1v2}{orange}
    \dted{q1v3}{q1v4}{orange}
    \draw[outedgecolor,ultra thick] (q1v3) -- (0,6) node[anchor=west]{\LARGE $\ope{\mathtt{T}}$};
            
    \ed{q2v1}{q2v2}{orange}
    \dted{q2v1}{q2v4}{orange}
    \dted{q2v3}{q2v2}{orange}
    \ed{q2v3}{q2v4}{orange}
    \draw[outedgecolor,ultra thick] (q2v3) -- (4,6) node[anchor=west]{\LARGE $\ope{\mathtt{T}}$};

    \dted{q3v1}{q3v2}{orange}
    \ed{q3v1}{q3v4}{orange}
    \ed{q3v3}{q3v2}{orange}
    \dted{q3v3}{q3v4}{orange}
    \draw[outedgecolor,ultra thick] (q3v2) -- (10,4) node[anchor=west]{\LARGE $\ope{\mathtt{F}}$};

    \ed{q4v1}{q4v2}{orange}
    \dted{q4v1}{q4v4}{orange}
    \dted{q4v3}{q4v2}{orange}
    \ed{q4v3}{q4v4}{orange}

    \dted{q5v1}{q5v2}{orange}
    \ed{q5v1}{q5v4}{orange}
    \ed{q5v3}{q5v2}{orange}
    \dted{q5v3}{q5v4}{orange}

    \ed{q6v1}{q6v2}{orange}
    \dted{q6v1}{q6v4}{orange}
    \dted{q6v3}{q6v2}{orange}
    \ed{q6v3}{q6v4}{orange}
    \draw[outedgecolor,ultra thick] (q6v2) -- (10,0) node[anchor=west]{\LARGE $\ope{\mathtt{F}}$};
    
    \ed{q1v2}{q2v4}{intedgecolor}
    \ed{q1v1}{q4v3}{intedgecolor}
    \ed{q2v2}{q3v4}{intedgecolor}
    \ed{q2v1}{q5v3}{intedgecolor}
    \ed{q3v1}{q6v3}{intedgecolor}
    \ed{q4v2}{q5v4}{intedgecolor}
    \ed{q5v2}{q6v4}{intedgecolor}

\end{tikzpicture}
}
    
   \vspace{3mm}
  \end{center}
 \end{minipage}
 
        \medskip

\begin{minipage}{4cm}
    \begin{center}
      \textbf{(c) $\mathbi{W}$}

      \vspace{5mm}
       
      \scalebox{0.35}{
\begin{tikzpicture}[scale=0.8]

    \vcl{(1,0)}{q1v1}{}
	\tvcl{(2.4,0)}{q1v2}{}{black}%{gray!20}
    \vcl{(2.4,1.4)}{q1v3}{}
	\vcl{(1,1.4)}{q1v4}{}

    \vcl{(1,3)}{q2v1}{}
	\vcl{(2.4,3)}{q2v2}{}
    \tvcl{(2.4,4.4)}{q2v3}{}{black}
	\vcl{(1,4.4)}{q2v4}{}
   
    \dted{q1v1}{q1v2}{orange} 
    \ed{q1v1}{q1v4}{orange}
    \ed{q1v3}{q1v2}{orange}
    \dted{q1v3}{q1v4}{orange}
    \draw[outedgecolor,ultra thick] (q1v1) -- (0,0) node[anchor=east]{\LARGE $\ope{X}$};
    \draw[outedgecolor,ultra thick] (q1v3) -- (3.4,1.4) node[anchor=west]{\LARGE $\ope{Y}$};
            
    \dted{q2v1}{q2v2}{orange}
    \ed{q2v1}{q2v4}{orange}
    \ed{q2v3}{q2v2}{orange}
    \dted{q2v3}{q2v4}{orange}
    \draw[outedgecolor,ultra thick] (q2v4) -- (0,4.4) node[anchor=east]{\LARGE $\ope{X}$};
    \draw[outedgecolor,ultra thick] (q2v2) -- (3.4,3) node[anchor=west]{\LARGE $\ope{Y}$};
            
    \ed{q1v4}{q2v1}{intedgecolor}

\end{tikzpicture}
}

      \vspace{8mm}
    \end{center}  
\end{minipage}
\hspace{2cm}
\begin{minipage}{3.5cm}
    \begin{center}
    \textbf{(d) $2$-\!$\mathbi{Y}$}

    \vspace{5mm}
    
    \hspace{-1.2cm}\scalebox{0.35}{
\begin{tikzpicture}[scale=0.8]

    \node at (2,2) {\huge $\Theta_1$};
    \node at (6,2) {\huge $\Theta_2$};
    %\node at (10,2) {\huge $\Theta_3$};

    \vcl{(0,3)}{q1v1}{}
	\vcl{(1,4)}{q1v2}{}
    \tvcl{(0,5)}{q1v3}{}{black}
	\vcl{(-1,4)}{q1v4}{}
    \node at (0,4) {\LARGE $\mathsf{Q}^{\z\!Y}_1$};
    
    \vcl{(4,3)}{q2v1}{}
	\vcl{(5,4)}{q2v2}{}
    \tvcl{(4,5)}{q2v3}{}{black}
	\vcl{(3,4)}{q2v4}{}
    \node at (4,4) {\LARGE $\mathsf{Q}^{\z\!Y}_2$};
    
    \vcl{(8,3)}{q3v1}{}
	\tvcl{(9,4)}{q3v2}{}{black}
    \tvcl{(8,5)}{q3v3}{}{black}
	\vcl{(7,4)}{q3v4}{}
    \node at (8,4) {\LARGE $\mathsf{Q}^{\z\!Y}_3$};

    \tvcl{(0,-1)}{q4v1}{}{black}
	\vcl{(1,0)}{q4v2}{}
    \vcl{(0,1)}{q4v3}{}
	\vcl{(-1,0)}{q4v4}{}
    \node at (0,0) {\LARGE $\mathsf{Q}^{\z\!Y}_4$};
    
    \vcl{(4,-1)}{q5v1}{}
	\vcl{(5,0)}{q5v2}{}
    \vcl{(4,1)}{q5v3}{}
	\vcl{(3,0)}{q5v4}{}
    \node at (4,0) {\LARGE $\mathsf{Q}^{\z\!Y}_5$};
    
    \vcl{(8,-1)}{q6v1}{}
	\tvcl{(9,0)}{q6v2}{}{black}
    \vcl{(8,1)}{q6v3}{}
	\vcl{(7,0)}{q6v4}{}
    \node at (8,0) {\LARGE $\mathsf{Q}^{\z\!Y}_6$};

    % Start edges
    \ed{q1v1}{q1v2}{orange}
    \dted{q1v1}{q1v4}{orange}
    \dted{q1v3}{q1v2}{orange}
    \ed{q1v3}{q1v4}{orange}
    \draw[outedgecolor,ultra thick] (q1v4) -- (-2,4) node[anchor=east]{\LARGE $\ope{W_1}$};
            
    \dted{q2v1}{q2v2}{orange}
    \ed{q2v1}{q2v4}{orange}
    \ed{q2v3}{q2v2}{orange}
    \dted{q2v3}{q2v4}{orange}

    \ed{q3v1}{q3v2}{orange}
    \dted{q3v1}{q3v4}{orange}
    \dted{q3v3}{q3v2}{orange}
    \ed{q3v3}{q3v4}{orange}

    \dted{q4v1}{q4v2}{orange}
    \ed{q4v1}{q4v4}{orange}
    \ed{q4v3}{q4v2}{orange}
    \dted{q4v3}{q4v4}{orange}
    \draw[outedgecolor,ultra thick] (q4v4) -- (-2,0) node[anchor=east]{\LARGE $\ope{W_1}$};

    \ed{q5v1}{q5v2}{orange}
    \dted{q5v1}{q5v4}{orange}
    \dted{q5v3}{q5v2}{orange}
    \ed{q5v3}{q5v4}{orange}
    \draw[outedgecolor,ultra thick] (q5v1) -- (4,-2) node[anchor=west]{\LARGE $\ope{W_2}$};

    \dted{q6v1}{q6v2}{orange}
    \ed{q6v1}{q6v4}{orange}
    \ed{q6v3}{q6v2}{orange}
    \dted{q6v3}{q6v4}{orange}
    \draw[outedgecolor,ultra thick] (q6v1) -- (8,-2) node[anchor=west]{\LARGE $\ope{W_2}$};
    
    \ed{q1v2}{q2v4}{intedgecolor}
    \ed{q1v1}{q4v3}{intedgecolor}
    \ed{q2v2}{q3v4}{intedgecolor}
    \ed{q2v1}{q5v3}{intedgecolor}
    \ed{q3v1}{q6v3}{intedgecolor}
    \ed{q4v2}{q5v4}{intedgecolor}
    \ed{q5v2}{q6v4}{intedgecolor}

\end{tikzpicture}
}
    \end{center}
 \end{minipage}
 \hspace{1.5cm}
 \begin{minipage}{4cm}
    \begin{center}
    \textbf{(e) $3$-\!$\mathbi{Y}$}\\

     \smallskip
     
    \scalebox{0.35}{
\begin{tikzpicture}[scale=0.8]

    \node at (2,2) {\huge $\Theta_1$};
    \node at (6,2) {\huge $\Theta_2$};
    \node at (10,2) {\huge $\Theta_3$};

    \vcl{(0,3)}{q1v1}{}
	\vcl{(1,4)}{q1v2}{}
    \vcl{(0,5)}{q1v3}{}
	\vcl{(-1,4)}{q1v4}{}
    \node at (0,4) {\LARGE $\mathsf{Q}^{\z\!Y}_1$};
    
    \vcl{(4,3)}{q2v1}{}
	\vcl{(5,4)}{q2v2}{}
    \vcl{(4,5)}{q2v3}{}
	\vcl{(3,4)}{q2v4}{}
    \node at (4,4) {\LARGE $\mathsf{Q}^{\z\!Y}_2$};
    
    \vcl{(8,3)}{q3v1}{}
	\vcl{(9,4)}{q3v2}{}
    \vcl{(8,5)}{q3v3}{}
	\vcl{(7,4)}{q3v4}{}
    \node at (8,4) {\LARGE $\mathsf{Q}^{\z\!Y}_3$};

    \vcl{(0,-1)}{q4v1}{}
	\vcl{(1,0)}{q4v2}{}
    \vcl{(0,1)}{q4v3}{}
	\vcl{(-1,0)}{q4v4}{}
    \node at (0,0) {\LARGE $\mathsf{Q}^{\z\!Y}_4$};
    
    \vcl{(4,-1)}{q5v1}{}
	\vcl{(5,0)}{q5v2}{}
    \vcl{(4,1)}{q5v3}{}
	\vcl{(3,0)}{q5v4}{}
    \node at (4,0) {\LARGE $\mathsf{Q}^{\z\!Y}_5$};
    
    \vcl{(8,-1)}{q6v1}{}
	\vcl{(9,0)}{q6v2}{}
    \vcl{(8,1)}{q6v3}{}
	\vcl{(7,0)}{q6v4}{}
    \node at (8,0) {\LARGE $\mathsf{Q}^{\z\!Y}_6$};

    % Start edges
    \ed{q1v1}{q1v2}{orange}
    \dded{q1v1}{q1v4}{orange}
    \dded{q1v3}{q1v2}{orange}
    \ed{q1v3}{q1v4}{orange}
    \draw[outedgecolor,ultra thick] (q1v3) -- (0,6) node[anchor=west]{\LARGE $\ope{W_1}$};
            
    \dted{q2v1}{q2v2}{orange}
    \ed{q2v1}{q2v4}{orange}
    \ed{q2v3}{q2v2}{orange}
    \dted{q2v3}{q2v4}{orange}
    \draw[outedgecolor,ultra thick] (q2v3) -- (4,6) node[anchor=west]{\LARGE $\ope{W_1}$};

    \dded{q3v1}{q3v2}{orange}
    \dted{q3v1}{q3v4}{orange}
    \dted{q3v3}{q3v2}{orange}
    \dded{q3v3}{q3v4}{orange}
    \draw[outedgecolor,ultra thick] (q3v3) -- (8,6) node[anchor=west]{\LARGE $\ope{W_3}$};

    \dded{q4v1}{q4v2}{orange}
    \ed{q4v1}{q4v4}{orange}
    \ed{q4v3}{q4v2}{orange}
    \dded{q4v3}{q4v4}{orange}
    \draw[outedgecolor,ultra thick] (q4v1) -- (0,-2) node[anchor=west]{\LARGE $\ope{W_3}$};

    \ed{q5v1}{q5v2}{orange}
    \dted{q5v1}{q5v4}{orange}
    \dted{q5v3}{q5v2}{orange}
    \ed{q5v3}{q5v4}{orange}
    \draw[outedgecolor,ultra thick] (q5v1) -- (4,-2) node[anchor=west]{\LARGE $\ope{W_2}$};

    \dted{q6v1}{q6v2}{orange}
    \dded{q6v1}{q6v4}{orange}
    \dded{q6v3}{q6v2}{orange}
    \dted{q6v3}{q6v4}{orange}
    \draw[outedgecolor,ultra thick] (q6v1) -- (8,-2) node[anchor=west]{\LARGE $\ope{W_2}$};
    
    \ed{q1v2}{q2v4}{intedgecolor}
    \ed{q1v1}{q4v3}{intedgecolor}
    \ed{q2v2}{q3v4}{intedgecolor}
    \ed{q2v1}{q5v3}{intedgecolor}
    \ed{q3v1}{q6v3}{intedgecolor}
    \ed{q4v2}{q5v4}{intedgecolor}
    \ed{q5v2}{q6v4}{intedgecolor}

    \draw[intedgecolor,ultra thick] (q3v2) -- (11.5,4) -- (11.5,-2.5) -- (-2,-2.5) -- (-2, 0) -- (q4v4);

    \draw[intedgecolor,ultra thick] (q1v4) -- (-2, 4) -- (-2, 6.5) -- (11,6.5) -- (11,0) -- (q6v2);

\end{tikzpicture}
}
    
    \end{center}
 \end{minipage}

   \end{center}
   
    \caption{\label{fig:gadgets}
      \figcbf{(a1-b)} Each variable gadget $X$ has two competing 8-cycles: $\Theta_{\mathtt{F}}$ is the active 8-cycle when resolving squares $\mathsf{Q}^{\!X}_1$, $\mathsf{Q}^{\!X}_2$, $\mathsf{Q}^{\!X}_4$ and $\mathsf{Q}^{\!X}_5$ using the dotted edges and $\Theta_{\mathtt{T}}$ is the active 8-cycle when resolving squares $\mathsf{Q}^{\!X}_2$, $\mathsf{Q}^{\!X}_3$, $\mathsf{Q}^{\!X}_5$ and $\mathsf{Q}^{\!X}_6$ using the solid edges.
      The gadget has one or two pairs of edges with label $\ope{\mathtt{T}_{(i)}}$, compatible with cycle $\Theta_{\mathtt{T}}$, and a pair of edges with label $\ope{\mathtt{F}}$, compatible with cycle $\Theta_{\mathtt{F}}$. These pairs will be used to connect %bridges to 
      the corresponding literal occurrences in clauses.
      \figcbf{(a1)} Gadget for a variable $x$ that occurs twice as the corresponding positive literal and once as the negative one. The four edges labeled with $\langle z\rangle$ are connected to two open 5-flowers.  
      \figcbf{(a2)} Simplified picture of case (a1), where the open 5-flowers are omitted and the four vertices connected to them are represented in black. 
      \figcbf{(b)} Gadget for a variable $x$ that occurs once as the corresponding positive literal and once as the negative one. The six black vertices are connected to three (omitted) open 5-flowers. 
      \figcbf{(c)} Graph obtained for each literal occurrence to connect the occurrence in a clause to the graph representing the value of the variable.
      The edges with the label $\ope{X}$ connect to the variable gadget while the edges with the label $\ope{Y}$ connect to the clause gadget where the corresponding literal occurrence is.
      The two black vertices are connected to one (omitted) open 5-flower.
      \figcbf{(d)} Graph obtained for each 2-clause.
      In the center we have two competing 8-cycles: $\Theta_1$ is the active 8-cycle when 
      resolving squares $\mathsf{Q}^{\!Y}_1$, $\mathsf{Q}^{\!Y}_2$, $\mathsf{Q}^{\!Y}_4$ and $\mathsf{Q}^{\!Y}_5$ using the solid edges; while $\Theta_2$ is the active 8-cycle when resolving squares $\mathsf{Q}^{\!Y}_2$, $\mathsf{Q}^{\!Y}_3$, $\mathsf{Q}^{\!Y}_5$ and $\mathsf{Q}^{\!Y}_6$ using the dotted edges.
      The pairs of edges with the same label $\ope{W_1}$ or $\ope{W_2}$ represent the pairs to connect the  corresponding literal gadgets. 
      The six black vertices are connected to three (omitted) open 5-flowers.
      \figcbf{(e)} Graph obtained for each 3-clause.
      In the center we have three competing 8-cycles: in addition to $\Theta_1$ and $\Theta_2$ described in case (d), here $\Theta_3$ is the active 8-cycle when 
      resolving squares $\mathsf{Q}^{\!Y}_1$, $\mathsf{Q}^{\!Y}_3$, $\mathsf{Q}^{\!Y}_4$ and $\mathsf{Q}^{\!Y}_6$ using the dashed edges. 
      The pairs of edges with the same label $\ope{W_1}$, $\ope{W_2}$ or $\ope{W_3}$ represent the pairs to connect the corresponding literal gadgets. 
    }
\end{figure*}

\noindent
For $\mathcal{X}=\{x_1, x_2, x_3, x_4\}$ and 
$\mathcal{Y} = [ (x_1 \vee x_2 ) \wedge (x_1 \vee x_3) \wedge (\overline{x_1} \vee \overline{x_2} \vee x_4) \wedge (x_2 \vee x_3) \wedge (\overline{x_3} \vee \overline{x_4}) ]$, the graph $G_8(\mathcal{X},\mathcal{Y})$
%obtained with Construction~\ref{const:3satAmb}, 
is shown in Fig.~\ref{fig:connstexample}~(a).
The numbers of squares per variable, clause and literal occurrence are clearly bounded by constant values, therefore Construction~\ref{const:3satAmb} is polynomial in $\|\mathcal{Y}\|$.

\begin{figure*}[ht!]
  \begin{center}
  \begin{minipage}{5cm}
    \centering
    \textbf{(a)}\\[2ex]
    \input{Fig-example}
  \end{minipage}
  \hspace{30mm}  
  \begin{minipage}{5cm}
    \centering
    \textbf{(b)}\\[2ex]
    \input{Fig-completed}
  \end{minipage}
  \end{center}
  \caption{\label{fig:connstexample}
    Consider the $\satvar$-SAT formula $\mathcal{Y} = [ (x_1 \vee x_2 ) \wedge (x_1 \vee x_3) \wedge (\overline{x_1} \vee \overline{x_2} \vee x_4) \wedge (x_2 \vee x_3) \wedge (\overline{x_3} \vee \overline{x_4}) ]$, over $\mathcal{X}=\{x_1,x_2,x_3,x_4\}$. %The formula $\mathcal{Y}$ has 5 clauses and 11 literal occurrences 
    Note that $|\mathcal{X}|=4$, $|\mathcal{Y}|=5$ and $\|\mathcal{Y}\|=11$.
    \figcbf{(a)} Graph $G_8(\mathcal{X},\mathcal{Y})$, %described as follows.
    obtained with Construction~\ref{const:3satAmb}. %to $\mathcal{Y}$. 
    The variable gadgets are on the left, and their labels make it explicit to which variable each gadget refers.
    The clause gadgets are on the right, and again their labels make it explicit to which clause each gadget refers.
    The connecting black edges referring to each variable and its literals are as follows: for $x_1$ they are dotted, for $x_2$ they are solid, for $x_3$ they are dashed and for $x_4$ they are dash-dotted. This facilitates the visualization of the literal connecting gadgets in the center.
    All open 5-flowers are omitted and all the vertices connected to them are represented in black. 
    (Once the parity is fulfilled, it does not matter which pair of black vertices of the graph are selected to complete each flower.) 
    \figcbf{(b)} The truth assignment $A=\{ (x_1=\mathtt{T}),(x_2=\mathtt{T}),(x_3=\mathtt{F}),(x_4=\mathtt{T})\}$ gives an optimal solution of $G_8(\mathcal{X},\mathcal{Y})$ whose 8-score is $|\mathcal{X}|+|\mathcal{Y}|+\|\mathcal{Y}\|=4+5+11=20$. The edges of 8-cycles are highlighted, while the edges that are in longer cycles (completing open 5-flowers) are shaded. Here we chose $x_2$ as the literal responsible for satisfying the first clause of $\mathcal{Y}$, but choosing $x_1$ instead would be a valid alternative that gives another optimal solution of $G_8(\mathcal{X},\mathcal{Y})$ (see Fig.\;\ref{fig:connstexample-alt}~(a)). % in the appendix).
  }  
\end{figure*}

\begin{figure*}
  \begin{center}

   \begin{minipage}{5cm}
   \centering
   \textbf{(a)}\\[1ex]
 \input{Fig-completed-alt}
  \end{minipage}
  \hspace{30mm}  
   \begin{minipage}{5cm}
   \centering
   \textbf{(b)}\\[1ex]
 \input{Fig-completed-no-A}
  \end{minipage}

  \end{center}
  
  \caption{\label{fig:connstexample-alt}
    \figcbf{(a)} Alternative optimal solution for the truth assignment $A=\{ (x_1=\mathtt{T}),(x_2=\mathtt{T}),(x_3=\mathtt{F}),(x_4=\mathtt{T})\}$ for the example given in Fig.\;\ref{fig:connstexample}. Here, instead of $x_2$, we chose $x_1$ as the literal responsible for satisfying the first clause of $\mathcal{Y}$, obtaining 20 8-cycles.
    \figcbf{(b)} Assignment $A'=\{ (x_1=\mathtt{T}),(x_2=\mathtt{F}),(x_3=\mathtt{F}),(x_4=\mathtt{T})\}$ for the same example does not satisfy the fourth clause of $\mathcal{Y}$. By fixing $A'$ in $G_8(\mathcal{X},\mathcal{Y})$, the best solution has 19 8-cycles.%$|\mathcal{X}|+|\mathcal{Y}|+\|\mathcal{Y}\|-1=4+5+11-1=19$.
  }  
\end{figure*}

One crucial feature of the design of Construction~\ref{const:3satAmb} is that it avoids alternating cycles of lengths up to 6.
Let $G_\bullet(\mathcal{X},\mathcal{Y})$ be the graph obtained from $G_8(\mathcal{X},\mathcal{Y})$ by removing all 5-flowers.

\begin{proposition}\label{prop:no-cycle-up-to-6}
The graph $G_\bullet(\mathcal{X},\mathcal{Y})$, and consequently also $G_8(\mathcal{X},\mathcal{Y})$, does not contain as an induced subgraph any alternating cycle whose length is smaller than 8.
\end{proposition}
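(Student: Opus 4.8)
The plan is to pass from alternating cycles in $G_\bullet(\mathcal{X},\mathcal{Y})$ to closed walks in an auxiliary multigraph $H$ whose vertices are the squares of the construction and whose edges are the $\check{\mathbb{D}}$-edges, each edge of $H$ joining the two squares that contain the endpoints of the corresponding connecting edge. First I would record two structural facts that are immediate from Construction~\ref{const:3satAmb}: every $\check{\mathbb{D}}$-edge of $G_\bullet$ joins two \emph{distinct} squares (no connecting edge runs between two extremities of the same square), and, once the $5$-flowers are removed, every vertex carries at most one $\check{\mathbb{D}}$-edge (the flower-attached, degree-two vertices cannot lie on an alternating cycle at all). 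Since an alternating cycle strictly alternates between (square) $\mathbb{S}$-edges and (connecting) $\check{\mathbb{D}}$-edges, it projects onto a closed walk in $H$ in which no two cyclically consecutive squares coincide.

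Next I would translate the length bound. An alternating cycle of length $2\ell$ uses $\ell$ $\mathbb{S}$-edges and $\ell$ $\check{\mathbb{D}}$-edges, hence projects to a closed walk of length $\ell$ in $H$ with distinct consecutive squares. For $\ell=1$ this is impossible, as the single $\check{\mathbb{D}}$-edge would join a square to itself. For $\ell=2$ the two squares $Q^{(1)},Q^{(2)}$ are forced to be distinct yet joined by two distinct $\check{\mathbb{D}}$-edges, i.e.\ $H$ would contain a multi-edge. For $\ell=3$ all three squares are pairwise cyclically consecutive, hence pairwise distinct and pairwise adjacent, i.e.\ $H$ would contain a triangle. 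Therefore it suffices to show that $H$ is \emph{simple} and \emph{triangle-free}; this simultaneously excludes alternating cycles of lengths $2$, $4$ and $6$.

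I would establish both properties by a modular inspection of the gadgets. Inside a single gadget the square-adjacency graph is bipartite and simple: for a variable gadget it is the $2\times 3$ grid formed by the internal (gray) edges; for a clause gadget it is the complete bipartite graph $K_{3,3}$ between $\{\mathsf{Q}^Y_1,\mathsf{Q}^Y_3,\mathsf{Q}^Y_5\}$ and $\{\mathsf{Q}^Y_2,\mathsf{Q}^Y_4,\mathsf{Q}^Y_6\}$, the wrap-around edges $\mathsf{Q}^Y_3\mathsf{Q}^Y_4$ and $\mathsf{Q}^Y_1\mathsf{Q}^Y_6$ of a $3$-clause included; and a literal gadget contributes only the single edge $\mathsf{W}_1\mathsf{W}_2$. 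None of these contains an odd cycle, in particular no triangle. For the edges crossing gadget boundaries I would track exactly which squares the merged pairs $\ope{X},\ope{Y},\ope{T_i},\ope{F},\ope{W_i}$ identify, and observe that (i) each such pair splits into two edges landing on two \emph{different} squares on each side, so no two squares are doubly linked (no multi-edge is created), and (ii) across boundaries a variable-gadget square meets only a literal-gadget square and a clause-gadget square meets only a literal-gadget square, while each literal square sends its single $\ope{X}$ edge and its single $\ope{Y}$ edge to two \emph{distinct} squares. Hence any triangle would have to lie inside a single gadget, which is impossible by bipartiteness. This proves $H$ simple and triangle-free, and with the previous paragraph yields the claim for $G_\bullet$.

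Finally, for $G_8$ I would note that any alternating cycle meeting an open $5$-flower must enter and leave it through its two open ends, so by the corollary of Lemma~\ref{lemma:p-flower} its portion inside the flower alone has length at least $2\cdot 5-1=9>6$; thus every alternating cycle of length less than $8$ avoids all flowers, lives in $G_\bullet$, and therefore does not exist. The main obstacle is the cross-gadget bookkeeping in the third step: the bipartiteness of each individual gadget is clean, but one must verify carefully --- case by case over positive/negative literals and over $2$- and $3$-clauses --- that the merging rules of Construction~\ref{const:3satAmb} never identify two connecting edges with the same pair of squares and never close a triangle that spans two or three gadgets.
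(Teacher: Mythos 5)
Your proposal is correct and takes essentially the same route as the paper's proof: the loop, multi-edge and triangle conditions on your square-adjacency graph $H$ are exactly the paper's three observations (no black edge within a single square rules out 2-cycles, no pair of squares doubly connected rules out 4-cycles, no triple of pairwise connected squares rules out 6-cycles), and your treatment of the flowers via the corollary of Lemma~\ref{lemma:p-flower} matches the paper's closing remark that open 5-flowers only admit cycles of length at least 10. Your gadget-by-gadget verification (the grid and $K_{3,3}$ bipartiteness, and the bookkeeping for the merged $\ope{X}$, $\ope{Y}$, $\ope{\mathtt{T}_{(i)}}$, $\ope{\mathtt{F}}$, $\ope{W_i}$ pairs) simply makes explicit what the paper leaves to direct inspection of Construction~\ref{const:3satAmb}.
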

\begin{proof}
Since $G_\bullet$ has no black edge directly connecting vertices of the same square, no 2-cycle is possible. Furthermore, since $G_\bullet$ has no pair of squares connected to each other by two or more black edges, no 4-cycle is possible. Finally, since $G_\bullet$ has no set of three squares such that there are black edges connecting each pair of them, no 6-cycle is possible.
Recall that open 5-flowers only admit alternating cycles of length at least 10, therefore the observations above hold for $G_8(\mathcal{X},\mathcal{Y})$ as well.
\end{proof}

The graph $G_8(\mathcal{X},\mathcal{Y})$ consists of orange squares and black connections, such that every vertex has degree three, which are the properties of an ambiguous breakpoint graph of circular genomes.

\begin{proposition}\label{prop:ABG-circ}
There exists a pair of $\cogsd$-cognate circular genomes $\mathbb{S}$ and $\mathbb{D}$ and a genome $\check{\mathbb{D}} \in \mathfrak{S}^\mathtt{a}_\mathtt{b}(\mathbb{D})$, such that  $G_8(\mathcal{X},\mathcal{Y})=ABG(\mathbb{S},\check{\mathbb{D}})$.
\end{proposition}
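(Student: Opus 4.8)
The plan is to reverse-engineer a suitable pair of circular genomes directly from the combinatorial structure of $G_8(\mathcal{X},\mathcal{Y})$, exploiting the fact that an ambiguous breakpoint graph of circular genomes is characterised by exactly the features we already have in hand: a vertex-disjoint collection of \emph{squares} (the orange $\mathbb{S}$-edges), each carrying a distinguished pair of diagonals (its two paralogous pairs), together with a perfect matching of black $\check{\mathbb{D}}$-edges, every vertex having degree three. Thus the whole task reduces to attaching gene names, head/tail roles, and $\mathtt{a}/\mathtt{b}$ indices to the $4n_\ast$ vertices so that the orange edges spell out the adjacencies of a singular circular genome $\mathbb{S}$ while the black edges spell out the adjacencies of a singularization $\check{\mathbb{D}}$ of a duplicated circular genome $\mathbb{D}$ cognate to $\mathbb{S}$.

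First I would record the structure precisely: recall that each vertex meets two orange edges and one black edge, so the orange edges decompose the $4n_\ast$ vertices into $n_\ast$ vertex-disjoint squares and the black edges form a perfect matching $M$. A square, with its distinguished bipartition into the two paralogous (diagonal) pairs, is precisely a $K_{2,2}$, whose two perfect matchings are exactly $\mathcal{E}$ and $\tilde{\mathcal{E}}$. Next, to each square I assign two fresh extremity symbols, one per diagonal; since the two diagonals are the two paralogous pairs, the four vertices receive labels $\beta_\mathtt{a},\beta_\mathtt{b}$ and $\gamma_\mathtt{a},\gamma_\mathtt{b}$. The only local choice is which endpoint of each diagonal gets index $\mathtt{a}$; I would fix it so that the solid edge pair of the square becomes the same-index matching $\mathcal{E}(\beta\gamma)$ and the dotted pair becomes $\tilde{\mathcal{E}}(\beta\gamma)$ — always achievable, since the two matchings of $K_{2,2}$ are interchanged by relabelling one diagonal. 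Finally I would pair the $2n_\ast$ extremities arbitrarily into $n_\ast$ genes, declaring one extremity of each pair the head and the other the tail, which fixes the gene set $\mathcal{I}$.

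With the labeling in place, $\mathbb{S}$ is read off by letting each square be an adjacency between its two diagonal-extremities and each gene contribute its internal head--tail link; since every extremity lies in exactly one square and one gene link, the resulting extremity graph has all degrees two and closes into circular chromosomes, so $\mathbb{S}$ is a singular circular genome with $a_\ast=n_\ast$ adjacencies and no telomeres, matching the absence of telomeric vertices. Analogously, $\check{\mathbb{D}}$ is read off from $M$ (its adjacencies) together with the per-copy internal links $\mathtt{X}^t_\mathtt{c}\!-\!\mathtt{X}^h_\mathtt{c}$ for $\mathtt{c}\in\{\mathtt{a},\mathtt{b}\}$; again every indexed vertex has exactly one internal link and one $M$-edge, so $\check{\mathbb{D}}$ is circular, and dropping the indices yields a duplicated genome $\mathbb{D}$ (each gene twice) cognate to $\mathbb{S}$, with $\check{\mathbb{D}}\in\mathfrak{S}^\mathtt{a}_\mathtt{b}(\mathbb{D})$. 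It then remains to confirm $ABG(\mathbb{S},\check{\mathbb{D}})=G_8(\mathcal{X},\mathcal{Y})$: its vertex set is the four copies of the extremities of each gene, its $\mathbb{S}$-edges are the squares $\mathcal{Q}(\beta\gamma)$ of the adjacencies of $\mathbb{S}$ — exactly the orange edges by our index choice — and its $\check{\mathbb{D}}$-edges are exactly $M$.

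I expect the only genuinely delicate point to be the per-square index assignment: one must check that the solid/dotted partition drawn in each gadget really is the $K_{2,2}$ matching partition, so that it can legitimately play the roles of $\mathcal{E}$ and $\tilde{\mathcal{E}}$, and that these local choices never conflict — which holds precisely because the squares are vertex-disjoint, making the choices independent. The remaining verifications (that the extremity graph and the per-copy link graph each close into cycles, hence valid circular genomes, and that $\mathbb{D}$ comes out duplicated and cognate) are immediate degree counts once the labeling is fixed. In particular no condition need be imposed on the black matching beyond being a perfect matching, so the arbitrary gene pairing creates no obstruction.
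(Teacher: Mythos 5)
Your proposal is correct and follows essentially the same route as the paper's proof: label the four vertices of each orange square with indexed extremities $\beta_\mathtt{a},\beta_\mathtt{b},\gamma_\mathtt{a},\gamma_\mathtt{b}$, read the squares as the adjacencies of $\mathbb{S}$ and the black perfect matching as the adjacencies of $\check{\mathbb{D}}$, and obtain $\mathbb{D}$ by dropping the indices. The only differences are cosmetic -- you assign extremity symbols square-by-square and then pair them into genes, whereas the paper fixes the gene set first and assigns an arbitrary perfect matching of extremities to the squares -- and you make explicit the degree-count verification of circularity that the paper leaves implicit.
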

\begin{proof}
Let $n_\ast$ be the number of squares in $G_8(\mathcal{X},\mathcal{Y}) = (V,E)$. (The number of squares is equivalent to the number of adjacencies and to the number of genes in any singular circular genome that is derived from the graph.)
Clearly, $|V| = 4n_\ast$.
The set of edges $E$ can be partitioned into two sets: $E_{\mathbb{S}}$, containing all edges that originate from the orange squares, and $E_{\mathbb{D}}$, containing the black connections.
Now let the set of gene identifiers be $I=\{\mathtt{1,2,3},\ldots,n_\ast\}$.
The set of extremities is then $\Gamma= \bigcup_{\mathtt{X} \in I} \{\mathtt{X}^h,\mathtt{X}^t\}$. Note that $|\Gamma|=2n_\ast$ is even and let $A$ be any perfect matching of $\Gamma$. 
Since $|A|=\frac{|\Gamma|}{2}=n_\ast$, we can take the set $A$ as the adjacencies of the singular genome $\mathbb{S}$.
We then assign each pair $\beta\gamma \in A$ to a distinct square $Q=\mathcal{Q}(\beta\gamma)$ of $G$ by labeling the vertices of $Q$ as follows: take any edge of $Q$ and call its extremities $u$ and $v$;
%call $\hat{u}$ the vertex of $Q$ that is not connected to $u$ in $Q$ and $\hat{v}$ the vertex of $Q$ that is not connected to $v$ in $Q$; 
then assign $u \rightarrow \beta_\mathtt{a}$, $\hat{u} \rightarrow \beta_\mathtt{b}$, $v \rightarrow \gamma_\mathtt{a}$, $\hat{v} \rightarrow \gamma_\mathtt{b}$.
The adjacencies of $\check{\mathbb{D}}$ are then the edge set $E_{\mathbb{D}}$; and the duplicated genome $\mathbb{D}$ can be obtained from $\check{\mathbb{D}}$ by simply ignoring the singularization indices. 
\end{proof}

This guarantees that solving the $\sigma_8$ disambiguation or finding an 8-maximum solution of a graph $G_8(\mathcal{X},\mathcal{Y})$ is equivalent to solving the $\sigma_8$ double distance of the corresponding circular genomes $\mathbb{S}$ and $\mathbb{D}$.
Now we will determine the relation between the solution of the $\satvar$-SAT instance $(\mathcal{X},\mathcal{Y})$ and an 8-maximum solution of $G_8(\mathcal{X},\mathcal{Y})$.

\begin{lemma}\label{lmm:iff-circ}
For any 8-maximum solution $\tau$ of graph $G_8(\mathcal{X},\mathcal{Y})$, it holds that:
(i) $\textup{s}_8(\tau) \leq |\mathcal{X}|+|\mathcal{Y}|+\|\mathcal{Y}\|$;
and (ii) formula $\mathcal{Y}$ has a satisfying assignment if and only if $\textup{s}_8(\tau) = |\mathcal{X}|+|\mathcal{Y}|+\|\mathcal{Y}\|$.
\end{lemma}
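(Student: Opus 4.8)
The plan is to start from the observation that, for the circular genomes produced by Construction~\ref{const:3satAmb}, the breakpoint graph $BG(\tau,\check{\mathbb{D}})$ induced by any solution $\tau$ has no telomeres and, by Proposition~\ref{prop:no-cycle-up-to-6}, no cycle of length smaller than $8$. Hence $\textup{s}_8(\tau)=\sigma_8=c_2+c_4+c_6+c_8$ collapses to $c_8$, the number of $8$-cycles. Both parts of the lemma therefore reduce to controlling how many $8$-cycles a resolution can realize, which is exactly the quantity the gadgets of Fig.~\ref{fig:gadgets} are engineered to meter.

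For the upper bound (i), I would classify the $8$-cycles. An $8$-cycle alternates four square ($\mathbb{S}$-)edges with four connecting ($\check{\mathbb{D}}$-)edges, and by the corollary of Lemma~\ref{lemma:p-flower} the open $5$-flowers lie only on cycles of length $\ge 10$, so no $8$-cycle meets a flower. One then charges each $8$-cycle to a gadget: the $\Theta$-cycles are charged to their variable or clause gadget, and any other $8$-cycle necessarily threads a literal-occurrence gadget through its connecting edges and is charged there. The core of the bound is three local exclusivity facts: (a) in a variable gadget $\Theta_{\mathtt{T}}$ and $\Theta_{\mathtt{F}}$ both require squares $\mathsf{Q}^{X}_2,\mathsf{Q}^{X}_5$ but with conflicting (solid vs.\ dotted) resolutions, so at most one is present; (b) in a clause gadget $\Theta_1,\Theta_2,\Theta_3$ pairwise share a square with conflicting resolutions, so at most one is present; and (c) the two squares and the single internal black edge of a literal-occurrence gadget can lie on at most one common $8$-cycle. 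Summing these bounds over the $|\mathcal{X}|$ variable, $|\mathcal{Y}|$ clause, and $\|\mathcal{Y}\|$ literal gadgets yields $\textup{s}_8(\tau)\le|\mathcal{X}|+|\mathcal{Y}|+\|\mathcal{Y}\|$.

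For the forward direction of (ii), given a satisfying assignment $A$ I would exhibit a resolution meeting the bound. Resolve each variable gadget to realize $\Theta_{\mathtt{T}}$ when $A$ sets the variable true and $\Theta_{\mathtt{F}}$ otherwise (one $8$-cycle per variable); resolve each literal-occurrence gadget so that it closes its own $8$-cycle (one per occurrence); and, in each clause, route the clause $8$-cycle $\Theta_i$ through a literal occurrence that $A$ makes true, which is available precisely because the truth configuration of that variable leaves the matching $\ope{\mathtt{T}_{(i)}}$ or $\ope{\mathtt{F}}$ edges free. The remaining vertices, including all those feeding the open $5$-flowers, are swept into long cycles; here I would invoke the parity remark (each flower receives an even number of free endpoints) to guarantee that every flower can be completed via Lemma~\ref{lemma:p-flower}. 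This produces exactly $|\mathcal{X}|+|\mathcal{Y}|+\|\mathcal{Y}\|$ $8$-cycles, so the $8$-maximum $\tau$ attains the bound.

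The converse is the crux and the step I expect to be the main obstacle. If $\textup{s}_8(\tau)$ equals the bound, then by the accounting in (i) equality must hold in \emph{every} gadget simultaneously: each variable gadget realizes exactly one of $\Theta_{\mathtt{T}},\Theta_{\mathtt{F}}$, which defines an assignment $A$; each literal-occurrence gadget closes its $8$-cycle; and each clause gadget realizes one $\Theta_i$. The delicate point is to prove that this forces $A$ to satisfy $\mathcal{Y}$, i.e.\ that a clause $8$-cycle $\Theta_i$ can coexist with the $8$-cycle of its literal-occurrence gadget and with the chosen $\Theta$ of the attached variable \emph{only} when that literal evaluates to true under $A$. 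Establishing this requires tracing the forced edge choices across the merged connections $\ope{X}$ and $\ope{Y}$ and the shared squares, and showing that a literal false under $A$ would block one of the three required $8$-cycles, costing the solution a cycle and contradicting optimality. This propagation-of-truth argument, carried out separately for positive occurrences (connected via $\ope{\mathtt{T}_1}/\ope{\mathtt{T}_2}/\ope{\mathtt{T}}$) and negative ones (via $\ope{\mathtt{F}}$), is where the detailed combinatorial case analysis of the construction must be performed.
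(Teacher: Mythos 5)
Your proposal follows essentially the same route as the paper's proof: the reduction of $\textup{s}_8$ to the count of $8$-cycles via Proposition~\ref{prop:no-cycle-up-to-6}, the per-gadget charging argument for the upper bound (i) (one $\Theta$-cycle per variable gadget, one per clause gadget, one cycle per literal-occurrence gadget, with flowers excluded since open $5$-flowers only admit cycles of length $\geq 10$), and the explicit construction of a $(|\mathcal{X}|+|\mathcal{Y}|+\|\mathcal{Y}\|)$-cycle solution from a satisfying assignment are all exactly the paper's argument, including the parity remark for completing the flowers. The only place you stop short is the converse of (ii), which you correctly identify as the crux but leave as a ``propagation-of-truth'' case analysis still to be performed. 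In the paper this step is dispatched more quickly than you anticipate, and it is worth seeing why: once equality forces one $8$-cycle per gadget, each $W$ gadget's $8$-cycle must close on exactly one side, either through two squares of its clause gadget (via $\ope{Y}$/$\ope{W_i}$) or through two squares of its variable gadget (via $\ope{X}$ and $\ope{\mathtt{T}_{(i)}}$ or $\ope{\mathtt{F}}$). The clause gadget's active cycle $\Theta_i$ consumes precisely the squares needed for the clause-side $3$-path of the $i$-th literal's $\ope{W_i}$ connection, so that occurrence's $W$ gadget is forced to close variable-side; and the variable-side closure is available only when the variable gadget's active cycle ($\Theta_{\mathtt{T}}$ or $\Theta_{\mathtt{F}}$) is the one compatible with the sign of that literal, since the squares carrying the $\ope{\mathtt{T}_{(i)}}$ (resp.\ $\ope{\mathtt{F}}$) connections admit the required $3$-path exactly when the matching $\Theta$ is induced. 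This two-step forcing (clause cycle blocks clause-side closure; variable-side closure certifies the truth value) is the entire compatibility argument, carried once per clause rather than as a large case distinction over all gadget interactions; with that observation inserted, your sketch becomes a complete proof identical in substance to the paper's.
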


\begin{proof}
The smallest cycle in $G_8(\mathcal{X},\mathcal{Y})$ has length 8 (Proposition~\ref{prop:no-cycle-up-to-6}), therefore the 8-score of any solution simply corresponds to its number of 8-cycles.
An 8-maximum solution of $G_\bullet(\mathcal{X},\mathcal{Y})$ is also an 8-maximum solution of $G_8(\mathcal{X},\mathcal{Y})$.

\smallskip

\noindent (i) The first statement can be easily verified as follows.
For each variable gadget $X$ (Fig.\;\ref{fig:gadgets}~(a1-a2) and~(b)) only one of the two competing 8-cycles can be induced.
For each clause gadget $Y$ (Fig.\;\ref{fig:gadgets}~(d) and~(e)) only one of the two or three competing 8-cycles can be induced.  
Each literal occurrence gadget $W$ (Fig.\;\ref{fig:gadgets}~(c)) can either form an 8-cycle with a clause gadget $Y$, or an 8-cycle with a variable gadget $X$. 
Therefore, together all the gadgets give at most $|\mathcal{X}|+|\mathcal{Y}|+\|\mathcal{Y}\|$ 8-cycles.

\medskip

\noindent (ii) For proving the second statement, we first show that from a satisfying assignment $A$ of $\mathcal{Y}$ we can obtain a solution $\tau$ of $G_\bullet(\mathcal{X},\mathcal{Y})$ composed of $|\mathcal{X}|+|\mathcal{Y}|+\|\mathcal{Y}\|$ $8$-cycles.

\begin{itemize}
  \item
    For each variable $x$: if $x=\mathtt{T}$ in $A$, select the edges inducing 8-cycle $\Theta_{\mathtt{T}}$ in the respective gadget $X$ (Fig.\;\ref{fig:gadgets}~(a1-a2) and~(b)). 
    Otherwise, if $x=\mathtt{F}$ in $A$, select the edges inducing 8-cycle $\Theta_{\mathtt{F}}$.
    In any case, each variable gadget will give a single $8$-cycle, therefore in total we increase the score by~$|\mathcal{X}|$.
    
    Furthermore, if cycle $\Theta_\mathtt{T}$ is induced, the edges of the remaining squares must induce path(s) of length 3 connecting $\ope{\mathtt{T}_{(i)}}$ to $\ope{\mathtt{T}_{(i)}}$ allowing the formation of 8-cycle(s) with the corresponding literal gadget(s); if cycle $\Theta_\mathtt{F}$ is induced, the edges of the remaining squares must induce a path of length 3 connecting $\ope{\mathtt{F}}$ to $\ope{\mathtt{F}}$ allowing the formation of an 8-cycle with the corresponding literal gadget. How these connections are done is explained in the third item below.
  \item
    For each clause $y$: pick one literal $w$ of clause $y$ that is satisfied by $A$ (there can be multiple choices). If $w$ is the first literal, select the edges inducing 8-cycle $\Theta_1$ in clause gadget $Y$ (Fig.\;\ref{fig:gadgets}~(d) or (e)). %\JS{There might be multiple options:} 
    Otherwise, if $w$ is the second literal, select the edges inducing 8-cycle $\Theta_2$. Finally, if $Y$ is a 3-clause and $w$ is its third literal, select the edges inducing 8-cycle $\Theta_3$.
    In any case, each clause gadget will give a single $8$-cycle, therefore in total we increase the score by~$|\mathcal{Y}|$.

    The two remaining squares should be resolved such that they induce path(s) of length 3 connecting $\ope{W_i}$ to $\ope{W_i}$, for each (not induced) cycle $\Theta_i$ corresponding to a literal that is not responsible for satisfying clause $y$.
    These 3-paths allow the formation of 8-cycles with the corresponding literal gadget(s).
    These connections are described in the next item.
  \item
    For each literal occurrence $w$: if $w$ is responsible to satisfy its clause $y$, select in the corresponding gadget~$W$ (Fig.\;\ref{fig:gadgets}~(c)) the edges inducing a path of length 3 connecting $\ope{X}$ to $\ope{X}$.
    This path induces an 8-cycle in the connection between $W$ and $X$ because cycle $\Theta_\mathtt{T}$ in $X$ is compatible with outer connections $\ope{\mathtt{T}_{(i)}}$, while cycle $\Theta_\mathtt{F}$ in $X$ is compatible with outer connections $\ope{\mathtt{F}}$.
    If $w$ is not responsible for satisfying $y$, select the edges inducing a path of length 3 connecting $\ope{Y}$ to $\ope{Y}$. This path induces an 8-cycle in the connection between $W$ and $Y$ because the cycle corresponding to $w$ in $Y$ is not induced.
    In this way, each literal occurrence gadget gives one $8$-cycle. The score is then increased by~$\|\mathcal{Y}\|$.
\end{itemize}

\noindent This procedure gives exactly $|\mathcal{X}|+|\mathcal{Y}|+\|\mathcal{Y}\|$ 8-cycles, and the remaining unresolved squares can be resolved arbitrarily. See examples in Fig.\;\ref{fig:connstexample}~(b) and Fig.\;\ref{fig:connstexample-alt}~(a). 

\smallskip 

Now, let $\tau$ be an 8-maximum solution of $G_\bullet(\mathcal{X},\mathcal{Y})$. If $\textup{s}_8(\tau)=|\mathcal{X}|+|\mathcal{Y}|+\|\mathcal{Y}\|$, since this is the maximum number of cycles possible, $\tau$ must have an 8-cycle per gadget $W$, and this covers the possible cycles that can be obtained with the outer connections of all gadgets. Furthermore, $\tau$ must include one of the two competing cycles $\Theta_\mathtt{T}$ or $\Theta_\mathtt{F}$ of each variable gadget and one of the two or three competing cycles $\Theta_1$ or $\Theta_2$ (or $\Theta_3$) of each clause gadget. The included $\Theta_*$ cycles then determine the value of each variable, giving an assignment $A$ of $\mathcal{X}$, and also determine, for each clause, one literal that satisfies it. The fact that each gadget $W$ composes an 8-cycle  %Furthermore, $\tau$ has an 8-cycle per gadget $Y$, determining which literal is responsible for satisfying each clause, and an 8-cycle per gadget $W$, which 
guarantees the compatibility of $A$. In other words, in this case $A$ satisfies $\mathcal{Y}$.
On the other hand, if $\textup{s}_8(\tau)<|\mathcal{X}|+|\mathcal{Y}|+\|\mathcal{Y}\|$, either $\tau$ gives an incomplete assignment or it gives an incompatible assignment that does not satisfy the formula $\mathcal{Y}$ (an intuition is given in Fig~\ref{fig:connstexample-alt}~(b)).
\end{proof}

\subsection{Reducing \boldmath$\satvar$-SAT to the \boldmath$\sigma_8$ disambiguation of linear genomes}

We can now extend the results from the previous section to linear genomes.
For that, we simply do a small extension of Construction~\ref{const:3satAmb}, by duplicating the number of vertices in the graph. 
%Recall that $G_8(\mathcal{X},\mathcal{Y})$ is the output of Construction~\ref{const:3satAmb} and that $G_\bullet(\mathcal{X},\mathcal{Y})$ is the graph obtained from $G_8(\mathcal{X},\mathcal{Y})$ by removing all 5-flowers.
We need to determine the number of vertices in 
$G_8(\mathcal{X},\mathcal{Y})$, which we denote by $\nu_8$.
%= \nu_\bullet + \nu_{\circledast_5},$$ where $\nu_\bullet$ is the number of vertices in $G_\bullet(\mathcal{X},\mathcal{Y})$ and $\nu_{\circledast_5}$ is the number of vertices in all 5-flowers of $G_8(\mathcal{X},\mathcal{Y})$.

%Besides 5-flowers, each variable and clause gadget has 6 squares and each literal occurrence gadget has 2 squares. Consequently:

%\begin{proposition}
%The number of vertices in $G_\bullet(\mathcal{X},\mathcal{Y})$ is $$\nu_\bullet=24(|\mathcal{X}| + |\mathcal{Y}|) + 8\|\mathcal{Y}\|.$$
%\end{proposition}

Let $\mathcal{X}_\mathtt{TTF}$ be the subset of variables that occur three times in $\mathcal{Y}$, $\mathcal{X}_\mathtt{TF}$ be the subset of variables that occur twice in $\mathcal{Y}$, $\mathcal{Y}_2$ be the subset of 2-clauses in $\mathcal{Y}$ and $\mathcal{Y}_3$ be the subset of 3-clauses in $\mathcal{Y}$. 
Recall that each 5-flower has 5 squares, and, consequently, 20 vertices.
By inspecting the gadgets in Fig.\;\ref{fig:gadgets}, it is straightforward to see that the number of vertices is:
\begin{align*}
\nu_8=&~(24+2\times 20)|\mathcal{X}_\mathtt{TTF}|+(24+3\times 20)|\mathcal{X}_\mathtt{TF}|\\&~+(24+3\times 20)|\mathcal{Y}_2|+24|\mathcal{Y}_3|+(8+20)\|\mathcal{Y}\|\\
=&~64|\mathcal{X}|+20|\mathcal{X}_\mathtt{TF}|+24|\mathcal{Y}|+60|\mathcal{Y}_2|+28\|\mathcal{Y}\|\,.\end{align*}

%\begin{proposition}\label{prop:v-on-flowers}
%The number of vertices in all 5-flowers of graph $G_8(\mathcal{X},\mathcal{Y})$ is $$\nu_{\circledast_5}=20(2|\mathcal{X}_\mathtt{TTF}| + 3|\mathcal{X}_\mathtt{TF}| + 3|\mathcal{Y}_2| + \|\mathcal{Y}\|).$$
%\end{proposition}
%\begin{proof}
%Each 5-flower has 5 squares, and, consequently, 20 vertices. Furthermore:
%each gadget of type $X$-$\mathtt{TTF}$ has two 5-flowers; each gadget of type $X$-$\mathtt{TF}$ has three 5-flowers; each gadget of type $2$-$Y$ has three 5-flowers; each gadget of type $W$ has one 5-flower and each gadget representing a 3-clause has no flower
%(see Fig.\;\ref{fig:gadgets}).
%\end{proof}

\begin{construction}\label{const:3satAmbLin}
Given an instance $(\mathcal{X},\mathcal{Y})$ of $\satvar$-SAT, create the graph $G_8^{+\nu}(\mathcal{X},\mathcal{Y})$ by simply adding $\nu_8$ isolated vertices to the graph $G_8(\mathcal{X},\mathcal{Y})$.
\end{construction}

The vertices added to graph $G_8^{+\nu}(\mathcal{X},\mathcal{Y})$ must be telomeres of linear chromosomes in both genomes. In fact, it is easy to derive a pair of linear genomes from $G_8^{+\nu}(\mathcal{X},\mathcal{Y})$:

\begin{proposition}~\label{prop:ABG-lin}
There exists a pair of $\cogsd$-cognate linear genomes $\mathbb{S}$ and $\mathbb{D}$ and a genome $\check{\mathbb{D}} \in \mathfrak{S}^\mathtt{a}_\mathtt{b}(\mathbb{D})$, such that  $G_8^{+\nu}(\mathcal{X},\mathcal{Y})=ABG(\mathbb{S},\check{\mathbb{D}})$.
\end{proposition}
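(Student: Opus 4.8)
The plan is to realise $G_8^{+\nu}(\mathcal{X},\mathcal{Y})$ as the ambiguous breakpoint graph of a pair of \emph{linear} genomes, closely following the proof of Proposition~\ref{prop:ABG-circ}; the only genuinely new ingredient is to turn the $\nu_8$ added isolated vertices into vertices that are telomeres of both genomes. First I would fix the counts. As $G_8(\mathcal{X},\mathcal{Y})$ is an ambiguous breakpoint graph of circular genomes, each of its vertices has degree three (two square edges and one black $\check{\mathbb{D}}$-edge); hence, writing $n_\ast$ for the number of squares, it has $\nu_8 = 4n_\ast$ vertices and exactly $2n_\ast$ black edges, and $G_8^{+\nu}(\mathcal{X},\mathcal{Y})$ has $8n_\ast$ vertices in total. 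I will therefore build genomes over $2n_\ast$ gene identifiers, for which an ambiguous breakpoint graph has $4 \cdot 2n_\ast = 8n_\ast$ vertices, matching the count.

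For the construction, I take $\mathbb{S}$ to be the linear genome consisting of $n_\ast$ two-gene chromosomes, one per square; thus $\mathbb{S}$ has $2n_\ast$ genes, $n_\ast$ adjacencies and $2n_\ast$ telomeres, i.e.\ $\chi(\mathbb{S}) = n_\ast$. I would assign the single inner adjacency of each chromosome to its square, labelling the four vertices of that square as $\beta_\mathtt{a},\beta_\mathtt{b},\gamma_\mathtt{a},\gamma_\mathtt{b}$ exactly as in Proposition~\ref{prop:ABG-circ}, where $\beta,\gamma$ are the inner extremities of the chromosome's two genes (the choice of heads/tails and gene orientations realises whichever adjacency type is required). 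The two outer extremities of each chromosome are the telomeres of $\mathbb{S}$; they give $4\chi(\mathbb{S}) = 4n_\ast$ $\mathbb{S}$-telomere vertices, which I send by an arbitrary bijection to the $4n_\ast$ added isolated vertices. Finally I declare the black edges to be the adjacencies of $\check{\mathbb{D}}$ and recover $\mathbb{D}$ by erasing the indices, as in the circular case.

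Then I would check correctness. Each square vertex retains degree three and is non-telomeric, while each added vertex is an $\mathbb{S}$-telomere and, being incident to no black edge, also a $\check{\mathbb{D}}$-telomere, hence isolated --- in line with the characterisation of isolated vertices in Section~\ref{sec:ambiguousbg}. The counts are consistent: the $4n_\ast$ $\mathbb{S}$-telomere vertices equal $4\chi(\mathbb{S})$, and the $4n_\ast$ $\check{\mathbb{D}}$-telomere vertices equal $2\chi(\mathbb{D})$ with $\chi(\mathbb{D}) = 2n_\ast$. The one step that needs a real argument, rather than a count, is that $\check{\mathbb{D}}$ (and hence $\mathbb{D}$) is truly linear: since every gene has exactly one inner extremity (a square vertex carrying its unique black edge) and one outer extremity (a telomere), tracing a chromosome of $\check{\mathbb{D}}$ from a telomere passes through one gene to a square vertex, crosses a single black adjacency to another square vertex, and then passes through that gene back to a telomere, where it stops. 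So every chromosome of $\check{\mathbb{D}}$ is a linear two-gene chromosome, no circular chromosome can arise, and $\mathbb{D}$ is linear and duplicated over $\mathcal{I}(\mathbb{S})$; together with the linear $\mathbb{S}$ this yields a $\cogsd$-cognate pair of linear genomes with $ABG(\mathbb{S},\check{\mathbb{D}}) = G_8^{+\nu}(\mathcal{X},\mathcal{Y})$.

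I expect this last point to be the main obstacle: whereas Proposition~\ref{prop:ABG-circ} allows circular chromosomes, here one must choose $\mathbb{S}$ deliberately (the two-gene chromosomes) so that every gene of $\check{\mathbb{D}}$ is forced to carry a telomere, which is precisely what precludes cycles. The rest of the proof is the same labelling/matching argument as in the circular reduction, plus routine bookkeeping.
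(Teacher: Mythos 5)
Your proposal is correct and is essentially the paper's own proof: the paper likewise labels the square vertices with head extremities paired by an arbitrary perfect matching (which yields exactly your two-gene linear chromosomes), sends each tail extremity pair $\mathtt{Z}^t_\mathtt{a},\mathtt{Z}^t_\mathtt{b}$ to a distinct pair of the added isolated vertices, takes the black edges as the adjacencies of $\check{\mathbb{D}}$, and concludes linearity by the same observation that every gene has one telomeric extremity in both genomes, which precludes circular chromosomes. Your explicit tracing of a $\check{\mathbb{D}}$-chromosome (telomere--gene--black edge--gene--telomere) just spells out the paper's one-line closing argument.
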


\begin{proof}
Let $G_8^{+\nu}(\mathcal{X},\mathcal{Y}) = (V,E)$ and recall that $|V| = 2\nu_8$. Note that $\nu_8$ is a multiple of 4, therefore we can write $\nu_8=4q$. 
Now let $n_\ast=2q$ and let the set of gene identifiers be $I=\{\mathtt{1,2,3},\ldots,n_\ast\}$.
The extremities are $\Gamma^t = \{\mathtt{X}^t \mid  \mathtt{X} \in I\}$ and $\Gamma^h = \{\mathtt{X}^h \mid  \mathtt{X} \in I\}$.  

Let $\Gamma^t$ be the telomeres of singular genome $\mathbb{S}$. Then assign each extremity $\mathtt{Z}^t \in \Gamma^t$ to a distinct pair $(u,v)$ of isolated vertices as follows: $u \rightarrow \mathtt{Z}^t_\mathtt{a}$ and $v \rightarrow\mathtt{Z}^t_\mathtt{b}$.
The telomeres of $\check{\mathbb{D}}$ are the isolated vertices of $G^{+\nu}$.

Note that $|\Gamma^h|=n_\ast=2q$ is even and let $A$ be any perfect matching of $\Gamma^h$. 
Take the set $A$ as the adjacencies of singular genome $\mathbb{S}$.
We then assign each pair $\mathtt{X}^h\mathtt{Y}^h \in A$ to a distinct square $Q=\mathcal{Q}(\mathtt{X}^h\mathtt{Y}^h)$ of $G^{+\nu}$ by labeling the vertices of $Q$ as follows: take any edge of $Q$ and call its extremities $u$ and $v$;
%call $\hat{u}$ the vertex of $Q$ that is not connected to $u$ in $Q$ and $\hat{v}$ the vertex of $Q$ that is not connected to $v$ in $Q$; 
then assign $u \rightarrow \mathtt{X}^h_\mathtt{a}$, $\hat{u} \rightarrow \mathtt{X}^h_\mathtt{b}$, $v \rightarrow \mathtt{Y}^h_\mathtt{a}$, $\hat{v} \rightarrow \mathtt{Y}^h_\mathtt{b}$.
Now, let $E_{\mathbb{D}}$ be the subset of $E$ containing the black connections and let the adjacencies of $\check{\mathbb{D}}$ be the edge set $E_{\mathbb{D}}$.

Finally, the duplicated genome $\mathbb{D}$ can be obtained from $\check{\mathbb{D}}$ by simply ignoring the singularization indices. 

Since all gene tails are telomeres and all gene heads form adjacencies in both $\mathbb{D}$ and $\mathbb{S}$, no circular chromosome can occur in  $\mathbb{D}$ nor in $\mathbb{S}$.
\end{proof}

The proposition above guarantees that solving the $\sigma_8$ disambiguation or finding an 8-maximum solution of a graph $G_8^{+\nu}(\mathcal{X},\mathcal{Y})$ is equivalent to solving the $\sigma_8$ double distance of the corresponding linear genomes $\mathbb{S}$ and $\mathbb{D}$.
Here we need to compute the 0-paths of the graph, therefore the relation between the solution of the $\satvar$-SAT instance $(\mathcal{X},\mathcal{Y})$ and an 8-maximum solution of $G_8^{+\nu}(\mathcal{X},\mathcal{Y})$ is as follows.

\begin{lemma}\label{lmm:iff-lin}
For any 8-maximum solution $\tau$ of graph $G_8^{+\nu}(\mathcal{X},\mathcal{Y})$, it holds that:
(i) $\textup{s}_8(\tau) \leq |\mathcal{X}|+|\mathcal{Y}|+\|\mathcal{Y}\| + \frac{\nu_8}{2}$;
and (ii) formula $\mathcal{Y}$ has a satisfying assignment if and only if $\textup{s}_8(\tau) = |\mathcal{X}|+|\mathcal{Y}|+\|\mathcal{Y}\| + \frac{\nu_8}{2}$.
\end{lemma}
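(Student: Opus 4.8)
The plan is to reduce the linear statement directly to the circular one already established in Lemma~\ref{lmm:iff-circ}, exploiting the fact that $G_8^{+\nu}(\mathcal{X},\mathcal{Y})$ is nothing more than $G_8(\mathcal{X},\mathcal{Y})$ together with $\nu_8$ isolated vertices. First I would account for the contribution of these added vertices to the score. By the observation recorded right after the definition of the induced breakpoint graph, each isolated vertex is simultaneously an $\mathbb{S}$-telomere and a $\check{\mathbb{D}}$-telomere and hence forms a $0$-path in every solution $\tau$, irrespective of how the squares are resolved. Thus all $\nu_8$ added vertices are $0$-paths, contributing exactly $\frac{p_0}{2} = \frac{\nu_8}{2}$ to $\sigma_8$ as a constant term independent of $\tau$.

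Next I would argue that, apart from these forced $0$-paths, the structure of any solution of $G_8^{+\nu}$ coincides with that of the corresponding solution of $G_8$. Since the added vertices have degree $0$ and are disconnected from everything else, every vertex of the $G_8$ part still has degree $3$ (two square edges plus one $\check{\mathbb{D}}$-edge), so after resolving all squares each such vertex has degree $2$ and the entire non-isolated part decomposes into cycles only; no even path of positive length can arise. Invoking Proposition~\ref{prop:no-cycle-up-to-6}, which applies verbatim to the $G_8$ part, all these cycles have length at least $8$, so that $c_2 = c_4 = c_6 = 0$ and $p_2 = p_4 = p_6 = 0$. Consequently $\sigma_8 = c_8 + \frac{\nu_8}{2}$, where $c_8$ is precisely the number of $8$-cycles realized on the $G_8$ part.

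The third step is to observe that the isolated vertices impose no constraint on the resolution of the squares: the solutions of $G_8^{+\nu}$ are in one-to-one correspondence with the solutions of $G_8$, both being resolutions of the very same set of squares, and this correspondence preserves the number of $8$-cycles. Hence $\textup{s}_8(\tau) = c_8 + \frac{\nu_8}{2}$, with $c_8$ ranging over exactly the values achievable in the circular setting. Both claims then follow by shifting the two parts of Lemma~\ref{lmm:iff-circ} by the constant $\frac{\nu_8}{2}$: statement~(i) becomes $\textup{s}_8(\tau) \leq |\mathcal{X}| + |\mathcal{Y}| + \|\mathcal{Y}\| + \frac{\nu_8}{2}$, and the equivalence in~(ii) transfers unchanged, since $\mathcal{Y}$ is satisfiable iff $c_8 = |\mathcal{X}| + |\mathcal{Y}| + \|\mathcal{Y}\|$ iff $\textup{s}_8(\tau)$ attains this shifted maximum.

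I do not expect a genuine obstacle here, as the entire combinatorial difficulty of the reduction was already discharged in the circular case; the linear version merely appends a fixed collection of forced $0$-paths. The only points requiring care are confirming that the isolated vertices can neither merge into the cycles of the $G_8$ part nor create new even paths of positive length — both immediate from their degree being~$0$ — and verifying the bookkeeping that $p_0$ equals exactly $\nu_8$ and not more, which holds precisely because every other connected component of a resolved graph is a cycle.
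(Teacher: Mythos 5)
Your proposal is correct and follows essentially the same route as the paper: the paper's proof likewise observes that the $\nu_8$ isolated vertices are forced $0$-paths, each contributing $\frac{1}{2}$ to the score of every solution, and then invokes the proof of Lemma~\ref{lmm:iff-circ} directly. Your additional bookkeeping (degree argument showing the non-isolated part decomposes into cycles only, so $p_0=\nu_8$ exactly) is a slightly more explicit verification of what the paper leaves implicit, but it is not a different argument.
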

\begin{proof}
Note that all $\nu_8$ isolated vertices are 0-paths, which are part of the score of any solution. The proof is then a direct consequence of the proof of Lemma~\ref{lmm:iff-circ}, but here each of the $\nu_8$ 0-paths of the graph contributes with $\frac{1}{2}$ to the score of any solution.
\end{proof}

\subsection{NP-completeness of \boldmath$\sigma_8$ disambiguation}

The NP-hardness of $\sigma_8$ disambiguations  of circular and linear genomes follows directly from Lemmas~\ref{lmm:iff-circ} and~\ref{lmm:iff-lin}.
Moreover, the problem is in NP since a solution for the $\sigma_8$ disambiguation can be easily verified in polynomial time because
a certificate is a collection of cycles, of which one has to count how many have length at most~8, and for linear genomes also of paths, of which one has to count how many have even length at most~6.  Therefore:

\begin{theorem}\label{thm:NPc}
Given a pair of $\cogsd$-cognate genomes $\mathbb{S}$ and $\mathbb{D}$, both being linear or both being circular, computing the $\sigma_8$ double distance or the $\sigma_8$ disambiguation of $\mathbb{S}$ and $\mathbb{D}$ is NP-complete.
\end{theorem}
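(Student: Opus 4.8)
The plan is to cast $\sigma_8$ disambiguation (equivalently, by the equivalence stated earlier, $\sigma_8$ double distance) as a decision problem and to prove both membership in NP and NP-hardness, handling the circular and linear cases in parallel. For the decision version I would ask, given a $\cogsd$-cognate pair $\mathbb{S},\mathbb{D}$ and an integer threshold $t$, whether $\textup{ss}_8(\mathbb{S},\mathbb{D}) \geq t$. Membership in NP is the quick part: a certificate is a solution $\tau$, that is, a choice of one paralogous edge-pair per square. From $\tau$ one reconstructs the induced breakpoint graph $BG(\tau,\check{\mathbb{D}})$ in linear time, decomposes it into connected components, and evaluates $\textup{s}_8(\tau)=\sigma_8$ by counting the cycles of length at most $8$ and, in the linear case, the even paths of length at most $6$. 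Comparing this value against $t$ is immediate, so the whole verification is polynomial.

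For NP-hardness I would reduce from $\satvar$-SAT, which is NP-complete. In the circular case, given an instance $(\mathcal{X},\mathcal{Y})$ I apply Construction~\ref{const:3satAmb} to obtain $G_8(\mathcal{X},\mathcal{Y})$; since every gadget contributes only a constant number of squares, the graph has size polynomial in $\|\mathcal{Y}\|$. Proposition~\ref{prop:ABG-circ} then exhibits a concrete $\cogsd$-cognate pair of circular genomes $\mathbb{S},\mathbb{D}$ and a singularization $\check{\mathbb{D}}$ with $ABG(\mathbb{S},\check{\mathbb{D}})=G_8(\mathcal{X},\mathcal{Y})$, so the SAT instance is mapped to a genuine double-distance instance. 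Setting the threshold $t=|\mathcal{X}|+|\mathcal{Y}|+\|\mathcal{Y}\|$ and invoking Lemma~\ref{lmm:iff-circ}, part~(i) guarantees that $t$ is an upper bound on $\textup{s}_8$, whence $\textup{ss}_8\geq t$ is equivalent to $\textup{ss}_8=t$, and part~(ii) makes this in turn equivalent to the satisfiability of $\mathcal{Y}$. This gives the required logical equivalence between the SAT answer and the decision answer.

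The linear case proceeds identically but through Construction~\ref{const:3satAmbLin}, which pads $G_8$ with $\nu_8$ isolated vertices. I would first note that $\nu_8$ is bounded by a polynomial in the formula size (indeed it is the explicit linear expression in $|\mathcal{X}|,|\mathcal{Y}|,\|\mathcal{Y}\|$ computed above), so the reduction remains polynomial. Proposition~\ref{prop:ABG-lin} realizes $G_8^{+\nu}(\mathcal{X},\mathcal{Y})$ as the ambiguous breakpoint graph of a $\cogsd$-cognate pair of \emph{linear} genomes, and Lemma~\ref{lmm:iff-lin} supplies the analogue of the circular correspondence with the shifted threshold $t+\tfrac{\nu_8}{2}$, the offset accounting for the $\nu_8$ forced $0$-paths. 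Combining the two cases with the NP-membership argument establishes NP-completeness of $\sigma_8$ disambiguation, and the equivalence of $\sigma_8$ disambiguation with $\sigma_8$ double distance then transfers the result to the latter.

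I expect little friction here, because the combinatorial heart of the argument is already discharged in the preceding results: the girth-$8$ property (Proposition~\ref{prop:no-cycle-up-to-6}) and the tight cycle-counting correspondences (Lemmas~\ref{lmm:iff-circ} and~\ref{lmm:iff-lin}). If any step deserves care, it is the clean phrasing of the decision problem so that the \emph{maximization} character of disambiguation is compatible with a polynomial-time \emph{verifier}. The resolution is precisely the upper bound in part~(i) of the correspondence lemmas, which collapses the threshold test $\textup{ss}_8\geq t$ to the exact-value test $\textup{ss}_8=t$ and thereby makes a single witnessing solution $\tau$ a valid NP certificate.
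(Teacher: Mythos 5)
Your proposal is correct and follows essentially the same route as the paper: NP-hardness is discharged by Lemmas~\ref{lmm:iff-circ} and~\ref{lmm:iff-lin} via Constructions~\ref{const:3satAmb} and~\ref{const:3satAmbLin} together with Propositions~\ref{prop:ABG-circ} and~\ref{prop:ABG-lin}, and NP-membership by verifying a solution $\tau$ through counting cycles of length at most~$8$ and, in the linear case, even paths of length at most~$6$. Your explicit threshold formulation and the observation that the upper bound in part~(i) collapses $\textup{ss}_8 \geq t$ to $\textup{ss}_8 = t$ is a slightly more careful phrasing of the same argument the paper states tersely.
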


Circular and
linear genomes are special cases of \emph{mixed genomes}, composed of a combination
of linear and circular chromosomes. Consequently, the theorem above can be generalized to mixed genomes:

\begin{corollary}
Computing the $\sigma_8$ double distance or the $\sigma_8$ disambiguation of a pair of $\cogsd$-cognate mixed genomes $\mathbb{S}$ and $\mathbb{D}$ is NP-complete.
%for any finite even $k \geq 8$.
\end{corollary}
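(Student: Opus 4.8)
The plan is to derive this directly from Theorem~\ref{thm:NPc} by observing that the class of mixed genomes strictly contains both the class of purely circular genomes and the class of purely linear genomes. A genome all of whose chromosomes are circular (respectively, linear) is in particular a mixed genome, so every instance produced by Construction~\ref{const:3satAmb} together with Proposition~\ref{prop:ABG-circ}, and likewise every instance from Construction~\ref{const:3satAmbLin} together with Proposition~\ref{prop:ABG-lin}, is already a legitimate instance of the mixed-genome problem. Hence the restriction of the $\sigma_8$ double distance (equivalently, $\sigma_8$ disambiguation) problem on mixed genomes to inputs where both $\mathbb{S}$ and $\mathbb{D}$ are circular is exactly the NP-hard problem of Theorem~\ref{thm:NPc}.

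Concretely, I would argue NP-hardness by reduction-preservation: any polynomial-time algorithm deciding the mixed case would in particular decide the both-circular case, contradicting Theorem~\ref{thm:NPc} unless $\textup{P}=\textup{NP}$. No new reduction needs to be built; the same $G_8(\mathcal{X},\mathcal{Y})$ and the same correspondence to a $\cogsd$-cognate pair apply unchanged, and Lemma~\ref{lmm:iff-circ} (or Lemma~\ref{lmm:iff-lin} in the linear case) already supplies the required equivalence between a satisfying assignment of $\mathcal{Y}$ and an $8$-maximum solution of the prescribed score.

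For membership in NP, I would reuse the certificate argument already given for Theorem~\ref{thm:NPc}: a solution $\tau$ resolving every square of $ABG(\mathbb{S},\check{\mathbb{D}})$ is a certificate of size polynomial in the input, and its $8$-score $\textup{s}_8(\tau)$ is obtained from the induced breakpoint graph $BG(\tau,\check{\mathbb{D}})$ by counting the cycles of length at most~$8$ and, for linear chromosomes, the even paths of length at most~$6$. This counting is a linear-time traversal of a graph whose components are cycles and paths, and it is insensitive to whether a given chromosome is circular or linear; the mixed case therefore verifies in polynomial time verbatim.

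The proof has no genuine obstacle, since it is a pure special-case inclusion. The only point that deserves an explicit check is that the definitions of the $\sigma_k$ double distance and of the $\sigma_k$ disambiguation, together with their equivalence (cited from~\cite{braga2024doubleDistance}), were stated for arbitrary genomes rather than only for the circular or linear restrictions; once this is noted, both the NP-membership certificate and the hardness transfer go through immediately, and the corollary follows.
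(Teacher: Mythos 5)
Your proposal is correct and matches the paper's own argument, which justifies this corollary in a single remark: circular and linear genomes are special cases of mixed genomes, so the hardness instances of Theorem~\ref{thm:NPc} are already instances of the mixed problem, and the NP-membership certificate carries over unchanged. Your additional explicit checks (that the problem definitions were stated for arbitrary genomes, and that score verification is insensitive to chromosome type) are sound but not a different route.
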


\subsection{NP-completeness of \boldmath$\sigma_{\geq10}$ disambiguation}

%\subsection{Reducing \boldmath$\satvar$-SAT to \boldmath$\sigma_{k\geq 10}$ disambiguation} % for $k\geq 10$}

%Generalizing the NP-completeness of $\sigma_8$ disambiguation to $\sigma_k$ disambiguation with any $k\geq 10$ would be straightforward, if we would not have a small issue with the gadgets of 3-clauses. Indeed, while all other gadgets and the connections between gadgets only allow the 8-cycles that build the correspondence to $\satvar$-SAT, the gadget for a 3-clause allow two competing 12-cycles.

The graph $G_8(\mathcal{X},\mathcal{Y})$ has alternating cycles whose lengths are greater than 8 and some of these cycles can invalidate our reduction for $k>8$ if we do not take particular care of them. Critical cases involve the gadget for a 3-clause, as shown in Fig.~\ref{fig:12-cycles}. 

\begin{figure}[ht]

    \begin{center}
    %\hspace{1mm}
     \begin{minipage}{3cm}
     \begin{center}

     \textbf{(a)}

     \vspace{4.5mm}
     
    \scalebox{0.32}{
\begin{tikzpicture}[scale=0.7]

    \node[intedgecolor] at (2,2) {\huge $\Theta_1$};
    \node[intedgecolor] at (6,2) {\huge $\Theta_2$};
    \node[intedgecolor] at (10,2) {\huge $\Theta_3$};

    \vcl{(0,3)}{q1v1}{}
	\vcl{(1,4)}{q1v2}{}
    \vcl{(0,5)}{q1v3}{}
	\vcl{(-1,4)}{q1v4}{}
    %\node at (0,4) {\LARGE $\mathsf{Q}^{\z\!Y}_1$};
    
    \vcl{(4,3)}{q2v1}{}
	\vcl{(5,4)}{q2v2}{}
    \vcl{(4,5)}{q2v3}{}
	\vcl{(3,4)}{q2v4}{}
    %\node at (4,4) {\LARGE $\mathsf{Q}^{\z\!Y}_2$};
    
    \vcl{(8,3)}{q3v1}{}
	\vcl{(9,4)}{q3v2}{}
    \vcl{(8,5)}{q3v3}{}
	\vcl{(7,4)}{q3v4}{}
    %\node at (8,4) {\LARGE $\mathsf{Q}^{\z\!Y}_3$};

    \vcl{(0,-1)}{q4v1}{}
	\vcl{(1,0)}{q4v2}{}
    \vcl{(0,1)}{q4v3}{}
	\vcl{(-1,0)}{q4v4}{}
    %\node at (0,0) {\LARGE $\mathsf{Q}^{\z\!Y}_4$};
    
    \vcl{(4,-1)}{q5v1}{}
	\vcl{(5,0)}{q5v2}{}
    \vcl{(4,1)}{q5v3}{}
	\vcl{(3,0)}{q5v4}{}
    %\node at (4,0) {\LARGE $\mathsf{Q}^{\z\!Y}_5$};
    
    \vcl{(8,-1)}{q6v1}{}
	\vcl{(9,0)}{q6v2}{}
    \vcl{(8,1)}{q6v3}{}
	\vcl{(7,0)}{q6v4}{}
    %\node at (8,0) {\LARGE $\mathsf{Q}^{\z\!Y}_6$};

    % Start edges
    %\ed{q1v1}{q1v2}{cyan}
    \ed{q1v1}{q1v4}{cyan}
    \ed{q1v3}{q1v2}{cyan}
    %\ed{q1v3}{q1v4}{cyan}
    \draw[intedgecolor,dotted,ultra thick] (q1v3) -- (0,6) node[anchor=west]{\LARGE $\ope{W_1}$};
            
    \ed{q2v1}{q2v2}{cyan}
    %\ed{q2v1}{q2v4}{cyan}
    %\ed{q2v3}{q2v2}{cyan}
    \ed{q2v3}{q2v4}{cyan}
    \draw[intedgecolor,dotted,ultra thick] (q2v3) -- (4,6) node[anchor=west]{\LARGE $\ope{W_1}$};

    %\ed{q3v1}{q3v2}{cyan}
    \ed{q3v1}{q3v4}{cyan}
    \ed{q3v3}{q3v2}{cyan}
    %\ed{q3v3}{q3v4}{cyan}
    \draw[intedgecolor,ultra thick] (q3v3) -- (8,6) node[anchor=west]{\LARGE $\ope{W_3}$};

    %\ed{q4v1}{q4v2}{cyan}
    \ed{q4v1}{q4v4}{cyan}
    \ed{q4v3}{q4v2}{cyan}
    %\ed{q4v3}{q4v4}{cyan}
    \draw[intedgecolor,ultra thick] (q4v1) -- (0,-2) node[anchor=west]{\LARGE $\ope{W_3}$};

    \ed{q5v1}{q5v2}{cyan}
    %\ed{q5v1}{q5v4}{cyan}
    %\ed{q5v3}{q5v2}{cyan}
    \ed{q5v3}{q5v4}{cyan}
    \draw[intedgecolor,dashed,ultra thick] (q5v1) -- (4,-2) node[anchor=west]{\LARGE $\ope{W_2}$};

    %\ed{q6v1}{q6v2}{cyan}
    \ed{q6v1}{q6v4}{cyan}
    \ed{q6v3}{q6v2}{cyan}
    %\ed{q6v3}{q6v4}{cyan}
    \draw[intedgecolor,dashed,ultra thick] (q6v1) -- (8,-2) node[anchor=west]{\LARGE $\ope{W_2}$};
    
    \ed{q1v2}{q2v4}{intedgecolor}
    \ed{q1v1}{q4v3}{black}
    \ed{q2v2}{q3v4}{black}
    \ed{q2v1}{q5v3}{black}
    \ed{q3v1}{q6v3}{black}
    \ed{q4v2}{q5v4}{black}
    \ed{q5v2}{q6v4}{intedgecolor}

    \draw[intedgecolor,ultra thick] (q3v2) -- (11.5,4) -- (11.5,-3) -- (-2,-3) -- (-2, 0) -- (q4v4);

    \draw[black,ultra thick] (q1v4) -- (-2, 4) -- (-2, 7) -- (11,7) -- (11,0) -- (q6v2);

\end{tikzpicture}
}

    \vspace{8.5mm}
    
     \end{center}
     \end{minipage}
    \hspace{4mm}
    \begin{minipage}{4.5cm}
     \begin{center}
     \textbf{(b)}

     \medskip
     
    \scalebox{0.32}{
\begin{tikzpicture}[scale=0.7]

    \node[intedgecolor] at (2,2) {\huge $\Theta_1$};
    \node[intedgecolor] at (6,2) {\huge $\Theta_2$};
    \node[intedgecolor] at (10,2) {\huge $\Theta_3$};

    \vcl{(0,3)}{q1v1}{}
	\vcl{(1,4)}{q1v2}{}
    \vcl{(0,5)}{q1v3}{}
	\vcl{(-1,4)}{q1v4}{}
    %\node at (0,4) {\LARGE $\mathsf{Q}^{\z\!Y}_1$};
    
    \vcl{(4,3)}{q2v1}{}
	\vcl{(5,4)}{q2v2}{}
    \vcl{(4,5)}{q2v3}{}
	\vcl{(3,4)}{q2v4}{}
    %\node at (4,4) {\LARGE $\mathsf{Q}^{\z\!Y}_2$};
    
    \vcl{(8,3)}{q3v1}{}
	\vcl{(9,4)}{q3v2}{}
    \vcl{(8,5)}{q3v3}{}
	\vcl{(7,4)}{q3v4}{}
    %\node at (8,4) {\LARGE $\mathsf{Q}^{\z\!Y}_3$};

    \vcl{(0,-1)}{q4v1}{}
	\vcl{(1,0)}{q4v2}{}
    \vcl{(0,1)}{q4v3}{}
	\vcl{(-1,0)}{q4v4}{}
    %\node at (0,0) {\LARGE $\mathsf{Q}^{\z\!Y}_4$};
    
    \vcl{(4,-1)}{q5v1}{}
	\vcl{(5,0)}{q5v2}{}
    \vcl{(4,1)}{q5v3}{}
	\vcl{(3,0)}{q5v4}{}
    %\node at (4,0) {\LARGE $\mathsf{Q}^{\z\!Y}_5$};
    
    \vcl{(8,-1)}{q6v1}{}
	\vcl{(9,0)}{q6v2}{}
    \vcl{(8,1)}{q6v3}{}
	\vcl{(7,0)}{q6v4}{}
    %\node at (8,0) {\LARGE $\mathsf{Q}^{\z\!Y}_6$};

    % Start edges
    %\ed{q1v1}{q1v2}{cyan}
    \ed{q1v1}{q1v4}{cyan}
    \ed{q1v3}{q1v2}{cyan}
    %\ed{q1v3}{q1v4}{cyan}
    \draw[intedgecolor,dotted,ultra thick] (q1v3) -- (0,6) node[anchor=west]{\LARGE $\ope{W_1}$};
            
    \ed{q2v1}{q2v2}{cyan}
    %\ed{q2v1}{q2v4}{cyan}
    %\ed{q2v3}{q2v2}{cyan}
    \ed{q2v3}{q2v4}{cyan}
    \draw[intedgecolor,dotted,ultra thick] (q2v3) -- (4,6) node[anchor=west]{\LARGE $\ope{W_1}$};

    \ed{q3v1}{q3v2}{cyan}
    %\ed{q3v1}{q3v4}{cyan}
    %\ed{q3v3}{q3v2}{cyan}
    \ed{q3v3}{q3v4}{cyan}
    \draw[intedgecolor,ultra thick] (q3v3) -- (8,6) node[anchor=west]{\LARGE $\ope{W_3}$};

    \ed{q4v1}{q4v2}{cyan}
    %\ed{q4v1}{q4v4}{cyan}
    %\ed{q4v3}{q4v2}{cyan}
    \ed{q4v3}{q4v4}{cyan}
    \draw[intedgecolor,ultra thick] (q4v1) -- (0,-2) node[anchor=west]{\LARGE $\ope{W_3}$};

    \ed{q5v1}{q5v2}{cyan}
    %\ed{q5v1}{q5v4}{cyan}
    %\ed{q5v3}{q5v2}{cyan}
    \ed{q5v3}{q5v4}{cyan}
    \draw[intedgecolor,dashed,ultra thick] (q5v1) -- (4,-2) node[anchor=west]{\LARGE $\ope{W_2}$};

    %\ed{q6v1}{q6v2}{cyan}
    \ed{q6v1}{q6v4}{cyan}
    \ed{q6v3}{q6v2}{cyan}
    %\ed{q6v3}{q6v4}{cyan}
    \draw[intedgecolor,dashed,ultra thick] (q6v1) -- (8,-2) node[anchor=west]{\LARGE $\ope{W_2}$};
    
    \ed{q1v2}{q2v4}{intedgecolor}
    \ed{q1v1}{q4v3}{intedgecolor}
    \ed{q2v2}{q3v4}{black}
    \ed{q2v1}{q5v3}{black}
    \ed{q3v1}{q6v3}{intedgecolor}
    \ed{q4v2}{q5v4}{black}
    \ed{q5v2}{q6v4}{intedgecolor}

    \draw[intedgecolor,ultra thick] (q3v2) -- (11.5,4) -- (11.5,-3) -- (-2,-3) -- (-2, 0) -- (q4v4);

    \draw[intedgecolor,ultra thick] (q1v4) -- (-2, 4) -- (-2, 7) -- (11,7) -- (11,0) -- (q6v2);
    
\begin{scope}[shift={(-5,-8)}]
   \vcl{(1,0)}{wq1v1}{}
	\tvcl{(2.4,0)}{wq1v2}{}{black}%{gray!20}
    \vcl{(2.4,1.4)}{wq1v3}{}
	\vcl{(1,1.4)}{wq1v4}{}

    \vcl{(1,3)}{wq2v1}{}
	\vcl{(2.4,3)}{wq2v2}{}
    \tvcl{(2.4,4.4)}{wq2v3}{}{black}
	\vcl{(1,4.4)}{wq2v4}{}
   
    \ed{wq1v1}{wq1v2}{cyan} 
    %\ed{q1v1}{q1v4}{orange}
    %\ed{q1v3}{q1v2}{orange}
    \ed{wq1v3}{wq1v4}{cyan}
    \draw[intedgecolor,ultra thick] (wq1v1) -- (0,0) node[anchor=east]{\LARGE $\ope{X}$};
   % \draw[outedgecolor,ultra thick] (wq1v3) -- (3.4,1.4) ;
            
    \ed{wq2v1}{wq2v2}{cyan}
    %\ed{q2v1}{q2v4}{orange}
    %\ed{q2v3}{q2v2}{orange}
    \ed{wq2v3}{wq2v4}{cyan}
    \draw[intedgecolor,ultra thick] (wq2v4) -- (0,4.4) node[anchor=east]{\LARGE $\ope{X}$};
   % \draw[outedgecolor,ultra thick] (wq2v2) -- (3.4,3) ;
            
    \ed{wq1v4}{wq2v1}{black}
\end{scope}

\draw[black,ultra thick] (q4v1) -- (0,-5) -- (wq2v2);

\draw[black,ultra thick] (q3v3) -- (8,8) -- (13,8) -- (13,-6.6) -- (wq1v3);
%\ed{}{}{black}

\end{tikzpicture}
}
    \end{center}
     \end{minipage}
     
    \end{center}

\caption{%Besides competing 8-cycles $\Theta_1$, $\Theta_2$ and $\Theta_3$, and outer connections $\ope{W_1}$, $\ope{W_2}$  and $\ope{W_2}$, 
The gadget for a 3-clause allows critical 12-cycles:
  \figcbf{(a)} A 12-cycle that competes with $\Theta_1$, $\Theta_2$ and $\Theta_3$, but is compatible with outer connections $\ope{W_1}$, $\ope{W_2}$ and $\ope{W_3}$.
  \figcbf{(b)} A 12-cycle (including  outer connections $\ope{W_3}$) that competes with $\Theta_1$ and $\Theta_2$, but is compatible with $\Theta_3$ and outer connections $\ope{W_1}$ and $\ope{W_2}$.
  %$\ope{W_2}$ and $\ope{W_3}$
  %is not compatible with outer connections $\ope{W_1}$, $\ope{W_2}$ and $\ope{W_3}$.
  }\label{fig:12-cycles}
\end{figure}

Due to this fact, we will proceed with a polynomial adaptation of the graph that safely allows us to reduce $\satvar$-SAT to $\sigma_k$ disambiguation for any fixed $k \geq 10$. This adaptation consists of extending some edges of the graph by adding extra squares plugged to open flowers, as depicted in Fig.~\ref{fig:edge-extension}. The idea is to modify the graph such that the lengths of all 8-cycles are increased to $k$. As a consequence, all other alternating cycles will certainly have at least length $k+2$, meaning that no new candidate cycle can appear and the correspondence to the $\satvar$-SAT solution is preserved. % and cannot disturb the reduction.
Fig.~\ref{fig:new-gadgets} shows the edges that need to be extended to achieve this goal.
Given $k \geq 10$, let $G_{k}(\mathcal{X},\mathcal{Y})$ be the graph obtained from
$G_8(\mathcal{X},\mathcal{Y})$ by
adapting its gadgets as described in Fig.~\ref{fig:new-gadgets}.

\begin{figure}[ht]
    \centering
    \scalebox{0.4}{
\begin{minipage}{1cm}
\centering
\begin{tikzpicture}[scale=0.8]

    \bigbigvcl{(0,0)}{start}{$v$}
    \bigbigvcl{(0,4.5)}{end}{$u$}
    \extpath{start}{end}{$0$}
\end{tikzpicture}
\end{minipage}
\begin{minipage}{1cm}
\centering
{\huge$\Rightarrow$}
\end{minipage}
\begin{minipage}{1cm}
\centering
\begin{tikzpicture}[scale=0.8]    
    \bigbigvcl{(0,0)}{v0}{$v$}
    \bigbigvcl{(0,4.5)}{vT}{$u$}
    \ed{v0}{vT}{black}
\end{tikzpicture}
\end{minipage}
\hspace{2cm}
\begin{minipage}{1cm}
\centering
\begin{tikzpicture}[scale=0.8]

    \bigbigvcl{(0,0)}{start}{$v$}
    \bigbigvcl{(0,4.5)}{end}{$u$}
    \extpath{start}{end}{$1$}
\end{tikzpicture}
\end{minipage}
\begin{minipage}{1cm}
\centering
{\huge$\Rightarrow$}
\end{minipage}
\begin{minipage}{2cm}
\centering
\begin{tikzpicture}[scale=0.8]    
    \bigbigvcl{(0,0)}{v0}{$v$}
    
    \vcl{(0,1.5)}{q1v1}{}
	\tvcl{(1.5,1.5)}{q1v2}{}{black}%{gray!20}
    \tvcl{(1.5,3)}{q1v3}{}{black}
	\vcl{(0,3)}{q1v4}{}
    \bigbigvcl{(0,4.5)}{vT}{$u$}

    \node[orange] at (0.75,2.25) {\Large $1$};
   
    \dted{q1v1}{q1v2}{orange} 
    \ed{q1v1}{q1v4}{orange}
    \ed{q1v3}{q1v2}{orange}
    \dted{q1v3}{q1v4}{orange}

    \ed{v0}{q1v1}{black}
    \ed{vT}{q1v4}{black}
\end{tikzpicture}
\end{minipage}
\hspace{2cm}
\begin{minipage}{10cm}
\centering

\begin{tikzpicture}[scale=0.8]

    \bigbigvcl{(0,0)}{start}{$u$}
    \bigbigvcl{(4.5,0)}{end}{$v$}
    \extpath{start}{end}{$\ell$}
\end{tikzpicture}

\bigskip

{\huge$\Downarrow$}

\bigskip

\begin{tikzpicture}[scale=0.8]    
    \bigbigvcl{(0,1.5)}{v0}{$u$}
    
    \tvcl{(1.5,0)}{q1v1}{}{black}
	\tvcl{(3,0)}{q1v2}{}{black}%{gray!20}
    \vcl{(3,1.5)}{q1v3}{}
	\vcl{(1.5,1.5)}{q1v4}{}

    \node[orange] at (2.25,0.75) {\Large $1$};

    \tvcl{(4.5,0)}{q2v1}{}{black}
	\tvcl{(6,0)}{q2v2}{}{black}%{gray!20}
    \vcl{(6,1.5)}{q2v3}{}
	\vcl{(4.5,1.5)}{q2v4}{}

    \node[orange] at (5.25,0.75) {\Large $2$};

    \node[orange] at (7.25,0.75) {\LARGE {\boldmath $\ldots$}};

    \tvcl{(8.5,0)}{qLv1}{}{black}
	\tvcl{(10,0)}{qLv2}{}{black}%{gray!20}
    \vcl{(10,1.5)}{qLv3}{}
	\vcl{(8.5,1.5)}{qLv4}{}

     \node[orange] at (9.25,0.75) {\Large $\ell$};

    \bigbigvcl{(11.5,1.5)}{vT}{$v$}

    \ed{q1v1}{q1v2}{orange} 
    \dted{q1v1}{q1v4}{orange}
    \dted{q1v3}{q1v2}{orange}
    \ed{q1v3}{q1v4}{orange}

    \ed{q2v1}{q2v2}{orange} 
    \dted{q2v1}{q2v4}{orange}
    \dted{q2v3}{q2v2}{orange}
    \ed{q2v3}{q2v4}{orange}

    \ed{qLv1}{qLv2}{orange} 
    \dted{qLv1}{qLv4}{orange}
    \dted{qLv3}{qLv2}{orange}
    \ed{qLv3}{qLv4}{orange}

    \ed{q1v3}{q2v4}{black}
    \dted{q2v3}{qLv4}{black}
    
    \ed{v0}{q1v4}{black}
    \ed{vT}{qLv3}{black}
    
\end{tikzpicture}

\end{minipage}
}
    
\caption{Extending a $\check{\mathbb{D}}$-edge $uv$ of the graph to longer alternating paths by adding $0$, $1$ or $\ell$ squares. Each square  is plugged to an open flower and increases by 2 the length of any cycle including $uv$.
%the affected edge.
}\label{fig:edge-extension}
\end{figure}

\begin{figure*}[ht]

       \begin{center}

\begin{minipage}{2cm}
    \begin{center}
    ~
    \end{center}
\end{minipage}
\hspace{2cm}
\begin{minipage}{3.5cm}
    \begin{center}
    \textbf{(a) $\mathbi{X}$-$\mathtt{TTF}$}

    \smallskip

    \scalebox{0.35}{\begin{tikzpicture}[scale=0.8]

    \node at (2,2) {\LARGE $\Theta_{\mathtt{F}}$};
    \node at (6,2) {\LARGE $\Theta_{\mathtt{T}}$};
    
    \vcl{(0,3)}{q1v1}{}
	\vcl{(1,4)}{q1v2}{}
    \vcl{(0,5)}{q1v3}{}
	\tvcl{(-1,4)}{q1v4}{}{black}

    \vcl{(4,3)}{q2v1}{}
	\vcl{(5,4)}{q2v2}{}
    \vcl{(4,5)}{q2v3}{}
	\vcl{(3,4)}{q2v4}{}

    \vcl{(8,3)}{q3v1}{}
	\vcl{(9,4)}{q3v2}{}
    \tvcl{(8,5)}{q3v3}{}{black}%{gray!20}
	\vcl{(7,4)}{q3v4}{}
    
    \vcl{(0,-1)}{q4v1}{}
	\vcl{(1,0)}{q4v2}{}
    \vcl{(0,1)}{q4v3}{}
	\tvcl{(-1,0)}{q4v4}{}{black}
    
    \vcl{(4,-1)}{q5v1}{}
	\vcl{(5,0)}{q5v2}{}
    \vcl{(4,1)}{q5v3}{}
	\vcl{(3,0)}{q5v4}{}
    
    \tvcl{(8,-1)}{q6v1}{}{black}%{gray!20}
	\vcl{(9,0)}{q6v2}{}
    \vcl{(8,1)}{q6v3}{}
	\vcl{(7,0)}{q6v4}{}

    % Start edges
    \ed{q1v1}{q1v2}{orange}
    \ed{q1v1}{q1v4}{orange}
    \ed{q1v3}{q1v2}{orange}
    \ed{q1v3}{q1v4}{orange}
    \draw[intedgecolor,ultra thick] (q1v3) -- (0,6) node[anchor=west]{\LARGE $\ope{\mathtt{T}_1}$};
            
    \ed{q2v1}{q2v2}{orange}
    \ed{q2v1}{q2v4}{orange}
    \ed{q2v3}{q2v2}{orange}
    \ed{q2v3}{q2v4}{orange}
    \draw[intedgecolor,ultra thick] (q2v3) -- (4,6) node[anchor=west]{\LARGE $\ope{\mathtt{T}_1}$};

    \ed{q3v1}{q3v2}{orange}
    \ed{q3v1}{q3v4}{orange}
    \ed{q3v3}{q3v2}{orange}
    \ed{q3v3}{q3v4}{orange}
    \draw[intedgecolor,ultra thick] (q3v2) -- (10,4) node[anchor=west]{\LARGE $\ope{\mathtt{F}}$};

    \ed{q4v1}{q4v2}{orange}
    \ed{q4v1}{q4v4}{orange}
    \ed{q4v3}{q4v2}{orange}
    \ed{q4v3}{q4v4}{orange}
    \draw[intedgecolor,ultra thick] (q4v1) -- (0,-2) node[anchor=west]{\LARGE $\ope{\mathtt{T}_2}$};

    \ed{q5v1}{q5v2}{orange}
    \ed{q5v1}{q5v4}{orange}
    \ed{q5v3}{q5v2}{orange}
    \ed{q5v3}{q5v4}{orange}
     \draw[intedgecolor,ultra thick] (q5v1) -- (4,-2) node[anchor=west]{\LARGE $\ope{\mathtt{T}_2}$};

    \ed{q6v1}{q6v2}{orange}
    \ed{q6v1}{q6v4}{orange}
    \ed{q6v3}{q6v2}{orange}
    \ed{q6v3}{q6v4}{orange}
    \draw[intedgecolor,ultra thick] (q6v2) -- (10,0) node[anchor=west]{\LARGE $\ope{\mathtt{F}}$};
    
    \ed{q1v2}{q2v4}{intedgecolor}
    \ed{q1v1}{q4v3}{intedgecolor}
    \ed{q2v2}{q3v4}{intedgecolor}
    \extpath{q2v1}{q5v3}{$\ell$}
    \ed{q3v1}{q6v3}{intedgecolor}
    \ed{q4v2}{q5v4}{intedgecolor}
    \ed{q5v2}{q6v4}{intedgecolor}

\end{tikzpicture}
}
    \end{center}
 \end{minipage}
 \hspace{1.5cm}
 \begin{minipage}{4cm}

    \begin{center}
\textbf{(b) $\mathbi{X}$-$\mathtt{TF}$}

    \smallskip

    \scalebox{0.35}{\begin{tikzpicture}[scale=0.8]

    \node at (2,2) {\LARGE $\Theta_{\mathtt{F}}$};
    \node at (6,2) {\LARGE $\Theta_{\mathtt{T}}$};
    
    \vcl{(0,3)}{q1v1}{}
	\vcl{(1,4)}{q1v2}{}
    \vcl{(0,5)}{q1v3}{}
	\tvcl{(-1,4)}{q1v4}{}{black}

    \vcl{(4,3)}{q2v1}{}
	\vcl{(5,4)}{q2v2}{}
    \vcl{(4,5)}{q2v3}{}
	\vcl{(3,4)}{q2v4}{}

    \vcl{(8,3)}{q3v1}{}
	\vcl{(9,4)}{q3v2}{}
    \tvcl{(8,5)}{q3v3}{}{black}%{gray!20}
	\vcl{(7,4)}{q3v4}{}

    \tvcl{(0,-1)}{q4v1}{}{black}
	\vcl{(1,0)}{q4v2}{}
    \vcl{(0,1)}{q4v3}{}
	\tvcl{(-1,0)}{q4v4}{}{black}

    \tvcl{(4,-1)}{q5v1}{}{black}
	\vcl{(5,0)}{q5v2}{}
    \vcl{(4,1)}{q5v3}{}
	\vcl{(3,0)}{q5v4}{}

    \tvcl{(8,-1)}{q6v1}{}{black}%{gray!20}
	\vcl{(9,0)}{q6v2}{}
    \vcl{(8,1)}{q6v3}{}
	\vcl{(7,0)}{q6v4}{}

    % Start edges
    \ed{q1v1}{q1v2}{orange}
    \ed{q1v1}{q1v4}{orange}
    \ed{q1v3}{q1v2}{orange}
    \ed{q1v3}{q1v4}{orange}
    \draw[intedgecolor,ultra thick] (q1v3) -- (0,6) node[anchor=west]{\LARGE $\ope{\mathtt{T}}$};
            
    \ed{q2v1}{q2v2}{orange}
    \ed{q2v1}{q2v4}{orange}
    \ed{q2v3}{q2v2}{orange}
    \ed{q2v3}{q2v4}{orange}
    \draw[intedgecolor,ultra thick] (q2v3) -- (4,6) node[anchor=west]{\LARGE $\ope{\mathtt{T}}$};

    \ed{q3v1}{q3v2}{orange}
    \ed{q3v1}{q3v4}{orange}
    \ed{q3v3}{q3v2}{orange}
    \ed{q3v3}{q3v4}{orange}
    \draw[intedgecolor,ultra thick] (q3v2) -- (10,4) node[anchor=west]{\LARGE $\ope{\mathtt{F}}$};

    \ed{q4v1}{q4v2}{orange}
    \ed{q4v1}{q4v4}{orange}
    \ed{q4v3}{q4v2}{orange}
    \ed{q4v3}{q4v4}{orange}

    \ed{q5v1}{q5v2}{orange}
    \ed{q5v1}{q5v4}{orange}
    \ed{q5v3}{q5v2}{orange}
    \ed{q5v3}{q5v4}{orange}

    \ed{q6v1}{q6v2}{orange}
    \ed{q6v1}{q6v4}{orange}
    \ed{q6v3}{q6v2}{orange}
    \ed{q6v3}{q6v4}{orange}
    \draw[intedgecolor,ultra thick] (q6v2) -- (10,0) node[anchor=west]{\LARGE $\ope{\mathtt{F}}$};
    
    \ed{q1v2}{q2v4}{intedgecolor}
    \ed{q1v1}{q4v3}{intedgecolor}
    \ed{q2v2}{q3v4}{intedgecolor}
    \extpath{q2v1}{q5v3}{$\ell$}
    \ed{q3v1}{q6v3}{intedgecolor}
    \ed{q4v2}{q5v4}{intedgecolor}
    \ed{q5v2}{q6v4}{intedgecolor}

\end{tikzpicture}
}
    
   \vspace{3mm}
  \end{center}
 \end{minipage}
 
        \medskip

\begin{minipage}{2cm}
    \begin{center}
      \textbf{(c) $\mathbi{W}$}

      \vspace{5mm}
       
      \scalebox{0.35}{
\begin{tikzpicture}[scale=0.8]

    \vcl{(1,0)}{q1v1}{}
	\tvcl{(2.4,0)}{q1v2}{}{black}%{gray!20}
    \vcl{(2.4,1.4)}{q1v3}{}
	\vcl{(1,1.4)}{q1v4}{}

    \vcl{(1,4)}{q2v1}{}
	\vcl{(2.4,4)}{q2v2}{}
    \tvcl{(2.4,5.4)}{q2v3}{}{black}
	\vcl{(1,5.4)}{q2v4}{}
   
    \dted{q1v1}{q1v2}{orange} 
    \ed{q1v1}{q1v4}{orange}
    \ed{q1v3}{q1v2}{orange}
    \dted{q1v3}{q1v4}{orange}
    \draw[intedgecolor,ultra thick] (q1v1) -- (0,0) node[anchor=east]{\LARGE $\ope{X}$};
    \draw[intedgecolor,ultra thick] (q1v3) -- (3.4,1.4) node[anchor=west]{\LARGE $\ope{Y}$};
            
    \dted{q2v1}{q2v2}{orange}
    \ed{q2v1}{q2v4}{orange}
    \ed{q2v3}{q2v2}{orange}
    \dted{q2v3}{q2v4}{orange}
    \draw[intedgecolor,ultra thick] (q2v4) -- (0,5.4) node[anchor=east]{\LARGE $\ope{X}$};
    \draw[intedgecolor,ultra thick] (q2v2) -- (3.4,4) node[anchor=west]{\LARGE $\ope{Y}$};
            
    \extpath{q1v4}{q2v1}{$\ell$}

\end{tikzpicture}
}

      \vspace{5mm}
    \end{center}  
\end{minipage}
\hspace{2cm}
\begin{minipage}{3.5cm}
    \begin{center}
    \textbf{(d) $2$-\!$\mathbi{Y}$}

    \vspace{7mm}
    
    \hspace{-1.2cm}\scalebox{0.35}{
\begin{tikzpicture}[scale=0.8]

    \node at (2,2) {\LARGE $\Theta_1$};
    \node at (6,2) {\LARGE $\Theta_2$};
    %\node at (10,2) {\huge $\Theta_3$};

    \vcl{(0,3)}{q1v1}{}
	\vcl{(1,4)}{q1v2}{}
    \tvcl{(0,5)}{q1v3}{}{black}
	\vcl{(-1,4)}{q1v4}{}

    \vcl{(4,3)}{q2v1}{}
	\vcl{(5,4)}{q2v2}{}
    \tvcl{(4,5)}{q2v3}{}{black}
	\vcl{(3,4)}{q2v4}{}
    
    \vcl{(8,3)}{q3v1}{}
	\tvcl{(9,4)}{q3v2}{}{black}
    \tvcl{(8,5)}{q3v3}{}{black}
	\vcl{(7,4)}{q3v4}{}

    \tvcl{(0,-1)}{q4v1}{}{black}
	\vcl{(1,0)}{q4v2}{}
    \vcl{(0,1)}{q4v3}{}
	\vcl{(-1,0)}{q4v4}{}

    \vcl{(4,-1)}{q5v1}{}
	\vcl{(5,0)}{q5v2}{}
    \vcl{(4,1)}{q5v3}{}
	\vcl{(3,0)}{q5v4}{}

    \vcl{(8,-1)}{q6v1}{}
	\tvcl{(9,0)}{q6v2}{}{black}
    \vcl{(8,1)}{q6v3}{}
	\vcl{(7,0)}{q6v4}{}

    % Start edges
    \ed{q1v1}{q1v2}{orange}
    \ed{q1v1}{q1v4}{orange}
    \ed{q1v3}{q1v2}{orange}
    \ed{q1v3}{q1v4}{orange}
    \draw[intedgecolor,ultra thick] (q1v4) -- (-2,4) node[anchor=east]{\LARGE $\ope{W_1}$};
            
    \ed{q2v1}{q2v2}{orange}
    \ed{q2v1}{q2v4}{orange}
    \ed{q2v3}{q2v2}{orange}
    \ed{q2v3}{q2v4}{orange}

    \ed{q3v1}{q3v2}{orange}
    \ed{q3v1}{q3v4}{orange}
    \ed{q3v3}{q3v2}{orange}
    \ed{q3v3}{q3v4}{orange}

    \ed{q4v1}{q4v2}{orange}
    \ed{q4v1}{q4v4}{orange}
    \ed{q4v3}{q4v2}{orange}
    \ed{q4v3}{q4v4}{orange}
    \draw[intedgecolor,ultra thick] (q4v4) -- (-2,0) node[anchor=east]{\LARGE $\ope{W_1}$};

    \ed{q5v1}{q5v2}{orange}
    \ed{q5v1}{q5v4}{orange}
    \ed{q5v3}{q5v2}{orange}
    \ed{q5v3}{q5v4}{orange}
    \draw[intedgecolor,ultra thick] (q5v1) -- (4,-2) node[anchor=west]{\LARGE $\ope{W_2}$};

    \ed{q6v1}{q6v2}{orange}
    \ed{q6v1}{q6v4}{orange}
    \ed{q6v3}{q6v2}{orange}
    \ed{q6v3}{q6v4}{orange}
    \draw[intedgecolor,ultra thick] (q6v1) -- (8,-2) node[anchor=west]{\LARGE $\ope{W_2}$};
    
    \ed{q1v2}{q2v4}{intedgecolor}
    \ed{q1v1}{q4v3}{intedgecolor}
    \ed{q2v2}{q3v4}{intedgecolor}
    \extpath{q2v1}{q5v3}{$\ell$}
    \ed{q3v1}{q6v3}{intedgecolor}
    \ed{q4v2}{q5v4}{intedgecolor}
    \ed{q5v2}{q6v4}{intedgecolor}

\end{tikzpicture}
}
    
    \vspace{2mm}
    \end{center}
 \end{minipage}
 \hspace{1.5cm}
 \begin{minipage}{4cm}
    \begin{center}
    \textbf{(e) $3$-\!$\mathbi{Y}$}\\

     \smallskip
     
    \scalebox{0.35}{
\begin{tikzpicture}[scale=0.8]

    \node at (2,2) {\LARGE $\Theta_1$};
    \node at (6,2) {\LARGE $\Theta_2$};
    \node at (10,2) {\LARGE $\Theta_3$};

    \vcl{(0,3)}{q1v1}{}
	\vcl{(1,4)}{q1v2}{}
    \vcl{(0,5)}{q1v3}{}
	\vcl{(-1,4)}{q1v4}{}
    %\node at (0,4) {\LARGE $\mathsf{Q}^{\z\!Y}_1$};
    
    \vcl{(4,3)}{q2v1}{}
	\vcl{(5,4)}{q2v2}{}
    \vcl{(4,5)}{q2v3}{}
	\vcl{(3,4)}{q2v4}{}
    %\node at (4,4) {\LARGE $\mathsf{Q}^{\z\!Y}_2$};
    
    \vcl{(8,3)}{q3v1}{}
	\vcl{(9,4)}{q3v2}{}
    \vcl{(8,5)}{q3v3}{}
	\vcl{(7,4)}{q3v4}{}
    %\node at (8,4) {\LARGE $\mathsf{Q}^{\z\!Y}_3$};

    \vcl{(0,-1)}{q4v1}{}
	\vcl{(1,0)}{q4v2}{}
    \vcl{(0,1)}{q4v3}{}
	\vcl{(-1,0)}{q4v4}{}
    %\node at (0,0) {\LARGE $\mathsf{Q}^{\z\!Y}_4$};
    
    \vcl{(4,-1)}{q5v1}{}
	\vcl{(5,0)}{q5v2}{}
    \vcl{(4,1)}{q5v3}{}
	\vcl{(3,0)}{q5v4}{}
   % \node at (4,0) {\LARGE $\mathsf{Q}^{\z\!Y}_5$};
    
    \vcl{(8,-1)}{q6v1}{}
	\vcl{(9,0)}{q6v2}{}
    \vcl{(8,1)}{q6v3}{}
	\vcl{(7,0)}{q6v4}{}
    %\node at (8,0) {\LARGE $\mathsf{Q}^{\z\!Y}_6$};

    % Start edges
    \ed{q1v1}{q1v2}{orange}
    \ed{q1v1}{q1v4}{orange}
    \ed{q1v3}{q1v2}{orange}
    \ed{q1v3}{q1v4}{orange}
    \draw[intedgecolor,ultra thick] (q1v3) -- (0,6) node[anchor=west]{\LARGE $\ope{W_1}$};
            
    \ed{q2v1}{q2v2}{orange}
    \ed{q2v1}{q2v4}{orange}
    \ed{q2v3}{q2v2}{orange}
    \ed{q2v3}{q2v4}{orange}
    \draw[intedgecolor,ultra thick] (q2v3) -- (4,6) node[anchor=west]{\LARGE $\ope{W_1}$};

    \ed{q3v1}{q3v2}{orange}
    \ed{q3v1}{q3v4}{orange}
    \ed{q3v3}{q3v2}{orange}
    \ed{q3v3}{q3v4}{orange}
    \draw[intedgecolor,ultra thick] (q3v3) -- (8,6) node[anchor=west]{\LARGE $\ope{W_3}$};

    \ed{q4v1}{q4v2}{orange}
    \ed{q4v1}{q4v4}{orange}
    \ed{q4v3}{q4v2}{orange}
    \ed{q4v3}{q4v4}{orange}
    \draw[intedgecolor,ultra thick] (q4v1) -- (0,-2) node[anchor=west]{\LARGE $\ope{W_3}$};

    \ed{q5v1}{q5v2}{orange}
    \ed{q5v1}{q5v4}{orange}
    \ed{q5v3}{q5v2}{orange}
    \ed{q5v3}{q5v4}{orange}
    \draw[intedgecolor,ultra thick] (q5v1) -- (4,-2) node[anchor=west]{\LARGE $\ope{W_2}$};

    \ed{q6v1}{q6v2}{orange}
    \ed{q6v1}{q6v4}{orange}
    \ed{q6v3}{q6v2}{orange}
    \ed{q6v3}{q6v4}{orange}
    \draw[intedgecolor,ultra thick] (q6v1) -- (8,-2) node[anchor=west]{\LARGE $\ope{W_2}$};
    
    \ed{q1v2}{q2v4}{intedgecolor}
    \ed{q1v1}{q4v3}{intedgecolor}
    \ed{q2v2}{q3v4}{intedgecolor}
    \extpath{q2v1}{q5v3}{$\ell$}
    \ed{q3v1}{q6v3}{intedgecolor}
    \ed{q4v2}{q5v4}{intedgecolor}
    \ed{q5v2}{q6v4}{intedgecolor}

    \draw[intedgecolor,ultra thick] (q3v2) -- (11.5,4) -- (11.5,-3) -- (-2,-3) -- (-2, 0) -- (q4v4);

    \draw[path,dotted,line width=3pt] (q1v4) -- (-2, 4) -- (-2, 7) -- (11,7) node[draw=path,thick,solid,fill=white,font=\Large,midway] {$\ell$} -- (11,0) -- (q6v2);

\end{tikzpicture}
}
    
    \end{center}
 \end{minipage}

   \end{center}
\caption{Adaptation of the gadgets for $k\geq 10$, with $\ell=\frac{k-8}{2}$ and $p=\frac{k}{2}+1$. In all gadgets the $5$-flowers must be replaced by $p$-flowers, and in all edge extensions each added square is plugged to an extra $p$-flower. In both \figcbf{(a)} and \figcbf{(b)} an extension occurs in the edge shared by $\Theta_{\mathtt{F}}$ and $\Theta_{\mathtt{T}}$, increasing their lengths from 8 to $k$. In \figcbf{(c)} the extension occurs in the edge connecting the two squares. This guarantees that all cycles obtained by joining outer connections of the gadgets will have the lengths increased from 8 to $k$. In both \figcbf{(d)} and \figcbf{(e)} the extension occurs in the edge shared by $\Theta_1$ and $\Theta_2$, increasing their lengths from 8 to $k$. Finally, in \figcbf{(e)} an extra entension occurs in one edge belonging only to $\Theta_3$, increasing its length from 8 to $k$. Note that this last extension does not affect the outer connections of the gadget. 
%With these adaptations, the graph is free from cycles of length shorter than $k$. 
In summary, except for the gadget of a 3-clause, that requires the extension of two edges, we need to extend one edge per gadget. Each of these extensions consists of adding $\ell$ extra squares, each one plugged to a $p$-flower. In total, the number of edges that must be extended is $m=|\mathcal{X}|+\|\mathcal{Y}\|+|\mathcal{Y}|+|\mathcal{Y}_3|$, where $\mathcal{Y}_3$ is the subset of 3-clauses in $\mathcal{Y}$. The number of extra squares and extra $p$-flowers is then $\ell m$.}\label{fig:new-gadgets}
\end{figure*}

\begin{lemma}\label{lmm:iff-circ-k}
For any $k$-maximum solution $\tau$ of graph $G_k(\mathcal{X},\mathcal{Y})$, it holds that:
(i) $\textup{s}_k(\tau) \leq |\mathcal{X}|+|\mathcal{Y}|+\|\mathcal{Y}\|$;
and (ii) formula $\mathcal{Y}$ has a satisfying assignment if and only if $\textup{s}_k(\tau) = |\mathcal{X}|+|\mathcal{Y}|+\|\mathcal{Y}\|$.
\end{lemma}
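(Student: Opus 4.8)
The plan is to reduce Lemma~\ref{lmm:iff-circ-k} back to the already-established Lemma~\ref{lmm:iff-circ} by showing that the edge extensions of Fig.~\ref{fig:new-gadgets} preserve the correspondence between $k$-maximum solutions and satisfying assignments, while merely shifting the length of every ``meaningful'' cycle from $8$ to $k$. First I would record the two key quantitative facts about a single edge extension (Fig.~\ref{fig:edge-extension}): adding $\ell=\frac{k-8}{2}$ squares to a $\check{\mathbb{D}}$-edge $uv$ lengthens every alternating cycle that traverses $uv$ by exactly $2\ell = k-8$, and each added square is plugged to its own open $p$-flower with $p=\frac{k}{2}+1$. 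By Lemma~\ref{lemma:p-flower} (applied to open flowers, via its corollary), each such flower contributes alternating cycles of length at least $2p = k+2 > k$, so the flowers themselves never create any $k$-cycle and hence never add to the $k$-score; they only serve to fix the parity so that the extension squares can be resolved into a length-$2\ell$ connecting path. This means the $k$-score $\textup{s}_k(\tau)$ still counts precisely the ``short'' cycles, which after extension have length exactly $k$.

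Next I would verify that the extensions are placed exactly on the edges that carry the competing $8$-cycles and the outer connections, so that the bijection of Lemma~\ref{lmm:iff-circ} survives verbatim. Concretely, in each variable gadget $X$ the single extension sits on the edge shared by $\Theta_{\mathtt{T}}$ and $\Theta_{\mathtt{F}}$, so both become $k$-cycles but remain mutually exclusive; in each literal gadget $W$ the extension lies on the edge joining its two squares, so every cycle formed by joining $W$'s outer connections to a variable or clause gadget is lengthened uniformly to $k$; in each clause gadget the shared edge of $\Theta_1,\Theta_2$ is extended, and for a $3$-clause an additional edge belonging only to $\Theta_3$ is extended so that all three competing cycles have length $k$. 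Crucially, this last extension lies on an internal edge of $\Theta_3$ and does not touch any outer connection, so the matching behaviour of the gadget's ports is unchanged. I would then argue that every alternating cycle of $G_k$ is obtained from an alternating cycle of $G_8$ by passing through the inserted squares the appropriate number of times, whence any cycle that was not an $8$-cycle in $G_8$ (length $\geq 10$ there, by Proposition~\ref{prop:no-cycle-up-to-6}) becomes a cycle of length $\geq k+2$ in $G_k$; therefore no spurious $k$-cycle is created, and in particular the problematic $12$-cycles of Fig.~\ref{fig:12-cycles} are pushed strictly above length $k$.

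With these structural facts in hand, both parts follow by transport along the extension. For part~(i), the same gadget-by-gadget counting as in Lemma~\ref{lmm:iff-circ}(i) applies: each variable gadget, each clause gadget, and each literal gadget can contribute at most one $k$-cycle, giving $\textup{s}_k(\tau)\le|\mathcal{X}|+|\mathcal{Y}|+\|\mathcal{Y}\|$. For part~(ii), a satisfying assignment is turned into a solution with $|\mathcal{X}|+|\mathcal{Y}|+\|\mathcal{Y}\|$ $k$-cycles by making exactly the same $\Theta_*$ and port choices as in the proof of Lemma~\ref{lmm:iff-circ}(ii) and resolving every extension square so that it completes its connecting $2\ell$-path (and every attached $p$-flower into two flower-cycles of length $>k$); conversely, a $k$-maximum $\tau$ with $\textup{s}_k(\tau)=|\mathcal{X}|+|\mathcal{Y}|+\|\mathcal{Y}\|$ forces one $k$-cycle per gadget, and reading off the induced $\Theta_\mathtt{T}/\Theta_\mathtt{F}$ choices yields a satisfying assignment exactly as before.

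The main obstacle I anticipate is the ``no new candidate cycle'' step: I must rule out that the inserted squares and their flowers combine with the original gadget edges to form some unintended $k$-cycle, especially around the $3$-clause gadget where two separate extensions interact. The clean way to handle this is to treat each extension as a length-preserving-up-to-parity substitution of a single edge by an alternating path, show that any alternating cycle of $G_k$ projects (by contracting the inserted paths and deleting the flower detours) to an alternating cycle of $G_8$ of strictly smaller length, and invoke Proposition~\ref{prop:no-cycle-up-to-6} together with the uniform $+2\ell$ length increase to conclude that every projected non-$8$-cycle lands at length $\geq k+2$ in $G_k$. Once this length bookkeeping is pinned down, the remainder is a direct restatement of Lemma~\ref{lmm:iff-circ}.
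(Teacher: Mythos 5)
Your overall route coincides with the paper's: inflate the flowers to $p$-flowers with $p=\frac{k}{2}+1$ so that, by Lemma~\ref{lemma:p-flower}, any cycle touching a flower has length at least $k+2$; let each extension add $2\ell=k-8$ to every cycle traversing it; and transport the correspondence of Lemma~\ref{lmm:iff-circ}. But there is a genuine gap at exactly the step you flag as the main obstacle. Your projection argument concludes that every alternating cycle of $G_k(\mathcal{X},\mathcal{Y})$ projects to a cycle of $G_8(\mathcal{X},\mathcal{Y})$ ``of strictly smaller length'' and that the ``uniform $+2\ell$ length increase'' pushes every projected non-8-cycle to length at least $k+2$. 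This silently assumes that every flower-free alternating cycle traverses at least one extended $\check{\mathbb{D}}$-edge. If a cycle traverses none, the projection is the identity and its length is unchanged; Proposition~\ref{prop:no-cycle-up-to-6} only supplies the lower bound $8$, so nothing in your argument prevents, say, an unextended $10$- or $12$-cycle from surviving in $G_k(\mathcal{X},\mathcal{Y})$ and being counted by $\textup{s}_k$ once $k\geq 10$, which would break both the upper bound in (i) and the backward direction of (ii). The statement ``every alternating cycle passes through a flower or an extended edge'' is not a bookkeeping consequence of the edge substitution; it is the substantive combinatorial fact that must be proven, and the whole design question of \emph{which} edges to extend hinges on it.

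The paper closes this by an explicit gadget-by-gadget reachability analysis, which implicitly uses that a path entering a square at a vertex $u$ can exit only at a vertex adjacent to $u$, never at the paralogous vertex $\hat{u}$: every outer connection funnels through a gadget of type $W$, whose only flower-free through-routes use its extended edge; in the six-square gadgets, leaving the leftmost or rightmost $\check{\mathbb{D}}$-edges one is forced into a flower, an outer connection, or one of the two middle squares, and from there into the extended middle edge; and for the 3-clause gadget one must check specifically that both critical 12-cycles of Fig.~\ref{fig:12-cycles} traverse the middle extended $\check{\mathbb{D}}$-edge (this is precisely why that edge, and additionally one edge private to $\Theta_3$, are the ones extended in Fig.~\ref{fig:new-gadgets}). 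Your placement paragraph verifies where the \emph{intended} $8$-cycles and port cycles go, but not that no other flower-free, extension-free alternating cycle exists. Adding this case analysis would close the gap and make your proof essentially identical to the paper's; the rest of your argument (the $t\geq 1$ case of the projection, the flower bound, and the transport of both directions of Lemma~\ref{lmm:iff-circ}) is sound.
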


\begin{proof}
After the adaptation, the gadgets are plugged to $p$-flowers, where $p=\frac{k}{2}+1$. According to Lemma~\ref{lemma:p-flower}, any cycle that passes through a $p$-flower has a length of at least $k+2$. 
In $G_{k}(\mathcal{X},\mathcal{Y})$ the 8-cycles of $G_{8}(\mathcal{X},\mathcal{Y})$
are extended to $k$-cycles. More generally, note that all alternating cycles pass through flowers or through extended $\check{\mathbb{D}}$-edges, or both. This can be verified by examining each gadget and the connections between gadgets (see Fig.~\ref{fig:new-gadgets}):
\begin{itemize}
\item The outer connections of gadgets depicted in parts (a), (b), (d) and (e) always pass through a gadget of type $W$ (depicted in part (c)), and either go through a flower or through the extended $\check{\mathbb{D}}$-edge of~$W$. 
\item For the gadgets depicted in parts (a), (b) and (d), from the leftmost and the rightmost $\check{\mathbb{D}}$-edges, we can either go through a flower, or through an outer connection, or through one of the two middle squares. In the latter case, we can continue either through a flower, or through an outer connection, or through the extended $\check{\mathbb{D}}$-edge of the gadget.
\item For the gadget of a 3-clause depicted in part (e), besides the three cycles $\Theta_1$, $\Theta_2$ and $\Theta_3$ that contain one of the two extended $\check{\mathbb{D}}$-edges of the gadget, there are two cycles that have length 12 in $G_{8}(\mathcal{X},\mathcal{Y})$ (one of the two is depicted in Fig.~\ref{fig:12-cycles}\,(a)) and both of them include the middle extended $\check{\mathbb{D}}$-edge. Other possible alternating paths in this gadget either go through a flower, or through an outer connection, or through at least one of the two extended $\check{\mathbb{D}}$-edges. 
\end{itemize}

This shows that all cycles of $G_{8}(\mathcal{X},\mathcal{Y})$
have their lengths extended by at least $2\ell$ in
$G_{k}(\mathcal{X},\mathcal{Y})$, where $\ell=\frac{k-8}{2}$.
%in $\{2,4,6,10,12,\ldots, k\}$. 
An 8-maximum solution $\tau'$ of $G_8(\mathcal{X},\mathcal{Y})$ with score $\textup{s}_8(\tau') = s$ gives a $k$-maximum solution $\tau$ of $G_k(\mathcal{X},\mathcal{Y})$ that, besides the long %(flower) 
cycles with lengths at least $k+2$, has exactly the same $s$ 8-cycles induced by $\tau'$, but here each of these cycles has length $k$. Therefore, $\textup{s}_k(\tau) = s$ and $\tau$ has the same correspondence to the $\satvar$-SAT solution of $(\mathcal{X},\mathcal{Y})$.
\end{proof}
%\subsection{Consolidating the NP-completeness results}

%The NP-hardness of $\sigma_8$ disambiguations  of circular and linear genomes follows directly from Lemmas~\ref{lmm:iff-circ} and~\ref{lmm:iff-lin}.
%Moreover, the problem is in NP since a solution for the $\sigma_8$ disambiguation can be easily verified in polynomial time because
%a certificate is a collection of cycles, of which one has to count how many have length at most~8, and for linear genomes also of paths, of which one has to count how many have even length at most~6.  Therefore:

%\begin{theorem}\label{thm:NPc}
%Given a pair of $\cogsd$-cognate genomes $\mathbb{S}$ and $\mathbb{D}$, both being linear or both being circular, computing the $\sigma_8$ double distance or the $\sigma_8$ disambiguation of $\mathbb{S}$ and $\mathbb{D}$ is NP-complete.
%\end{theorem}

%In fact, the NP-completeness holds for any $\textup{dd}_k$ or $\textup{ss}_k$, $k \geq 8$, as we show in the following.

\begin{theorem}
Given a pair of $\cogsd$-cognate genomes $\mathbb{S}$ and $\mathbb{D}$, %both being linear or both being circular, 
computing the $\sigma_k$ double distance or the $\sigma_k$ disambiguation of $\mathbb{S}$ and $\mathbb{D}$ is NP-complete for any finite even $k \geq 10$.
\end{theorem}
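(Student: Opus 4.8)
The plan is to assemble the machinery already developed for $\sigma_8$ into a single reduction that works uniformly for every fixed even $k\ge 10$. As in the $\sigma_8$ case, I would reduce from $\satvar$-SAT, now using the adapted graph $G_k(\mathcal{X},\mathcal{Y})$ for circular genomes and an isolated-vertex padding of it for linear genomes. The first thing to record is that the construction is polynomial: by the accounting in Fig.~\ref{fig:new-gadgets}, exactly $m=|\mathcal{X}|+\|\mathcal{Y}\|+|\mathcal{Y}|+|\mathcal{Y}_3|$ edges are extended, each extension inserting $\ell=\frac{k-8}{2}$ squares together with one $p$-flower apiece, and every original $5$-flower is replaced by a $p$-flower with $p=\frac{k}{2}+1$. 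Since $k$ is a fixed constant, the total number of added squares and flowers $\ell m$ is linear in $\|\mathcal{Y}\|$, so $G_k(\mathcal{X},\mathcal{Y})$ is built in polynomial time.

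For the circular case I would first argue, exactly as in Proposition~\ref{prop:ABG-circ}, that $G_k(\mathcal{X},\mathcal{Y})$ is the ambiguous breakpoint graph of some $\cogsd$-cognate pair of circular genomes $\mathbb{S}$ and $\mathbb{D}$: the extensions and flower replacements only add orange squares and black connections, so every vertex still has degree three and no telomeric vertex appears, which are precisely the structural properties required. Hardness then follows by chaining Lemma~\ref{lmm:iff-circ-k} with the equivalence between $\sigma_k$ disambiguation and $\sigma_k$ double distance: the instance $(\mathcal{X},\mathcal{Y})$ is satisfiable if and only if the $k$-maximum solution of $G_k(\mathcal{X},\mathcal{Y})$ attains the score $|\mathcal{X}|+|\mathcal{Y}|+\|\mathcal{Y}\|$, which translates directly into a threshold for the $\sigma_k$ double distance.

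For the linear case I would mimic Construction~\ref{const:3satAmbLin} and Proposition~\ref{prop:ABG-lin}: pad $G_k(\mathcal{X},\mathcal{Y})$ with $\nu_k$ isolated vertices, where $\nu_k$ is the number of vertices of $G_k(\mathcal{X},\mathcal{Y})$. Because every square and every $p$-flower contributes a multiple of four vertices, $\nu_k$ remains a multiple of four, so the same labelling scheme turns the padded graph into the ambiguous breakpoint graph of a pair of linear genomes, with all gene tails as telomeres and all isolated vertices as $0$-paths. The linear analogue of Lemma~\ref{lmm:iff-lin} then merely shifts the target score by $\frac{\nu_k}{2}$. Membership in NP is immediate and identical to the $\sigma_8$ argument: a certificate is a solution $\tau$, and verifying $\textup{s}_k(\tau)$ amounts to counting the cycles of length at most $k$ and, for linear genomes, the even paths of length at most $k-2$, all in polynomial time. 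Since circular and linear genomes are special cases of mixed genomes, the statement extends to mixed genomes as well, exactly as in the corollary following Theorem~\ref{thm:NPc}.

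The genuinely delicate point --- that the edge extensions raise every $8$-cycle to length exactly $k$ while creating no new alternating cycle of length at most $k$, in particular taming the critical $12$-cycles of the $3$-clause gadget in Fig.~\ref{fig:12-cycles} --- has already been settled inside Lemma~\ref{lmm:iff-circ-k}. Consequently the main residual obstacle here is bookkeeping rather than structural: I must verify that the flower replacements and edge extensions neither introduce telomeres in the circular construction nor disturb the degree-three and divisibility-by-four invariants in the linear construction, so that Propositions~\ref{prop:ABG-circ} and~\ref{prop:ABG-lin} apply verbatim after the adaptation.
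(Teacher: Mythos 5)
Your proposal is correct and takes essentially the same route as the paper: the circular case is settled by Lemma~\ref{lmm:iff-circ-k} applied to the adapted graph $G_k(\mathcal{X},\mathcal{Y})$, the linear case by padding with $\nu_k$ isolated vertices as in Construction~\ref{const:3satAmbLin} (shifting the target score by $\frac{\nu_k}{2}$), NP membership by counting cycles of length at most $k$ and even paths of length at most $k-2$, and the mixed case as a generalization. You additionally spell out bookkeeping the paper leaves implicit --- polynomiality of the adaptation for fixed $k$, and that Propositions~\ref{prop:ABG-circ} and~\ref{prop:ABG-lin} still apply because the extensions preserve the degree-three, telomere-free, and divisibility-by-four invariants --- which is consistent with the paper's argument.
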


\begin{proof}
  The circular case follows directly from Lemma~\ref{lmm:iff-circ-k}. 
%Each $p$-flower has $p=\frac{k}{2}+1$ squares, and, consequently, $2k+4$ vertices. Analogously to the statement of Proposition~\ref{prop:v-on-flowers}, the graph $G_k(\mathcal{X},\mathcal{Y})$ has $$\nu_{\circledast_p}=(2k+4)(2|\mathcal{X}_\mathtt{TTF}| + 3|\mathcal{X}_\mathtt{TF}| + 3|\mathcal{Y}_2| + \|\mathcal{Y}\|)$$
%vertices in all its $p$-flowers, so that its total number of vertices is
%$\nu_k = \nu_{\bullet} + \nu_{\circledast_p}$.
For the linear case, let $\nu_k$ be the number of vertices in $G_{k}(\mathcal{X},\mathcal{Y})$ and let $G^{+\nu}_{k}(\mathcal{X},\mathcal{Y})$ be the graph obtained by adding $\nu_k$ isolated vertices to $G_k(\mathcal{X},\mathcal{Y})$ (Construction~\ref{const:3satAmbLin}).
Note that a $k$-maximum solution $\tau$ of $G_k(\mathcal{X},\mathcal{Y})$, with $\textup{s}_k(\tau) = s$, gives a $k$-maximum solution $\tau''$ of $G^{+\nu}_k(\mathcal{X},\mathcal{Y})$ that, besides the same cycles of $\tau$, has $\nu_k$ 0-paths, that is $\textup{s}_k(\tau'') = s + \frac{\nu_k}{2}$.
Therefore, $\tau''$ also has the same correspondence to the $\satvar$-SAT solution of $(\mathcal{X},\mathcal{Y})$.
Since circular and linear genomes are special cases of mixed genomes, the result also holds for mixed genomes.
\end{proof}

%So far we considered inputs composed either of circular or of linear genomes.
%However, as already mentioned, a genome can also be mixed, composed of a combination of linear and circular chromosomes.
%Note that circular and linear genomes are special cases of mixed genomes. Consequently, our theorems can be generalized to mixed genomes:

%\begin{corollary}
%Computing the $\sigma_k$ double distance of a pair of $\cogsd$-cognate mixed genomes $\mathbb{S}$ and $\mathbb{D}$ is NP-complete for any finite even $k \geq 8$.
%\end{corollary}

\section{NP-completeness of \boldmath$\sigma_\infty$ disambiguation and of \boldmath$\sigma_\infty$ double distance}

Regarding (DCJ) $\sigma_\infty$ double distance, NP-hardness was proven by Tannier \textit{et al.}~\cite{tannier2009multichromosomal}, but they only considered circular genomes (that are a particular case and therefore cover mixed genomes as well).
First we remark that the problem is in NP since a solution for it can be easily verified in polynomial time because a certificate is a collection of cycles, of which one has to determine the cardinality, and for linear genomes also of paths, of which one has to count how many have even length.
Furthermore, the NP-completeness of $\sigma_\infty$ double distance (Theorem~4 in~\cite{tannier2009multichromosomal}) can be generalized to linear genomes, using the same strategy of the previous section. Therefore it can be completed as follows:

\begin{theorem}\label{thm:NP-DCJ-lin}
Given a pair of $\cogsd$-cognate linear genomes $\mathbb{S}$ and $\mathbb{D}$, computing the (DCJ) $\sigma_\infty$ double distance or the $\sigma_\infty$ disambiguation of $\mathbb{S}$ and $\mathbb{D}$ is NP-complete.
\end{theorem}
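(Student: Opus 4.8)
The plan is to treat membership in NP and NP-hardness separately, recycling the circular-to-linear device of the previous section on top of the known circular hardness of Tannier \emph{et al.} Membership is immediate and already recorded above: a solution $\tau$ is a resolution of all squares inducing a simple breakpoint graph $BG(\tau,\check{\mathbb{D}})$, whose cycles and even paths can be enumerated in polynomial time, so the value $\textup{s}_\infty(\tau)=c+\frac{p_{\textsc{e}}}{2}$ is efficiently verifiable. The substantive part is therefore the reduction, and I would obtain it by adapting Construction~\ref{const:3satAmbLin}.

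Concretely, I would start from the reduction underlying Theorem~4 of~\cite{tannier2009multichromosomal}, which on a suitable NP-hard input produces a pair of $\cogsd$-cognate \emph{circular} genomes whose $\sigma_\infty$ double distance encodes the answer; via the established equivalence between $\sigma_\infty$ disambiguation and $\sigma_\infty$ double distance~\cite{braga2024doubleDistance}, this is the same as producing an ambiguous breakpoint graph $G$ of circular genomes whose $\infty$-maximum solution encodes the answer. Let $\nu$ be the number of vertices of $G$; since $G$ has four vertices per square, $\nu$ is a multiple of $4$. I would then form $G^{+\nu}$ by adding $\nu$ isolated vertices, exactly as in Construction~\ref{const:3satAmbLin}. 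The first claim to establish is the analogue of Proposition~\ref{prop:ABG-lin}: writing $\nu=4q$ and taking $n_\ast=2q$ genes, the $\nu$ isolated vertices can be consistently labelled as the paralogous gene-tail extremities, which are simultaneously $\mathbb{S}$-telomeres and $\check{\mathbb{D}}$-telomeres, so that $G^{+\nu}=ABG(\mathbb{S}',\check{\mathbb{D}}')$ for a pair of $\cogsd$-cognate \emph{linear} genomes; because every gene tail is then a telomere and every gene head lies in an adjacency of both genomes, no circular chromosome can occur in $\mathbb{S}'$ or $\mathbb{D}'$.

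Next I would carry over the scoring argument of Lemma~\ref{lmm:iff-lin}. Each of the $\nu$ added vertices is isolated in $G^{+\nu}$, hence a $0$-path of \emph{every} solution, independently of how the squares are resolved, and in the $\sigma_\infty$ score each such $0$-path contributes exactly $\frac{1}{2}$. Consequently the value of $\textup{s}_\infty$ on $G^{+\nu}$ equals its value on $G$ plus the fixed additive constant $\frac{\nu}{2}$, so the $\infty$-maximum solutions of $G^{+\nu}$ are in value-preserving correspondence with those of $G$ up to the shift $\frac{\nu}{2}$. The decision threshold that marks a YES-instance in Tannier's circular construction is therefore merely translated by $\frac{\nu}{2}$, and the equivalence with the double-distance formulation transfers the hardness to $\sigma_\infty$ double distance of linear genomes. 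The reduction is polynomial since $\nu$ is polynomial in the size of the circular construction.

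The step carrying the real content is the correspondence-preservation argument, but here it is genuinely \emph{easier} than in the finite-$k$ setting: for $k<\infty$ one had to forbid short cycles and extend edges (Fig.~\ref{fig:new-gadgets}) so that $\textup{s}_k$ would count only the intended cycles, whereas for $k=\infty$ all cycles and all even paths already contribute, so no length control is needed. The only point requiring care is confirming that injecting isolated vertices creates no even paths or cycles beyond the $\nu$ intended $0$-paths; this is clear, since a degree-$0$ vertex participates in no square resolution and no connecting edge, leaving the cycle and path structure of the remainder of $G^{+\nu}$ identical to that of $G$.
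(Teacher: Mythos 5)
Your proposal is correct and follows essentially the same route as the paper: it invokes the circular-genome hardness of Tannier \emph{et al.}, pads the resulting ambiguous breakpoint graph with $\nu$ isolated vertices as in Construction~\ref{const:3satAmbLin}, applies the analogue of Proposition~\ref{prop:ABG-lin} to realize the padded graph as $ABG(\mathbb{S},\check{\mathbb{D}})$ for linear genomes, and observes that the $\nu$ forced $0$-paths shift every solution's score by exactly $\frac{\nu}{2}$, preserving the correspondence to the original instance. Your explicit remarks on NP membership and on why no length control is needed at $k=\infty$ match the paper's treatment as well.
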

\begin{proof}
In the proof of NP-hardness for circular genomes, Tannier \textit{et al.} provided a reduction from bicolored graph decomposition by creating as an instance of $\sigma_\infty$ disambiguation a graph $H$ that, similarly to $G_8$ (given by Construction~\ref{const:3satAmb}), is composed of orange squares with black connections. This graph $H$ has $\nu=4q$ vertices, where $q$ is the number of vertices in the original bicolored graph, and can be easily modified into a graph $H^{+\nu}$ by adding $\nu$ isolated vertices to $H$ (analogously to Construction~\ref{const:3satAmbLin}).
Note that Proposition~\ref{prop:ABG-lin} also applies to $H^{+\nu}$, therefore it is easy to obtain a pair of $\cogsd$-cognate linear genomes $\mathbb{S}$ and $\mathbb{D}$ and a genome $\check{\mathbb{D}} \in \mathfrak{S}^\mathtt{a}_\mathtt{b}(\mathbb{D})$, such that  $H^{+\nu} = ABG(\mathbb{S},\check{\mathbb{D}})$.
The adaptation of the reduction given by Tannier \textit{et al.} is then similar to that we did for the bounded $k$: again the only difference is that the score of $\sigma_\infty$ disambiguation increases by $\frac{\nu}{2}$, but the relation to the instance of bicolored graph decomposition remains the same.
%Linear and circular genomes are particular cases of mixed genomes, therefore the result holds for mixed genomes as well.
\end{proof}

\section{Summary and Conclusion}

The $\sigma_k$ distance between two canonical genomes, denoted by $\textup{d}_k$ for any even $k \geq 2$, is a family of measures that resides between the well known breakpoint ($\textup{d}_2$) and DCJ ($\textup{d}_\infty$) distances, and can be easily derived from the breakpoint graph of the input genomes.
This family is the basic unit for the $\sigma_k$ double distance, a family of optimization problems that minimize the $\sigma_k$ distance between the possible canonizations of pairs of duplicated genomes obtained from a $\cogsd$-cognate pair. 

Previously it was known that the $\sigma_k$ double distance could be solved in linear time for $k \in \{2,4,6\}$, and this result applies to circular, linear and mixed genomes~\cite{braga2024doubleDistance,tannier2009multichromosomal}.
It was also known that computing the $\sigma_\infty$ (DCJ) double distance was NP-hard for circular and mixed genomes~\cite{tannier2009multichromosomal}.
With the results presented here we closed the complexity gap of the $\sigma_k$ double distance of circular, linear and mixed genomes, by showing that it is NP-complete for any even finite $k \geq 8$ and for unbounded $k$.

An additional observation can be made concerning the related \emph{maximum independent set} (MIS) problem, that was the basis for the recently proposed linear time algorithms for the double distance under $\textup{d}_4$ and $\textup{d}_6$.
These algorithms consist of deriving from the ambiguous breakpoint graph an \emph{intersection graph} of the candidate cycles; and for $\textup{d}_4$ and $\textup{d}_6$, the obtained intersection graphs are particular linear-time solvable instances of MIS~\cite{braga2024doubleDistance}.
For any finite $k$, the intersection graph can be built in polynomial time. Since $\sigma_k$ disambiguation is NP-hard for $k\geq 8$, determining MIS on the intersection graph of the candidate cycles of length up to $k \geq 8$ is also NP-hard.

A wide area of research that is still open is determining how the picture looks for related problem families like the $\sigma_k$ median and the $\sigma_k$ guided halving problems that are also based on $\textup{d}_k$ and show a similar phase transition from polynomial ($\sigma_2$) to NP-hard ($\sigma_\infty$)~\cite{silva2023algorithms,tannier2009multichromosomal}.

%\section*{Acknowledgements}
% We thank the anonymous reviewers for giving several hints how to improve the presentation of the material.

\bibliographystyle{splncs04}
\bibliography{arxiv}

\vfill

\end{document}